\newcommand{\bu}{\mbox{$\bm{u}$}}
\newcommand{\bv}{\mbox{$\bm{v}$}}
\newcommand{\bw}{\mbox{$\bm{w}$}}
\newcommand{\bx}{\mbox{$\bm{x}$}}
\newcommand{\by}{\mbox{$\bm{y}$}}
\newcommand{\bz}{\mbox{$\bm{z}$}}
\newcommand{\bX}{\mbox{$X$}}
\newcommand{\bY}{\mbox{$Y$}}
\newcommand{\bl}{\mbox{$\Lambda$}}
\newcommand{\bt}{\mbox{$\Theta$}}
\newcommand{\bs}{\mbox{$\Sigma$}}
\newcommand{\tls}{\mbox{$\tilde{\Lambda}$}}
\newcommand{\tts}{\mbox{$\tilde{\Theta}$}}
\newcommand{\bphi}{\mbox{$\bm{\phi}$}}
\newcommand{\bxi}{\mbox{$\bm{\xi}$}}
\newcommand{\bdd}{\mbox{$\bm{d}$}}
\newcommand{\bcc}{\mbox{$\bm{c}$}}
\newcommand{\sx}{\mbox{$S_{xx}$}}
\newcommand{\sy}{\mbox{$S_{yy}$}}
\newcommand{\sxy}{\mbox{$S_{xy}$}}
\newcommand{\sxk}{\mbox{$S_{xx,k}$}}
\newcommand{\syk}{\mbox{$S_{yy,k}$}}
\newcommand{\sxyk}{\mbox{$S_{xy,k}$}}
\newcommand{\sxkk}{\mbox{$S_{xx}^{(-k)}$}}
\newcommand{\sykk}{\mbox{$S_{yy}^{(-k)}$}}
\newcommand{\sxykk}{\mbox{$S_{xy}^{(-k)}$}}
\newcommand{\hll}{\mbox{$\hat{\Lambda}$}}
\newcommand{\htt}{\mbox{$\hat{\Theta}$}}
\newcommand{\hlk}{\mbox{$\hat{\Lambda}^{(-k)}$}}
\newcommand{\htk}{\mbox{$\hat{\Theta}^{(-k)}$}}
\newcommand{\tr}{\mbox{$\text{tr}$}}
\newcommand{\trans}{\mbox{$^T$}}
\newcommand{\inv}{\mbox{$^{-1}$}}
\newcommand\norm[1]{\left\lVert#1\right\rVert}
\newcommand{\vecc}{\mbox{$\mathrm{vec}$}}
\newcommand{\vect}{\mbox{$\text{vec}$}}
\newcommand{\cc}{\mbox{$C_{\sigma}$}}
\newcommand{\ccs}{\mbox{$C^2_{\sigma}$}}
\newcommand{\cx}{\mbox{$C_{X}$}}
\newcommand{\intx}{\mbox{$\int_{\mathcal{X}}$}}
\newcommand{\etah}{\mbox{$\hat{\eta}$}}
\newcommand{\zetah}{\mbox{$\hat{\zeta}$}}
\newcommand{\etat}{\mbox{$\tilde{\eta}$}}
\newcommand{\zetat}{\mbox{$\tilde{\zeta}$}}
\newcommand{\etau}{\mbox{$\eta_u$}}
\newcommand{\zetau}{\mbox{$\zeta_u$}}
\newcommand{\rhox}{\mbox{$\rho(\bx)$}}
\newcommand{\dx}{\mbox{$d\bx$}}
\newcommand{\vertiii}[1]{{\left\vert\kern-0.25ex\left\vert\kern-0.25ex\left\vert #1 
		\right\vert\kern-0.25ex\right\vert\kern-0.25ex\right\vert}}
\newcommand*{\rom}[1]{\expandafter\@slowromancap\romannumeral #1@}
\DeclareMathOperator*{\argminA}{arg\,min}
\DeclareMathOperator*{\argmaxA}{arg\,max}
\newcommand{\redline}{\raisebox{2pt}{\tikz{\draw[-,red,solid,line width = 1pt](0,0) -- (5mm,0);}}}
\newcommand{\blueline}{\raisebox{2pt}{\tikz{\draw[-,blue,dash dot,line width = 1pt](0,0) -- (5mm,0);}}}
\newcommand\reallywidehat[1]{%
	\savestack{\tmpbox}{\stretchto{%
			\scaleto{%
				\scalerel*[\widthof{\ensuremath{#1}}]{\kern-.6pt\bigwedge\kern-.6pt}%
				{\rule[-\textheight/2]{1ex}{\textheight}}
			}{\textheight}%
		}{0.5ex}}%
	\stackon[1pt]{#1}{\tmpbox}%
}
\newtheorem{theorem}{Theorem}
\newtheorem{assumption}{Assumption}
\newtheorem{lemma}{Lemma}
\newtheorem{corollary}{Corollary}
\begin{document}
	
	\title{Combining Smoothing Spline with Conditional Gaussian Graphical Model for 
		Density and Graph Estimation}
	\author{Runfei Luo, Anna Liu and Yuedong Wang
		\thanks{Runfei Luo (email: rluo@pstat.ucsb.edu) received her Ph.D. in statistics from University of California, Santa Barbara. She is now an applied scientist in Amazon Web Services.
			{Anna Liu (email: anna@math.umass.edu) is Associate Professor, Department of
			Mathematics and Statistics, University of Massachusetts, Amherst, Massachusetts 01002.},
			Yuedong Wang (email: yuedong@pstat.ucsb.edu) is Professor, Department 
			of Statistics and Applied Probability, University of California, 
			Santa Barbara, California 93106. 
			Anna Liu's research was supported by a grant from the National Science
			Foundation (DMS-1507078). 
			Runfei Luo and Yuedong Wang's research was supported by a grant from the National Science
			Foundation (DMS-1507620).
			We acknowledge support from the Center for Scientific Computing from 
			the CNSI, MRL: an NSF MRSEC (DMR-1720256) for their support.
			Address for correspondence: Yuedong Wang, 
			Department of Statistics and Applied Probability, University of 
			California, Santa Barbara, California 93106.}}
	
	\maketitle
	
	\begin{abstract}
		
		Multivariate density estimation and graphical models play important roles 
		in statistical learning. The estimated density can be used to construct a 
		graphical model that reveals conditional relationships whereas a graphical 
		structure can be used to build models for density estimation. Our goal is 
		to construct a consolidated framework that can perform both density and 
		graph estimation. 
		Denote $\bm{Z}$ as the random vector of interest with density function 
		$f(\bz)$. Splitting $\bm{Z}$ into two parts, $\bm{Z}=(\bm{X}^T,\bm{Y}^T)^T$ 
		and writing $f(\bz)=f(\bx)f(\by|\bx)$ where $f(\bx)$ is the density function 
		of $\bm{X}$ and $f(\by|\bx)$ is the conditional density of $\bm{Y}|\bm{X}=\bx$. 
		We propose a semiparametric framework that models $f(\bx)$ nonparametrically 
		using a smoothing spline ANOVA (SS ANOVA) model and $f(\by|\bx)$ parametrically 
		using a conditional Gaussian graphical model (cGGM). Combining flexibility of 
		the SS ANOVA model with succinctness of the cGGM, this framework allows us to 
		deal with high-dimensional data without assuming a joint Gaussian distribution.  
		We propose a backfitting estimation procedure for the cGGM with 
		a computationally efficient approach for selection of tuning parameters. We also 
		develop a geometric inference approach for edge selection. We establish 
		asymptotic convergence properties for both the parameter and density estimation.
		The performance of the proposed method is evaluated through 
		extensive simulation studies and two real data applications.
		
		KEY WORDS: cross-validation, high dimensional data, penalized likelihood, 
		reproducing kernel Hilbert space, smoothing spline ANOVA
		
	\end{abstract}
	
	\section{Introduction}
	
	Density estimation has long been a subject of paramount interest in statistics.
	Many parametric, nonparametric, and semiparametric methods have been developed
	in the literature. 
	Assuming a known distribution family with succinct representation and 
	interpretable parameters, the parametric approach is in general statistically and 
	computationally efficient \cite{Kendall1987}. 
	However, the parametric assumption may be too 
	restrictive for some applications. The nonparametric approach, on the other hand, 
	does not assume a specific form for the density function and allows its shape 
	to be decided by data. Methods such as kernel estimation 
	\cite{parzen1962estimation,silverman2018density}, local likelihood estimators 
	\cite{loader1996local}, and smoothing splines 
	\cite{gu2013smoothing} work well for low dimensional multivariate density 
	functions. When the dimension is moderate to large, existing nonparametric methods 
	break down quickly due to the curse of dimensionality and/or computationally 
	limitations. \citeasnoun{duong2007ks} pointed out that the kernel density estimation is 
	not applicable to random variables of dimension higher than six.
	To reduce the computational burden, \citeasnoun{jeon2006effective} and 
	\citeasnoun{gu2013nonparametric} developed pseudo likelihood method for smoothing 
	spline density estimation. However, our experience indicates that the computation 
	become almost infeasible when the dimension is higher than twelve.
	Consequently, contrary to the univariate case, flexible methods for multivariate 
	density estimation are rather limited when the dimension is large.
	Recent work using piecewise constant and Bayesian partitions represents a
	major breakthrough in this area 
	\cite{lu2013multivariate,liu2014multivariate,li2016density}. 
	Nevertheless, these methods can handle moderate 
	dimensions only, lead to non-smooth density estimates, and cannot be used to 
	investigate the conditional relationship. 
	
	Some semiparametric methods have been proposed to take advantage of the parsimony
	of parametric models and the flexibility of nonparametric modeling. 
	Semiparametric copula models consist of nonparametric marginal distributions and 
	parametric copula functions \cite{genest1995}. 
	Projection pursuit density estimation overcomes the curse of dimensionality by 
	representing the joint density as a product of some smooth univariate functions of 
	carefully selected linear combinations of variables \cite{friedman1984projection}.
	The regularized derivative expectation operator (rodeo) method assumes the joint density 
	equals a product of a parametric component and a nonparametric function of 
	an unknown subset of variables \cite{liu2007sparse}.
	Other semiparametric/nonparametric methods for density estimation 
	include mixture models \cite{richardson1997bayesian}, forest density 
	\cite{liu2011forest}, density tree \cite{Ram11}, and
	geometric density estimation \cite{Dunson16}.
	All existing semiparametric/nonparametric methods have strengths and limitations.
	We will develop a new semiparametric procedure for multivariate density 
	estimation that explores the sparse graph structure in the parametric part of the model. 
	
	Graphical models are used to characterize conditional relationship between variables 
	with a wide range of applications in natural sciences,
	social sciences, and economics 
	\cite{lauritzen1996graphical,fan2016overview,friedman2008sparse}.
	Gaussian graphical model (GGM) is one of the most popular models where
	conditional independence is reflected in the zero entries of the precision matrix
	\cite{friedman2008sparse}. The resulting structure from a GGM can be erroneous when 
	the true distribution is far from Gaussian.
	The dependence structure of non-Gaussian data has not received great attention 
	until recent years. Robustified Gaussian and elliptical graphical models against 
	possible outliers were studied by \citeasnoun{miyamura2006robust}, 
	\citeasnoun{finegold2011robust}, \citeasnoun{vogel2011elliptical}, and 
	\citeasnoun{sun2012robust}. Graphical models based on generalized linear models
	were proposed by \citeasnoun{lee2007efficient}, \citeasnoun{hofling2009estimation}, 
	\citeasnoun{ravikumar2010high}, \citeasnoun{allen2012log}, and 
	\citeasnoun{yang2012graphical}.
	Nonparametric and semiparametric approaches have also been considered. 
	\citeasnoun{jeon2006effective} and \citeasnoun{gu2013nonparametric} applied SS ANOVA 
	dendity models to estimate graphs (see Section \ref{sec:graphestimation} 
	for details). The computation 
	of this nonparametric approach becomes prohibitive for large dimensions. 
	\citeasnoun{liu2009nonparanormal}, \citeasnoun{liu2012high}, and 
	\citeasnoun{xue2012regularized} developed an elegant nonparanormal model which 
	assumes that there exists a monotone transformation to each variable such that the 
	joint distribution after transformation is multivariate Gaussian. Then any 
	established estimation methods for the GGM can be applied to the transformed variables.
	Other semiparametric/nonparametric methods include graphical random forests 
	\cite{fellinghauer2013stable}, regularized score matching \cite{lin2018methods}, 
	and kernel partial correlation \cite{oh2017graphical}.

	The goal of this article is to build a semiparametric model that combines 
	the GGM with the SS ANOVA density model. We are interested in both density and 
	graph estimation. 
	The remainder of the article is organized as follows. In Section
	2 we introduce the semiparametric density model and methods for estimation 
	and computation. We propose methods for graph estimation in Section 3. 
	Sections 4 presents theoretical properties of our methods in term of 
	both density and graph estimation.  
	In Section 5 we evaluate our method using simulation studies. 
	In Section 6 we present applications to two real datasets.
	Some technical details are gathered in the Appendix.

	\section{Density Estimation with SS ANOVA and cGGM}
	\label{sec:densityestimation}

	\subsection{Semiparametric Density Models with SS ANOVA and cGGM}
	\label{sec:2.1}
	
	Consider the density estimation problem in which we are given a random 
	sample of a random vector $\bm{Z}$, and we wish to estimate the density 
	function $f(\bm{z})$ of $\bm{Z}$. Let $\bm{Z} = (\bm{X}\trans,\bm{Y}\trans)\trans$ 
	where $\bm{X}=(X_1, \cdots, X_d)^T$ is a $d$-dimensional random vector for which the density 
	function will be modeled nonparametrically and $\bm{Y}=(Y_1,\cdots, Y_p)^T\in\mathbb{R}^p$ 
	collects elements for which the conditional density will be modeled 
	parametrically. The joint density function $f(\bz)$ can be decomposed 
	into two components:
	\begin{equation}
	f(\bz) = f(\bx,\by) = f(\bx)f(\by|\bx).
	\label{eq:2.1.1}
	\end{equation}
	We will model $f(\bx)$ and $f(\by|\bx)$ using SS ANOVA models and 
	cGGMs respectively. We now provide details of these models. 
	
	Assume $\bm{X} \in \mathcal{X}=\mathcal{X}_{1}\times\dots\times\mathcal{X}_{d}$ where
	 $X_u \in \mathcal{X}_{u}$ which is an arbitrary set. 
	To deal with the positivity and unity constraints of a density function, 
	we consider the logistic transform $f = e^\eta / \intx e^\eta d\bx$ 
	where $\eta(\bx)$ is referred to as the logistic density function 
	\cite{gu2013smoothing}.
	We construct a model space for $\eta$ using the tensor product of reproducing 
	kernel Hilbert spaces (RKHS). The SS ANOVA decomposition of functions 
	in the tensor product RKHS can be represented as 
	\begin{equation}
	\eta(\bx) = c + \sum_{k=1}^{d}\eta_k(x_k) + \sum_{k>l}\eta_{kl}(x_k, x_l) + \dots
	+\eta_{1\ldots d}(x_1,\cdots,x_d),
	\label{eq:ssanovadecomp}
	\end{equation}
	where $\eta_k$'s are main effects, $\eta_{kl}$'s are two-way interactions, 
	and the rest are higher order interactions involving more than two variables. 
	Higher order interactions are often removed in (\ref{eq:ssanovadecomp}) for more 
	tractable estimation and inference. 
	An SS ANOVA model for the logistic density function assumes that $\eta$ 
	belongs to an RKHS which contains a subset of components in the SS ANOVA 
	decomposition (\ref{eq:ssanovadecomp}). For a given SS ANOVA model, terms included 
	in the model can be regrouped and the model space can be expressed as
	\begin{equation}
	\mathcal{H} = \mathcal{H}^0\oplus\mathcal{H}^1\oplus\cdots\oplus\mathcal{H}^w,
	\label{eq:ssanova}
	\end{equation}
	where $\mathcal{H}^0$ is a finite dimensional space collecting all functions 
	that are not going to be penalized, and $\mathcal{H}^1,\dots,\mathcal{H}^w$ 
	are orthogonal RKHS's with reproducing kernels (RK) $R^v$ for $v=1\dots,w$.
	Details about the SS ANOVA model can be found in \citeasnoun{gu2013smoothing} 
	and \citeasnoun{wang2011smoothing}.
	
	We assume a cGGM for $f(\by|\bx)$. Specifically, we assume that 
	$\bm{Y} | \bm{X}=\bx \sim \text{N}(-\Lambda\inv\Theta\trans\bx,\Lambda\inv)$
	where $\Lambda$ is a $p \times p$ precision matrix and 
	$\Theta$ is a $d\times p$ matrix that parameterizes the conditional 
	relationship between $\bm{X}$ and $\bm{Y}$ 
	\cite{sohn2012joint,wytock2013sparse,yuan2014partial}. We note that 
	the negative log likelihood function is convex under this parameterization.
	An alternative assumption 
	$\bm{Y} | \bm{X}=\bx \sim \text{N}(\Psi\bx, \Lambda\inv)$ \cite{yin2011sparse}
	may be used to model the conditional density $f(\by|\bx)$ where 
	the negative log likelihood function is biconvex in $\Psi$ and $\Lambda$ 
	rather than jointly convex. 
	
	We will refer to the proposed semiparametric model as combined smoothing spline 
	and conditional Gaussian graphical (cSScGG) model. The cSScGG model
	is closely related to the semiparametric 
	kernel density estimation (SKDE) proposed by \citeasnoun{hoti2004semiparametric}. 
	The same decomposition in (\ref{eq:2.1.1}) was considered. Given an iid sample 
	$\bm{Z}_i = (\bm{X}_i\trans,\bm{Y}_i\trans)\trans$, $i=1,\dots,n$, 
	\citeasnoun{hoti2004semiparametric} estimated $f(\bx)$ using the kernel density, 
	$\hat{f}(\bx) = n^{-1}\sum_{i=1}^{n}K_{h_1}(\bx - \bm{X}_i)$, and
	$f(\by|\bx)$ using the conditional Gaussian density with mean $\mu(\bx)$ and 
	covariance $\Sigma(\bx)$. Specifically, they estimated $\mu(\bx)$ and 
	covariance $\Sigma(\bx)$ by 
	$\hat{\mu}(\bx) = \sum_{i=1}^nW_{h_2}(\bx-\bm{X}_i)\bm{Y}_i$ and 
	$\hat{\Sigma}(\bx) = \sum_{i=1}^{n}W_{h_3}(\bx-\bm{X}_i)
	(\bm{Y}_i - \hat{\mu}(\bx))(\bm{Y}_i - \hat{\mu}(\bx))\trans$ respectively,
	where 
	$K_{h}(\bx) = h^{-d}K(\bx/h)$, 
	$K$ is the symmetric Gaussian kernel function, 
	$W_h(\bm{x}-\bm{X}_i) = K_h(\bx-\bm{X}_i)/\sum_{j=1}^{n}K_h(\bx-\bm{X}_i)$,
	and $h_1$, $h_2$, and $h_3$ are bandwidths. Selection of bandwidths can be 
	difficult and the estimation of conditional mean and covariance can be poor
	when the dimension of $\bm{Y}$ is large. 
	The authors focused on the classification problem. They set 
	$W_{h_2}(\bx-\bm{X}_i)=W_{h_3}(\bx-\bm{X}_i)=1/n$ in their simulations to 
	make the computation feasible. Under these weights the estimated conditional 
	density $f(\bm{y}|\bm{x})$ does not depend on $\bm{x}$ at all. In contrast, 
	we model $f(\bm{y}|\bm{x})$ using a cGGM which will allow us to explore 
	sparsity in the conditional dependence structure. In addition, the domain $\mathcal{X}$ in our model is an arbitrary set while the domain in the SKDE method is a subset of $\mathbb{R}^d$. While we focus on continuous $\bm{X}$ in this paper, the discrete case is a natural extension of the current work. 	
	
	\subsection{Penalized Likelihood Estimation}
	\label{sec:2.2}
	
	A cSScGG model consists of three parameters: $\eta \in \mathcal{H}$ and
	matrices $\Lambda$ and $\Theta$ where $\mathcal{H}$ is an RKHS given in 
	\eqref{eq:ssanova}
	and $\Lambda$ is positive definite. 
	Given an iid sample 
	$\bm{Z}_i = (\bm{X}_i\trans,\bm{Y}_i\trans)\trans$, $i=1,\dots,n$,
	let $X=(\bm{X}_1, \dots, \bm{X}_n)\trans$,  
	$Y=(\bm{Y}_1, \dots, \bm{Y}_n)\trans$, 
	$\sx = n^{-1}\bX\trans\bX$, 
	$\sy = n^{-1}\bY\trans\bY$,
	and $\sxy = n^{-1}\bX\trans\bY$. Denote 
	\begin{eqnarray}
	l_1(\eta)&=& \frac{1}{n}\sum_{i=1}^{n}e^{-\eta(\bm{X}_i)} + 
	\int_\mathcal{X}\eta(\bx)\rho(\bx)d\bx, \label{eq:l1} \\
	l_2(\bt,\bl)&=& -\log|\bl| + \tr(\sy\bl + 2S_{xy}^{T}\bt + 
	\bl\inv\bt\trans S_{xx}\trans\bt) \label{eq:l2}
	\end{eqnarray}
	as the negative log pseudo likelihood and negative log likelihood functions based on 
	$\bm{X}$ and $\bm{Y}$ samples respectively, where some constants are ignored and
	$\rho$ is a known density for the pseudo likelihood \cite{gu2013smoothing}.
	The function $l_1(\eta)$ is
	continuous, convex and Fr\'{e}chet differentiable \cite{jeon2006effective}, and 
	the function $l_2(\bt,\bl)$ is jointly convex in $\bl$ and $\bt$. 
	
	We estimate $\eta$, $\Lambda$ and $\Theta$ as minimizers of the 
	penalized likelihood:
	\begin{equation}
	\{\hat{\eta}, \hat{\bl}, \hat{\bt} \} = 
	\argminA_{\eta\in\mathcal{H}, \Lambda \succ 0, \Theta} 
	\left\{ \Big[ l_1(\eta) + \frac{\lambda_1}{2}J(\eta) \Big]
	+ \Big[ l_2(\bl,\bt) + \lambda_2 \norm{\bl}_{1,\text{off}} + 
	\lambda_3 \norm{\bt}_1 \Big ] \right\}, 
	\label{eq:penlik}
	\end{equation} 
	where $J$ is a semi-norm in $\mathcal{H}$ that penalizes departure from the null space 
	$\mathcal{H}^0$,  $\norm{\cdot}_1$ denotes the elementwise $\ell_1$-norm, 
	$\norm{\cdot}_{1,\text{off}}$ denotes the elementwise $\ell_1$-norm on 
	off-diagonal entries, and $\Lambda \succ 0$ indicates positive definiteness of 
	$\Lambda$. Together, $\norm{\bl}_{1,\text{off}}$ and $\norm{\bt}_1$ encourage 
	sparsity for the cGGM. We allow different tuning parameters for different 
	penalties. 
	
	Note that the first part of the penalized likelihood depends on $\eta$ only 
	and the second part depends on $\Theta$ and $\Lambda$ only. Therefore, we can 
	compute the penalized likelihood estimates by solving two optimization problems
	separately:
	\begin{equation}
	\hat{\eta} = \argminA_{\eta\in\mathcal{H}} \left\{ \frac{1}{n}\sum_{i=1}^{n}e^{-\eta(\bm{X}_i)} + \int_\mathcal{X}\eta(\bx)\rho(\bx)d\bx + \frac{\lambda_1}{2}J(\eta)\right\},
	\label{eq:pl1}
	\end{equation}
	and 
	\begin{equation}
	\begin{aligned}
	\{\hat{\bt}, \hat{\bl} \} = &\argminA_{\Lambda \succ 0, \Theta} \left\{ -\log|\bl| + \tr(\sy\bl + 2S_{xy}^{T}\bt + \bl\inv\bt\trans S_{xx}\trans\bt) + \lambda_2 \norm{\bl}_{1,\text{off}} + \lambda_3 \norm{\bt}_1 \right\}. 
	\label{eq:pl2}
	\end{aligned}
	\end{equation}
	
	As in \citeasnoun{gu2013smoothing}, we approximate the solution of \eqref{eq:pl1} 
	by a linear combination of basis functions in $\mathcal{H}^0$ and a random subset
	of representers. Then the estimate $\hat{\eta}$ can be calcuated using the 
	Newton-Raphson algorithm. The smoothing parameter $\lambda_1$ is selected as the
	minimizer of an approximated cross-validation estimate of the Kullback-Leibler (KL) 
	divergence. Details can be found in \citeasnoun{gu2013smoothing}, 
	\citeasnoun{gu2013nonparametric}, and \citeasnoun{Luothesis}. 
	In the next section we propose a new computational method
	for solving \eqref{eq:pl2}.

	\subsection{Backfitting Algorithm for cGGM}
	\label{sec:2.3.2}
	
	
	Instead of updating $\Lambda$ and $\Theta$ simultaneously as in
	\citeasnoun{sohn2012joint}, \citeasnoun{wytock2013sparse} and \citeasnoun{yuan2014partial}, we will consider a 
	backfitting procedure to update them iteratively until convergence. We use the
	subscript $(t)$ to denote quantities calculated at iteration $t$ and $A_{ij}$ 
	to denote the $(i,j)$-th element of a matrix $A$. 
	
	At iteration $t+1$, with $\Lambda$ being fixed at $\bl_{(t)}$, \eqref{eq:pl2} reduces to the 
	minimization of a quadratic function plus an $\ell_1$ penalty. Therefore, 
	without needing to calculate the Hessian matrix, $\Theta$ can be updated 
	efficiently using the coordinate descent algorithm. The gradient
	$\nabla_\Theta l_2(\Lambda,\Theta) = 2S_{xy} + 2S_{xx}\Theta\Lambda^{-1}$.
	Denote $\Sigma=\Lambda^{-1}$ as the covariance matrix.
	Then the $(i,j)$th element $\Theta_{ij}$ is updated by
	\begin{equation}
	\bt_{ij,(t+1)} \leftarrow  S_{\lambda_3 / a_{\Theta}}\Big(c_{\Theta} - \frac{b_{\Theta}}{a_{\Theta}}\Big),
	\label{eq:Thetaupdate}
	\end{equation}
	where $a_{\Theta} = 2\bs_{jj,(t)}(\sx)_{ii}$, 
	$b_{\Theta} = 2(\sxy)_{ij} + 2(\sx\Theta_{(t)}\bs_{(t)})_{ij}$, 
	$c_{\Theta} = \Theta_{ij,(t)}$, and 
	$S_\omega(x) = \text{sign}(x)\max(|x| - \omega, 0)$ is the soft-thresholding operator
	with threshold $\omega$.
	
	To update $\Lambda$ at iteration $t+1$, we consider the approximate
	conditional distribution 
	$\text{N}(-\Lambda^{-1}_{(t)}\Theta\trans_{(t)}\bx,\Lambda^{-1})$
	where both $\Theta$ and $\Lambda$ in the conditional mean are fixed at their 
	estimates from the $t$-th iteration. The resulting negative log likelihood 
	\begin{equation}
	h_{(t)}(\Lambda) = -\log|\Lambda| + 
	\tr\big(\sy\Lambda + 2S_{xy}^{T}\Theta_{(t)}\Lambda^{-1}_{(t)}\Lambda + 
	\Lambda^{-1}_{(t)}\Lambda\Lambda^{-1}_{(t)}\Theta\trans_{(t)}\sx\Theta_{(t)}\big)
	\end{equation}
	where terms independent of $\Lambda$ are dropped. We update $\Lambda$ by
	\begin{equation}
	\Lambda_{(t+1)} = \argminA_{\Lambda \succ 0}\Big\{h_{(t)}(\Lambda)
	+  \lambda_2 \norm{\Lambda}_{1,\text{off}} \Big\} .
	\label{eq:Lambdaupdate}
	\end{equation}
	
	As in \citeasnoun{hsieh2011sparse}, we will find the Newton direction by approximating
	$h_{(t)}$ using a quadratic function. Based on the second-order Taylor expansion of
	$h_{(t)}(\Lambda)$ at $\Lambda_{(t)}$ where $\Lambda = \Lambda_{(t)} + \Delta_{\Lambda}$ and ignoring terms independent of $\Delta_\Lambda$,
	we consider 
	\begin{equation}
	\bar{h}_{(t)}(\Delta_\Lambda) = \vect(\nabla h_{(t)}(\Lambda_{(t)}))\trans\vect(\Delta_\Lambda) + \frac{1}{2}\vect(\Delta_\Lambda)\trans\nabla^2 h_{(t)}(\Lambda_{(t)}) \vect(\Delta_\Lambda),
	\notag
	\end{equation} 
	where 
	$\nabla h_{(t)}(\Lambda_{(t)})=\sy + 
	\bs_{(t)}\Theta_{(t)}^{T}\sx\Theta_{(t)}\bs_{(t)}
	+ 2\bs_{(t)}\Theta_{(t)}^T\sxy - \bs_{(t)}$ 
	and 
	$\nabla^2 h_{(t)}(\Lambda_{(t)})=\bs_{(t)} \otimes \bs_{(t)}$ 
	are gradient and Hessian matrices with respect to $\Lambda$ respectively,
	and $\otimes$ represents the Kronecker product.
	The Newton direction $D_{\Lambda,(t)}$ for \eqref{eq:Lambdaupdate} can  
	be written as the solution of the following regularized quadratic function 
	\cite{hsieh2011sparse}
	\begin{equation}
	D_{\Lambda,(t)} = \argminA_{\Delta_\Lambda}\Big\{\bar{h}_{(t)}(\Delta_\Lambda) + \lambda_{2} \norm{\Lambda_{(t)} + \Delta_\Lambda}_{1,\text{off}}\Big\}.
	\label{eq:2.2.1}
	\end{equation}
	Equation (\ref{eq:2.2.1}) can be solved efficiently 
	via the coordinate descent algorithm. Specifically, let $\Delta_{\Lambda,(0)} = 0$ 
	be the initial value, and $\Delta_{\Lambda,(s)}$ 
	be the update at iteration $s$. Then at iteration $s+1$, the $(i,j)$th element of 
	$\Delta_{\Lambda,(s)}$ is updated by
	\begin{equation}
	(\Delta_\Lambda)_{ij,(s+1)} \leftarrow (\Delta_\Lambda)_{ij,(s)} - 
	c_{\Lambda} + S_{\lambda_2 / a_{\Lambda}}\Big(c_{\Lambda} - 
	\frac{b_{\Lambda}}{a_{\Lambda}}\Big),
	\label{eq:2.2.1.1}
	\end{equation}
	where $a_{\Lambda} = \bs^2_{ij,(t)} + \bs_{ii,(t)}\bs_{jj,(t)}$, 
	$b_{\Lambda} = (\sy)_{ij} + \Big(\bs_{(t)}\Theta_{(t)}^{T}\sx\Theta_{(t)}\bs_{(t)}\Big)_{ij}
	+ 2\Big(\bs_{(t)}\Theta_{(t)}^T\sxy\Big)_{ij} - \bs_{ij,(t)} + (\bs_{(t)}\Delta_{\Lambda,(s)}\bs_{(t)})_{ij}$ 
	and $c_{\Lambda} = \Lambda_{ij,(t)} + (\Delta_{\Lambda})_{ij,(s)}$.
	Denote the penalized objective function at the $t$-th iteration as 
	$p_{(t)}(\Lambda) \triangleq h_{(t)}(\Lambda) + \lambda_2 \norm{\Lambda}_{1,\text{off}}$.
	We adopt the Armijo's rule \cite{armijo1966minimization} to find the step 
	size $\alpha$. Specifically, with a constant decrease rate $0 < \beta < 1$ 
	(typically $\beta = 0.5$), step sizes $\alpha = \beta^k$ for $k\in\mathbb{N}$
	are tried until the smallest $k$ such that 
	\begin{equation}
	p_{(t)}(\Lambda_{(t)}+\alpha D_{\Lambda,(t)}) \leq p_{(t)}(\Lambda_{(t)}) + 
	\alpha\sigma\Big\{ \tr(\nabla h_{(t)}(\Lambda_{(t)})D_{\Lambda,(t)}) + 
	\lambda_2 \norm{\Lambda_{(t)}+ D_{\Lambda,(t)}}_{1,\text{off}} - 
	\lambda_2 \norm{\Lambda_{(t)}}_{1,\text{off}}\Big\},
	\notag
	\end{equation}
	where $0 < \sigma < 0.5$ is the backtracking termination threshold.
	After the step size is calculated, we update 
	$\Lambda_{(t+1)} = \Lambda_{(t)} + \alpha D_{\Lambda,(t)}$. 
	
	When $n > \max(p, d)$, we use the maximum likelihood estimates 
	$\check{\Lambda}=(\sy - S_{xy}^{T} S_{xx}^{-1}\sxy)\inv$ and 
	$\check{\Theta}=-S_{xx}^{-1}\sxy\check{\Lambda}$ of $\bl$ and $\bt$ 
	as initial values for $\bl$ and $\bt$ respectively \cite{yin2011sparse}. 
	In the high dimensional case when $\sx$ is not invertible, we use
	the identity and zero matrix as initial values for $\bl$ and $\bt$ respectively. 
	
	The regularized Newton step (\ref{eq:2.2.1}) via the coordinate descent 
	algorithm described above is the most computational expansive part of the 
	algorithm. Despite its efficiency for lasso type of problems, updating all 
	$p(p+1)/2$ variables in $\Lambda$ is costly. To relieve this problem, 
	we divide the parameter set into an active set and a free set. As in 
	\citeasnoun{hsieh2011sparse} and \citeasnoun{wytock2013sparse}, at the $t$th  
	iteration of the algorithm, we only update $\Theta$ and $\Lambda$ over the 
	active set defined by
	\begin{equation}
	\begin{aligned}
	&\mathcal{S}_\Theta = \{ (i,j): \ |\big(\nabla_\Theta l_2(\Lambda_{(t)},\Theta_{(t)})\big)_{ij}| > \lambda_3 \  \text{or} \  \Theta_{{ij},(t)} \neq 0 \},\\
	&\mathcal{S}_\Lambda = \{ (i,j): \ |\big(\nabla h_{(t)}(\Lambda_{(t)})\big)_{ij}| > \lambda_2\  \text{or} \  \Lambda_{{ij},(t)} \neq 0 \}.
	\label{eq:2.2.6}
	\end{aligned}
	\end{equation}
	As the active set is relatively small due to sparsity induced by the 
	$\ell_1$ regularization, this strategy provides a substantial speedup.

	Tuning parameters $\lambda_2$ and $\lambda_3$ determine the sparsity of 
	$\Lambda$ and $\Theta$. As a general selection tool, leave-one-out or 
	$k$-fold cross-validation can be used to select these tuning parameters. 
	The leave-one-out cross-validation (LOOCV) can be computationally intensive 
	and various approximations have been proposed in the literature. 
	\citeasnoun{lian2011shrinkage} and 
	\citeasnoun{vujavcic2015computationally} derived generalized approximate 
	cross-validation (GACV) scores for selecting a single tuning parameter in 
	the GGM. The BIC and k-fold CV have been used to select a single tuning 
	parameter in the cGGM 
	\cite{yin2011sparse,sohn2012joint,wytock2013sparse,yuan2014partial,lee2012simultaneous}. 
	LOOCV has not been used for the cGGM as it requires fitting the model $n$ 
	times which is computationally intensive. To the best of our knowledge, there are no 
	computationally efficient alternatives to LOOCV in the current cGGM literature.
	We propose a new criterion, Leave-One-Out KL (LOOKL),
	for selecting $\lambda_2$ and $\lambda_3$ involved in \eqref{eq:pl2}
	as minimizers of 
	
	
	\begin{eqnarray}
	&&\text{LOOKL}(\lambda_2, \lambda_3) \nonumber \\
	&=& -\frac{1}{n}l_2(\hll, \htt) + \frac{1}{2n}
	\sum_{k=1}^{n}\Big\{ \bu_k^T (-C + B^T A\inv B)\inv 
	\big [ (-E + B^T A\inv D)  \bv_{xx,k}
	+ 2B^T A\inv \bv_{xy,k} - \bv_{yy,k} \big ] \nonumber \\ 
	&& + \ \bw_k^T (-A+BC\inv B^T)\inv 
	\big [ (D-BC\inv E) \bv_{xx,k} + BC\inv \bv_{yy,k}
	-2 \bv_{xy,k} \big ] \Big\},
	\label{LOOKL}
	\end{eqnarray}
	where $\sxk = \bm{X}_k^T \bm{X}_k$, $\syk = \bm{Y}_k^T \bm{Y}_k$, 
	$\sxyk = \bm{Y}_k^T \bm{X}_k$, 
	$\sxkk = 1/n\sum_{i\neq k}S_{xx,i}$, $\sykk = 1/n\sum_{i\neq k}S_{yy,i}$, $\sxykk = 1/n\sum_{i\neq k}S_{xy,i}$,
	$\bu_k=\vecc(\hat{\Lambda}\inv - \syk + \hat{\Lambda}\inv\hat{\Theta}^T\sxk\htt\hat{\Lambda}\inv)$,
	$\bv_{xx,k}=\vecc(\sxkk - \sx)$,
	$\bv_{yy,k}=\vecc(\sykk-\sy)$,
	$\bv_{xy,k}=\vecc(\sxykk - \sxy)$,
	$\bw_k=\vecc(-2\sxyk - 2\sxk\hat{\Theta}\hll\inv)$,
	$A= -2 \hll\inv\otimes\sx$, 
	$B= 2\hll\inv\otimes\sx\htt\hll\inv$,
	$C=-\hll\inv\otimes(\hll\inv + 2\hll\inv\hat{\Theta}^T\sx\htt\hll\inv)$,
	$D= -2 \hll\inv\hat{\Theta}^T\otimes I_{d\times d}$, and
	$E = \hll\inv\hat{\Theta}^T\otimes\hll\inv\hat{\Theta}^T$.
	The derivation is defered to \ref{appendix:LOOKL}. Note that the GACV in \citeasnoun{lian2011shrinkage} and KLCV in \citeasnoun{vujavcic2015computationally} are special cases of LOOKL with $\Theta = 0$. In the penalized case, we ignored the partial derivatives corresponding to the zero elements in $\Theta$ and $\Lambda$ \citeasnoun{lian2011shrinkage}, and showed that the LOOKL score remains the same. More details can be found in \citeasnoun{Luothesis}. Therefore, we conjecture that the proposed score is more appropriate for density estimation, rather than model selection.
	
	We note the proposed backfitting procedure and LOOKL method 
	for selecting tuning parameters are new for the cGMM.
	When $\Theta$ and $\Lambda$ are simultaneously updated using the second-order 
	Taylor expansion over all parameters \cite{wytock2013sparse}, an expensive 
	computation of the large Hessian matrix of size $(p+d)\times (p+d)$ is 
	required in each iteration. In contrast, our approach forms a second-order 
	approximation of a function of $\Lambda$ which requires a Hessian matrix of 
	size $p\times p$. The remaining set of parameters in $\Theta$ 
	can be updated easily using the simple coordinate descent algorithm. Moreover, 
	compared to the method in \citeasnoun{mccarter2016large}, our backfitting 
	algorithm eliminates the need for computing the large matrix 
	$\Sigma\Theta\trans\sx\Theta\Sigma$ in $\mathcal{O}(npd + np^2)$ time. Note 
	that we always require $\Lambda$ to be positive-definite after each iteration, 
	so the algorithm still has complexity $\mathcal{O}(p^3)$ flops due to the 
	Cholesky factorization.

	Some off-the-shelf packages are utilized to solve the optimization problem. 
	Specifically, we use \texttt{QUIC} \cite{hsieh2014quic} for updating $\Lambda$, 
	and \texttt{gss} \cite{gu2014smoothing} for computing the smoothing spline 
	estimate of $f(\bx)$. We write R code for updating $\Theta$ using \eqref{eq:Thetaupdate}.
	We note that other penalities such as the smoothly clipped absolute deviation (SCAD)
	\cite{fan2001variable} and adaptive lasso \cite{zou2006adaptive} may be used to replace
	the $\ell_1$ penalty in the estimation of $\Theta$ and $\Lambda$. Details can be found 
	in \citeasnoun{Luothesis}.

	\section{Graph Estimation with cSScGG Models}
	\label{sec:graphestimation}
	
	In Section \ref{sec:densityestimation} we proposed the cSScGG model as a 
	flexible framework for estimating the multivariate density in 
	high-dimensional setting. In terms of the graph structure, the edges 
	among $\bm{Y}$ are identified by $\hat{\Lambda}$, and edges between 
	$\bm{X}$ and $\bm{Y}$ are identified by $\hat{\Theta}$ 
	\cite{sohn2012joint,wytock2013sparse,yuan2014partial}. The remaining 
	task is the identification of conditional independence within $\bm{X}$ 
	variables which is the target of this section.  
	
	We have assumed that the model space for the logistic density $\eta$
	contains a subset of components in the SS ANOVA decomposition 
	\eqref{eq:ssanovadecomp}. The interactions are often truncated to 
	overcome the curse of dimensionality and reduce the computational cost.
	As in \citeasnoun{gu2013smoothing} and \citeasnoun{gu2013nonparametric},
	in this section we consider the SS ANOVA model with all main effects and two-way 
	interactions as the model space for $\eta$. We note that the SS ANOVA 
	model allows pairwise nonparametric interactions as opposed to linear 
	interactions in the GGM.
	\citeasnoun{gu2013smoothing} and \citeasnoun{gu2013nonparametric} 
	proposed the squared error projection for accessing importance of each 
	interaction term and subsequently identify edges. However, we cannot 
	apply their method directly to $\hat{\eta}$ to identify edges within 
	$\bm{X}$ since the cGGM for $f(\by|\bx)$ also includes interaction 
	terms among variables in $\bm{X}$.
	
	The logarithm of the joint density 
	\begin{equation}
	\begin{aligned}
	\log f(\bz) &= \log f(\bx) + \log f(\by|\bx) \\
	&= \eta(\bx) + \frac{1}{2}\big(-\by\trans\bl\by - 
	2\bx\trans\bt\by - \bx\trans\bt\trans\bl\inv\bt\bx \big) + C,
	\notag
	\end{aligned}
	\end{equation}
	where $C$ is a constant independent of $\bx$ and $\by$. The main 
	challenge in identifying conditional independence among $\bm{X}$ 
	comes from the fact that $f(\by|\bx)$ brings in an extra term, 
	$- \bx\trans\bt\trans\bl\inv\bt\bx/2$, into the interactions among 
	$\bm{X}$. Let
	\begin{equation}
	\hat{\zeta}(\bx) = \hat{\Delta}(\bx) + \hat{\eta}(\bx).
	\end{equation}
	where $\hat{\Delta}(\bx) = -\bx^T\hat{\bt}\trans\hat{\bl}\inv\hat{\bt}\bx/2$. 
	Define the functional 
	\begin{equation}
	\tilde{V}(f, g) = \intx f(\bx)g(\bx)\rhox\dx - 
	\{ \intx f(\bx)\rhox\dx \}\{ \intx g(\bx)\rhox\dx \} 
	\label{tildeV}
	\end{equation}
	and denote $\tilde{V}(f, f)$ as $\tilde{V}(f)$. 
	Let $\mathcal{H} = \mathcal{S}^0\oplus\mathcal{S}^1$ where 
	$\mathcal{S}^1$ collects functions whose contribution to the 
	overall model is of question. The squared error projection of 
	$\zetah$ in $\mathcal{S}^0$ is \cite{gu2013smoothing}
	\begin{equation}
	\tilde{\zeta}=\argminA_{\zeta \in \mathcal{S}^0} 
	\big\{ \tilde{V}(\zetah - \zeta) \big\}.
	\label{eq:4.1.2}
	\end{equation}
	$\tilde{V}(\zetah - \zeta)$ can be regarded as a proxy of the symmetrized 
	KL divergence \cite{gu2013smoothing}. 
	Assuming $\zetau = - \log\rhox \in \mathcal{S}^0$, 
	it is easy to check that
	$\tilde{V}(\zetah-\zetau)=\tilde{V}(\zetah-\zetat)+\tilde{V}(\zetat-\zetau)$.
	Then the ratio $\tilde{V}(\zetah - \zetat)/\tilde{V}(\zetah - \zetau)$
	reflects the importance of functions in $\mathcal{S}^1$. 
	The quantity $\tilde{V}(\zetah - \zetau)$ is readily computable while 
	details for computing the squared error projection $\zetat$ are given in 
	\ref{appendix:projection}.
	
	For any pair of variables $X_i$ and $X_j$, consider the decomposition 
	$\mathcal{H}=\mathcal{S}^0_{ij}\oplus\mathcal{S}^1_{ij}$ where 
	$\mathcal{S}^1_{ij}$ is the subspace consisting of two-way interactions between $X_i$ and $X_j$, and $\mathcal{S}^0_{ij}$ contains all functions in $\mathcal{H}$ except the two-way interactions between $X_i$ and $X_j$. 
	Note that $\zeta_{ij}(x_i,x_j) \triangleq \eta_{ij}(x_i,x_j) + \hat{\Delta}_{ij}x_ix_j \in S_{ij}^1$ where
	 $\hat{\Delta}_{ij} = (\hat{\bt}\trans\hat{\bl}\inv\hat{\bt})_{ij}$.
       Compute the projection ratio
	$r_{ij} \triangleq \tilde{V}(\hat{\zeta} - 
	\tilde{\zeta})/\tilde{V}(\hat{\zeta} - \zeta_{u})$ in which 
	$\tilde{\zeta}$ is the squared error projection of $\hat{\zeta}$ in 
	$\mathcal{S}^0_{ij}$.
	The ratio
	$r_{ij}$ indicates the importance of interactions between $X_i$ and $X_j$, and we will add the interactions to the additive model sequentially according to the 
	descending order of $r_{ij}$'s.
	
	Consider the space decomposition 
	$\mathcal{H}=\mathcal{S}^0\oplus\mathcal{S}^1$.
	We start with  $\mathcal{S}^0$ being the subspace spanned by all main effects. We calculate the projection ratio of $\hat{\zeta}$ in 
	$\mathcal{S}^0$ as $r=\tilde{V}(\hat{\zeta} - \tilde{\zeta})/\tilde{V}(\hat{\zeta} - \zeta_u)$ 
	where $\tilde{\zeta}$ is the squared error 
	projection in $\mathcal{S}^0$. If $r$ is larger than a threshold, $\mathcal{S}^1$ is deemed important and we move the interaction with the largest $r_{ij}$ from $\mathcal{S}^1$ to $\mathcal{S}^0$. We then calculate the projection ratio $r$ with the updated $\mathcal{S}^0$ and $\mathcal{S}^1$. The projection ratio decreases each time we move an interaction from $\mathcal{S}^1$ 
	to $\mathcal{S}^0$. Finally the process stops when $r$ falls below a cut-off value, at which time we denote the corresponding $\mathcal{S}^0$ as $\mathcal{S}_s^0$. Let 
	$\Pi_{ij} = I\big(\zeta_{ij}\in\mathcal{S}_s^0\big)$ 
	and remove the edge between $X_i$ and $X_j$ if $\Pi_{ij} = 0$. In our implementations, the cut-off value is set to be $3\%$.

	To summarize, the conditional independences among $\bm{Y}$, between $\bm{X}$ 
	and $\bm{Y}$, and among $\bm{X}$ are characterized by the zero elements 
	in $\hat{\Lambda}$, $\hat{\Theta}$ and $\Pi$, respectively. 
	The whole procedure for edge identification is illustrated in Figure \ref{fig:4.1}.
	\begin{figure}[h!]
		\centering
		\includegraphics[width=0.6\textwidth]{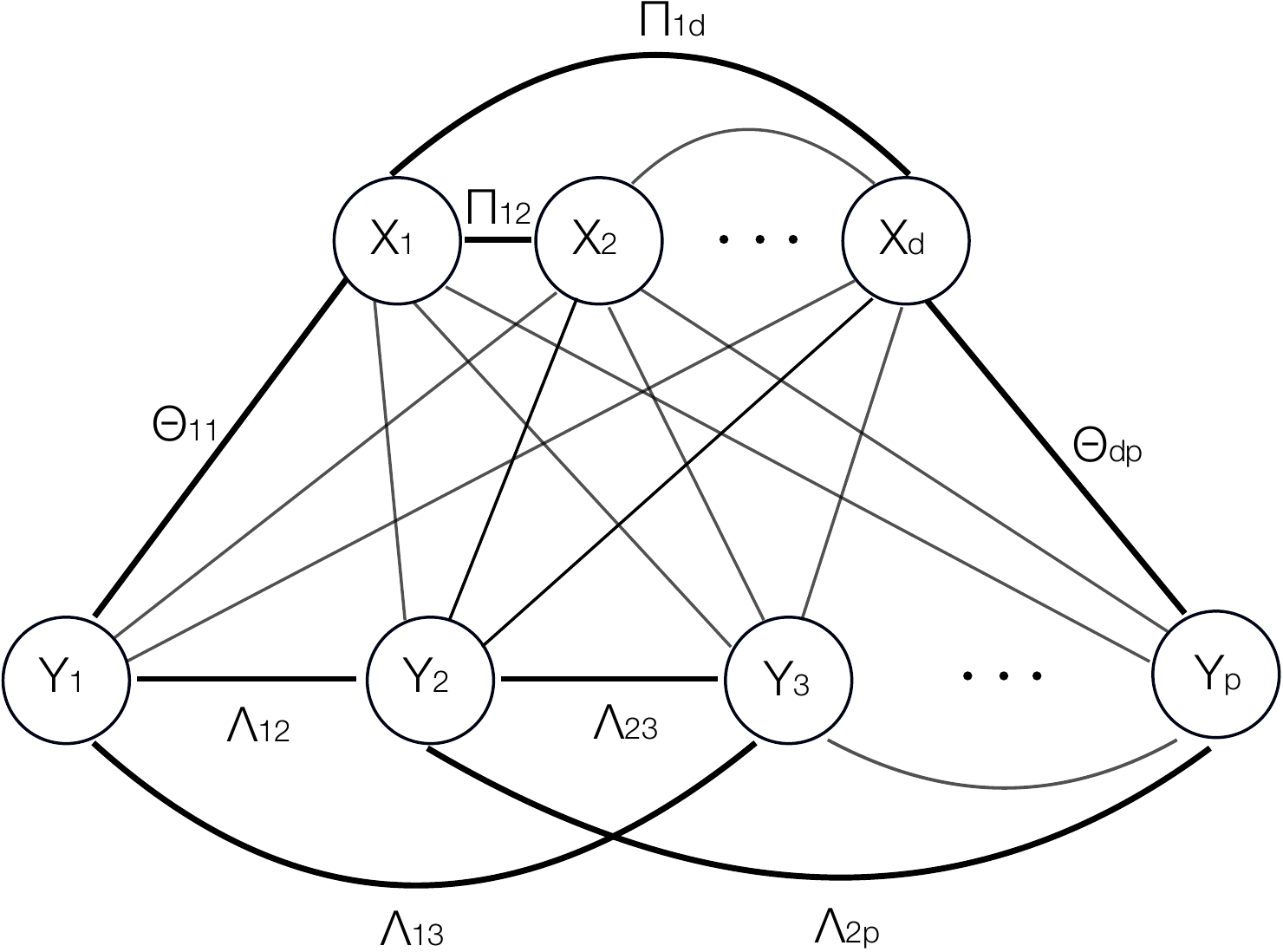}
		\caption{Illustration of the edge identification procedure.}
		\label{fig:4.1}
	\end{figure}

	\section{Theoretical Analysis}
	\label{sec:theory}
	
	We list notations, assumptions, and theoretical results only. Proofs are 
	given in \ref{appendix:proofs}.
	
	\subsection{Notations and Assumptions} 
	
	Given a matrix $U$, let $\vertiii{U}_2 =  \sqrt{\lambda_\text{max}(U^T  U)}$, 
	$\vertiii{U}_\infty=\max_{i=1,\dots,p}\sum_{j=1}^{p} |U_{ij}|$
	and 
	$\vertiii{U}_F=\sqrt{\sum_{i=1}^p\sum_{j=1}^p U_{ij}^2}$
	denote the $\ell_2$ operator norm , $\ell_\infty$ operator norm and Frobenius norm respectively, where $\lambda_\text{max}(U^T U)$ represents the largest eigenvalue of $U^T U$.
	We assume that $\bm{Y}|\bm{X} = \bm{x}\sim 
	\text{N}(-\Lambda_0^{-1}\Theta_0^{T} \bm{x},\Lambda_0^{-1})$ 
	where $\Lambda_0$ and $\Theta_0$ are the true parameters. 
	Let $\Gamma_0  = (\Lambda^{T}_0 , \Theta^{T}_0 )^T$,
	$\Sigma_0=\Lambda_0^{-1}$,
	$C_{\sigma} = \max_i\Sigma_{0,ii}$, 
	$C_{\Sigma}= \max_{i,j}|\Sigma_{0,ij}|$,
	$C_{\Theta} = \max_{i,j}|\Theta_{0,ij}|$,
	$\cx = \max_{j=1,\dots,d}\norm{\bm{X}^j}_2/\sqrt{n}$ 
	where $\bm{X}^j$ is the $j$th columns of $X$,
	$H_0 =\nabla^2_{\Lambda,\Theta} l_2(\Lambda_0,\Theta_0)$ denote the 
	Hessian matrix evaluated at the true parameters, and
	$\kappa_{H} = \max_{i,j}|H^{-1}_{0,ij}|$.
	Let $\gamma = \max_{1\leq j \leq p} \left\{ \sum_{i=1}^{d+p}
	I(\Gamma_{0,ij} \ne 0) \right\}$ 
	be the maximum number of non-zeros in any column of $\Gamma_0$
	which represents the maximum degree of $\bm{Y}$ in the graph.
	
	Denote $\lambda=\max \{\lambda_2,\lambda_3\}$ and 
	$r=\min \{\lambda_2,\lambda_3\}/\lambda$, then $r\leq 1$. 
	In the following theoretical analysis, we assume that $\lambda_2 \geq \lambda_3$ 
	and $r = \lambda_3/\lambda_2$. Similar arguments apply to the case of $\lambda_2 < \lambda_3$. The objective function can be rewritten as
	\begin{equation}
	\begin{aligned}
	&\{\hat{\Theta}, \hat{\Lambda} \} = \argminA_{\Lambda \succ 0,\Theta} 
	\big\{ l_2(\Lambda,\Theta) +\lambda(\norm{\Lambda}_1 + r\norm{  \Theta}_1 )\big\}.
	\end{aligned}
	\label{eq:obj}
	\end{equation}
	We make the following assumptions.
	
	\begin{assumption} (Underlying Model) 
		$\bm{Y}|\bm{X} = \bm{x}\sim 
		\text{N}(-\Lambda_0^{-1}\Theta_0^{T} \bm{x},\Lambda_0^{-1})$ 
		where $\bm{Y}$ has the maximum degree $\gamma$.
		\label{asmp:0}
	\end{assumption}
	
	\begin{assumption} (Restricted Convexity)
		For any $i=1,\dots,p$, 
		let $S_i$ denote the nonzero indices of the $i$-th column of 
		$\Theta_0$ (i.e., the edges between $\mathbf{X}$ and $Y_i$). We have
		$\lambda_{\emph{min}} (1/n X_{S_i}^T  X_{S_i} ) > 0$, 
		where $\lambda_\emph{min}(\cdot)$ denotes the smallest
		eigenvalue and $X_{S_i}$ represents the $n\times|S_i|$ matrix 
		with columns of $X$ indexed by $S_i$.
		\label{asmp:1}
	\end{assumption}
	
	\begin{assumption} (Mutual incoherence) 
		Let $S$ denote the support set of $\Gamma_0$ in vector form
		$S = (\emph{vec}(\emph{supp}\{\Lambda_0\})^T, \emph{vec}(\emph{supp}\{\Theta_0\})^T)^T$
		where $\emph{supp}\{\cdot\}$ denotes the indicator function 
		of whether an element is zero. 
		Let $\bar{S}$ denote the complement of $S$. We have
		$\vertiii{H_{0,\bar{S}S}(H_{0,SS})^{-1}}_\infty \leq 1-\alpha$
		for some $\alpha \in (0,1)$, where $H_{0,\bar{S}S}$ and $H_{0,SS}$ 
		represent the $|\bar{S}|\times |S|$ and $|S|\times |S|$ sub-matrices 
		of $H_0$ with entries in $\bar{S} \times S$ and $S \times S$ respectively. 
		\label{asmp:2}
	\end{assumption}
	
	\begin{assumption} (Control of eigenvalues) There exists some 
		constants $0 < C_L \le C_U < \infty$, such that
		$C_L \le \lambda_\mathrm{min}(\Lambda_0) \leq \lambda_\mathrm{max}(\Lambda_0) \le C_U$.
		\label{asmp3}
	\end{assumption}

	Assumption \ref{asmp:0} provides the true underlying model. Assumption \ref{asmp:1} ensures the solution of optimization problem (\ref{eq:obj})
	is restricted to the active set (nonzero entries in 
	$\Lambda_0$ and $\Theta_0$), which is also used in 
	\citeasnoun{wainwright2009sharp} and \citeasnoun{wytock2013sparse}.
	Assumption \ref{asmp:2} limits the influence of edges in 
	inactive set ($\bar{S}$) can have on the edges in active set ($S$), and Assumption \ref{asmp3} bounds the eigenvalues of the precision matrix. 
	Define $V(f,g)=\intx f(\bm{x}) g(\bm{x}) \rhox d\bm{x}$
	and $V(f)=\intx f^2(\bm{x})\rhox d\bm{x}$.
	
	\begin{assumption} $V$ is completely continuous with respect to $J$
		and $J(\eta_0)<\infty$.
		\label{asmp4}
	\end{assumption}
	
	Under the Assumption \ref{asmp4}, there exists $\phi_\nu $ such that 
	$V(\phi_\nu,\phi_\mu) = \delta_{\nu,\mu}$, 
	$J(\phi_\nu,\phi_\mu) = \rho_\nu\delta_{\nu,\mu}$, and 
	$0\leq \rho_\nu \uparrow \infty$, where $\delta_{\nu,\mu}$ is the 
	Kronecker delta and $\rho_\nu$ is referred to as the eigenvalues 
	of $J$ with respect to $V$. 
	Denote the Fourier series expansion of $\eta_0$ as 
	$\eta_0=\sum_{\nu}\eta_{\nu,0}\phi_\nu$ where
	$\eta_{\nu,0} = V(\eta_0, \phi_\nu)$ are the Fourier coefficients.
	Let $\tilde{\eta}=\sum_{\nu}\tilde{\eta}_{\nu}\phi_\nu$ where 
	$\tilde{\eta}_\nu = (\beta_\nu + \eta_{\nu,0})/(1+\lambda_1\rho_\nu)$
	and 
	$\beta_\nu = n^{-1}\sum_{i=1}^{n}\{e^{-\eta_0(\bm{X}_i)}
	\phi_\nu(\bm{X}_i)- \intx \phi_\nu(\bm{x}) \rhox d\bm{x}\}$.

	\begin{assumption} \label{asmp5}
		(a)
		The eigenvalues $\rho_\nu$ of $J$ with respect to $V$ 
		satisfy $\rho_\nu > \beta \nu^s$ for some $\beta > 0$ and $s > 1$ 
		when $\nu$ is sufficiently large.
		
		\noindent
		(b) There exists some constants 
		$0 < C_{1,1} < C_{1,2} < \infty$, $C_{1,3} < \infty$ and $C_{1,4} < \infty$ 
		such that
		$C_{1,1} < e^{\eta_0(\bm{x})-\eta(\bm{x})} < C_{1,2}$ holds uniformly for 
		$\eta$ in a convex set around $\eta_0$ containing $\etah$ and $\etat$,
		$e^{-\eta_0(\bm{x})} < C_{1,3}$, and 
		$\intx \phi^2_\nu(\bm{x})\phi^2_\mu(\bm{x})e^{-\eta_0(\bm{x})}\rhox d\bm{x}< C_{1,4}$
		for any $\nu$ and $\mu$.
		
		\noindent
		(c) There exists some $q\in [1,2]$ such that 
		$\sum_{\nu}\rho_\nu^q\eta_{\nu,0}^2 < \infty$.
		\label{asmp5}
	\end{assumption}
	Assumptions \ref{asmp4} and \ref{asmp5} are commonly used in smoothing spline 
	literatures to study the convergence rate for 
	nonparametric density estimation \cite{gu2013smoothing} .

	\subsection{Asymptotic Consistency of the Estimated Parameters}
	
	The following theorem provides the estimation error bound and edge 
	selection accuracy.
	\begin{theorem}
		Suppose that the Assumptions \ref{asmp:1} and \ref{asmp:2} hold, $\tau > 2$, and 
		$n$ and $\lambda=\max\{\lambda_2,\lambda_3 \}$ satisfy 
		\begin{eqnarray}
		n &\geq& C_{2,1} C_{2,2}^2 C_{\sigma}^2 \gamma^4 
		( 1+8\alpha^{-1})^4 [\tau\log(pd) + \log 4 ], \label{eq:thm1.1} \\
		\lambda &=& 8 \alpha^{-1} C_{\sigma}C_X^\star\sqrt{3200} 
		\sqrt{\frac{\tau\log (pd) + \log 4}{n}}, \notag
		\end{eqnarray}
		where $C_{2,1} = \max\{ 12800, 32C_X^2 \}$, 
		$C_{2,2} = \kappa_{H} \max\{ 3C_{\Sigma}/\gamma, 
		2/(C_{\Theta}\gamma), 412C_{\Sigma}^4C_{\Theta}^2C_X^2 \}$, 
		and $C_X^\star = \max\{C_X^2,1\}$, then with probability greater than 
		$1-\big(p^{-(\tau-2)}+ (pd)^{-(\tau-1)}\big)$, we have
		\begin{enumerate}
			\item The estimates satisfy the elementwise $\ell_\infty$ bound:
			\begin{equation}
			\max\Big\{ \norm{\hat{\Lambda}-\Lambda_0}_\infty, \norm{\hat{\Theta}-\Theta_0}_\infty \Big\} \leq 2\kappa_{H} (1 + 8\alpha^{-1}) C_{\sigma}C_X^\star\sqrt{3200} \sqrt{\frac{\tau\log (pd) + \log 4}{n}}.\notag
			\end{equation}
			\item All non-zero entries of the solution $(\hat{\Lambda}, \hat{\Theta})$ are a subset of the non-zero entries of $(\Lambda_0, \Theta_0)$. Furthermore, 
			non-zero entries of $(\hat{\Lambda}, \hat{\Theta})$ includes all non-zero entries $(i,j)$ in $(\Lambda_0, \Theta_0)$ that satisfy
			\begin{equation}
			\min\{ \Lambda_{0,ij}, \Theta_{0,ij} \} >  4 \kappa_{H}(1+8\alpha^{-1}) C_{\sigma}C_X^\star\sqrt{3200} \sqrt{\frac{\tau\log (pd) + \log 4}{n}}.
			\label{eq:5.2.3}
			\end{equation} 
		\end{enumerate}
		\label{thm:1}
	\end{theorem}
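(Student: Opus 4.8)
The plan is to follow the primal-dual witness (PDW) construction, as in \citeasnoun{wainwright2009sharp} and \citeasnoun{wytock2013sparse}, adapted to the jointly convex cGGM objective in \eqref{eq:obj}. First I would write the zero-subgradient (KKT) stationarity condition for \eqref{eq:obj}: a pair $(\hat{\Lambda},\hat{\Theta})$ with stacked parameter $\hat{\Gamma}=(\hat{\Lambda}^T,\hat{\Theta}^T)^T$ is optimal if and only if $\nabla l_2(\hat{\Lambda},\hat{\Theta}) + \lambda Z = 0$ for some $Z$ in the subdifferential of $\norm{\Lambda}_1 + r\norm{\Theta}_1$ at $\hat{\Gamma}$, i.e.\ $Z_{ij} = \text{sign}(\hat{\Gamma}_{ij})$ (with the factor $r$ on the $\Theta$-block) whenever $\hat{\Gamma}_{ij}\ne 0$ and $|Z_{ij}|\le 1$ otherwise. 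The PDW step constructs $(\tilde{\Lambda},\tilde{\Theta})$ as the solution of the restricted problem in which all entries outside the true support $S$ (as in Assumption \ref{asmp:2}) are held at zero, together with the associated restricted subgradient $\tilde{Z}_S$; the values $\tilde{Z}_{\bar{S}}$ are then \emph{defined} through the stationarity equation on $\bar{S}$. The central claim to verify is \emph{strict dual feasibility}, $\norm{\tilde{Z}_{\bar{S}}}_\infty < 1$, which, combined with convexity of $l_2$ and Assumption \ref{asmp:1} (restricted positive definiteness of the Hessian block $H_{0,SS}$), guarantees that the restricted solution is the unique global optimum, so that $\hat{\Lambda}=\tilde{\Lambda}$, $\hat{\Theta}=\tilde{\Theta}$, and $\hat{\Gamma}$ has no false positives.

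The analytic core is a Taylor expansion of the gradient of $l_2$ about the truth. Writing $\Delta = \hat{\Gamma}-\Gamma_0$ and $W = \nabla l_2(\Lambda_0,\Theta_0)$, and using $H_0 = \nabla^2 l_2(\Lambda_0,\Theta_0)$, one has $\nabla l_2(\hat{\Lambda},\hat{\Theta}) = W + H_0\,\vect(\Delta) + R(\Delta)$, where $R(\Delta)$ collects the higher-order terms arising from the $\Lambda^{-1}$ in \eqref{eq:l2}; the $\Theta$-block of $l_2$ is exactly quadratic, so the remainder comes solely from expanding $\Lambda^{-1}\Theta^T\sx\Theta$ in $\Lambda$. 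Restricting the stationarity equation to $S$ and inverting $H_{0,SS}$ gives $\vect(\Delta_S) = -H_{0,SS}^{-1}\big(W_S + R_S(\Delta) + \lambda\tilde{Z}_S\big)$, and substituting this into the $\bar{S}$-equation yields $\tilde{Z}_{\bar{S}} = -\lambda^{-1}H_{0,\bar{S}S}H_{0,SS}^{-1}(W_S+R_S) + \lambda^{-1}(W_{\bar{S}}+R_{\bar{S}}) + H_{0,\bar{S}S}H_{0,SS}^{-1}\tilde{Z}_S$. Assumption \ref{asmp:2} bounds the last term in $\ell_\infty$ by $1-\alpha$, so it remains to show the other two terms are each at most $\alpha/4$; this is precisely what calibrates $\lambda = 8\alpha^{-1}C_\sigma C_X^\star\sqrt{3200}\sqrt{(\tau\log(pd)+\log4)/n}$.

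Two probabilistic estimates drive the calibration. The first is a concentration bound of the form $\norm{W}_\infty \le \tfrac{\alpha}{8}\lambda$ holding with probability at least $1-\big(p^{-(\tau-2)}+(pd)^{-(\tau-1)}\big)$: since $W$ is an affine function of the sample second moments $\sy$, $\sxy$, $\sx$ evaluated at the true parameters, its entrywise deviations from their expectations are governed by tail bounds for (products of) Gaussian quadratic forms and are of order $C_\sigma C_X^\star\sqrt{\log(pd)/n}$, and a union bound over the $O(pd)$ coordinates produces the stated failure probability, the two terms accounting for the $\Lambda$- and $\Theta$-blocks respectively. The second is control of the nonlinear remainder $R(\Delta)$ in terms of $\norm{\Delta_S}_\infty$ and the sparsity $\gamma$, via repeated $\ell_\infty\!\to\!\ell_\infty$ operator-norm conversions (each costing a factor of $\gamma$) together with Assumption \ref{asmp3} to keep $\Lambda_0^{-1}$ and its perturbation bounded; feeding this back through $H_{0,SS}^{-1}$, whose entries are bounded by $\kappa_H$, produces a self-bounding inequality for $\norm{\Delta_S}_\infty$. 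The sample-size requirement \eqref{eq:thm1.1}, in particular the factor $\gamma^4(1+8\alpha^{-1})^4$, is exactly what forces the remainder to be dominated and lets the self-bounding inequality close, yielding $\max\{\norm{\hat{\Lambda}-\Lambda_0}_\infty,\norm{\hat{\Theta}-\Theta_0}_\infty\}\le 2\kappa_H(1+8\alpha^{-1})C_\sigma C_X^\star\sqrt{3200}\sqrt{(\tau\log(pd)+\log4)/n}$, which is item 1. Item 2 follows immediately: strict dual feasibility gives $\hat{\Gamma}_{\bar{S}}=0$, i.e.\ no spurious edges, while the $\ell_\infty$ bound shows that any true entry whose magnitude exceeds twice that bound --- the condition in \eqref{eq:5.2.3} --- cannot have been shrunk to zero and is therefore recovered. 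I expect the main obstacle to be the remainder-control step: one must bound $R(\Delta)$ sharply enough that the self-bounding argument closes under a sample size only polynomial in $\gamma$, whereas cruder bounds introduce extra powers of $\gamma$ or of $p$ and destroy the stated rate.
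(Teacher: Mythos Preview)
Your proposal is correct and follows essentially the same primal-dual witness route as the paper: concentration for $W=\nabla l_2(\Lambda_0,\Theta_0)$ via sub-Gaussian tail bounds and a union bound (Lemma~\ref{lemma1}), the PDW reduction and strict dual feasibility (Lemma~\ref{lemma2}), a remainder bound $\norm{R(\Delta)}_\infty \le 206 C_\Sigma^4 C_\Theta^2 C_X^2 \gamma^2 \norm{\Delta}_\infty^2$ (Lemma~\ref{lemma3}), and closure of the self-map argument (Lemma~\ref{lemma4}). Two small corrections: the paper formalizes your ``self-bounding inequality'' as a Brouwer fixed-point argument on the map $F(\Delta_S)=H_{SS}^{-1}(R(\Delta_S)-G_S-\lambda Z_S)$ rather than a direct inequality, and the remainder control does \emph{not} use Assumption~\ref{asmp3} (eigenvalue bounds) but only the entrywise constants $C_\Sigma,C_\Theta$ together with a Neumann-series bound on $(\Lambda_0+t\Delta_\Lambda)^{-1}$; also, the nonlinearity generating $R(\Delta)$ comes from both the $-\log|\Lambda|$ term and the $\Lambda^{-1}\Theta^T\sx\Theta$ term, not solely the latter.
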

	
	\noindent\textbf{Remark 1:} i) Theorem \ref{thm:1} indicates that a sample 
	size larger than a constant times $\gamma^4 \log(pd)$ is enough for our
	estimation procedure to identify a subset of the true non-zero 
	elements in the cGGM, and the resulting estimations are close to the 
	true parameters in $\ell_\infty$ bound. The convergence rate is the same as 
	that in \citeasnoun{wytock2013sparse}, but the success probability of 
	the primal-dual witness approach as well as the exact bounds for $n$ 
	and $\lambda$ are different. We also provide a lower bound for the 
	sign consistency which is not included in \citeasnoun{wytock2013sparse}.
	
	\noindent ii) The convergence probability is smaller than that for the GGM 
	\cite{wainwright2009sharp} where only a precision matrix needs to be 
	estimated. This is the price we pay for estimating extra parameters in $\Theta$.
	
	Define $s_\Lambda$ as the total number 
	of non-zero elements in off-diagonal positions of $\Lambda_0$, and
	$s_\Theta$ as the total number of non-zero elements in $\Theta_0$. 
	
	\begin{corollary}
		Under the same assumptions as in Theorem 1, with probability at least $1-\big(p^{-(\tau-2)}+ (pd)^{-(\tau-1)}\big)$, the estimates $\hat{\Lambda}$ and $\hat{\Theta}$ satisfy
		\begin{eqnarray}
		&&\max\Big\{ \vertiii{\hat{\Lambda}-\Lambda_0}_F, 
		\vertiii{\hat{\Theta}-\Theta_0}_F \Big\} \nonumber \\
		&\leq& 2\kappa_{H} (1 + 8\alpha^{-1}) \max\{\sqrt{p+s_\Lambda},\sqrt{s_\Theta}\}
		C_{\sigma}C_X^\star\sqrt{3200} \sqrt{\frac{\tau\log (pd) + \log 4}{n}}.
		\label{eq:col1}
		\end{eqnarray}
		\label{col:1}
	\end{corollary}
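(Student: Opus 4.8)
The plan is to derive the Frobenius-norm bound directly from the two conclusions of Theorem~\ref{thm:1}, with essentially no new probabilistic work; everything will be carried out on the event $\mathcal{E}$ of probability at least $1-\big(p^{-(\tau-2)}+(pd)^{-(\tau-1)}\big)$ on which both parts of Theorem~\ref{thm:1} hold simultaneously. First I would use part~2 of Theorem~\ref{thm:1}: on $\mathcal{E}$ every nonzero entry of $\hat{\Lambda}$ lies in $\mathrm{supp}\{\Lambda_0\}$ and every nonzero entry of $\hat{\Theta}$ lies in $\mathrm{supp}\{\Theta_0\}$, so the error matrices $\hat{\Lambda}-\Lambda_0$ and $\hat{\Theta}-\Theta_0$ are supported on $\mathrm{supp}\{\Lambda_0\}$ and $\mathrm{supp}\{\Theta_0\}$ respectively. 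Since $\Lambda_0\succ 0$, all $p$ diagonal entries of $\Lambda_0$ are nonzero, hence $|\mathrm{supp}\{\Lambda_0\}|=p+s_\Lambda$, while $|\mathrm{supp}\{\Theta_0\}|=s_\Theta$ by definition of $s_\Lambda$ and $s_\Theta$.

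Next I would use the elementary fact that for any matrix $U$ whose nonzero entries all lie in an index set of cardinality $m$, one has $\vertiii{U}_F^2=\sum_{i,j}U_{ij}^2\le m\,\norm{U}_\infty^2$, since each of the at most $m$ nonzero entries is bounded in absolute value by $\norm{U}_\infty$. Applying this with $U=\hat{\Lambda}-\Lambda_0$ and $m=p+s_\Lambda$, and with $U=\hat{\Theta}-\Theta_0$ and $m=s_\Theta$, gives
\[
\vertiii{\hat{\Lambda}-\Lambda_0}_F\le\sqrt{p+s_\Lambda}\,\norm{\hat{\Lambda}-\Lambda_0}_\infty,\qquad
\vertiii{\hat{\Theta}-\Theta_0}_F\le\sqrt{s_\Theta}\,\norm{\hat{\Theta}-\Theta_0}_\infty.
\]

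Finally I would substitute the elementwise $\ell_\infty$ bound from part~1 of Theorem~\ref{thm:1} into both right-hand sides, bound $\sqrt{p+s_\Lambda}$ and $\sqrt{s_\Theta}$ by $\max\{\sqrt{p+s_\Lambda},\sqrt{s_\Theta}\}$, and take the maximum of the two displayed inequalities, which yields exactly \eqref{eq:col1}; since the whole argument runs on the single event $\mathcal{E}$ and introduces no further randomness, the probability statement is inherited verbatim from Theorem~\ref{thm:1}. I do not expect any genuine obstacle here, as the result is a deterministic consequence of the sparsity-pattern recovery already established in Theorem~\ref{thm:1}; the only points needing a little care are the support count $|\mathrm{supp}\{\Lambda_0\}|=p+s_\Lambda$ (not forgetting the $p$ nonzero diagonal entries forced by positive definiteness) and the observation that the two conclusions of Theorem~\ref{thm:1} indeed hold on a common event.
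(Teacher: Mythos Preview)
Your proposal is correct and follows essentially the same argument as the paper: invoke the support-recovery part of Theorem~\ref{thm:1} to conclude that the error matrices are supported on $\mathrm{supp}\{\Lambda_0\}$ (of size $p+s_\Lambda$) and $\mathrm{supp}\{\Theta_0\}$ (of size $s_\Theta$), then use $\vertiii{U}_F\le\sqrt{m}\,\norm{U}_\infty$ together with the $\ell_\infty$ bound from part~1. Your explicit remarks about the $p$ diagonal entries and about both conclusions holding on a single event are exactly the points of care the paper's proof relies on implicitly.
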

	
	\noindent\textbf{Remark 2:} The Frobenius norm was not studied in 
	\citeasnoun{wytock2013sparse}. We develop it as a building block for 
	establishing the convergence rate for the density estimation in Section 
	\ref{sec:densityrate}.
	
	
	\subsection{Convergence Rates for the Density Estimation}
	\label{sec:densityrate}

	We first introduce a combined measure of divergence between the 
	joint density and its estimate. 
	Let $f_0(\bm{x}) = e^{\eta_0(\bm{x})}\rho(\bm{x})/ 
	\intx e^{\eta_0(\bm{x})}\rho(\bm{x}) d\bm{x}$ and $f_0(\by|\bm{x})$
	be the true densities of $\bm{X}$ and $\bm{Y}|\bm{X}=\bm{x}$ 
	with their estimates denoted as 
	$\hat{f}(\bm{x}) = e^{\hat{\eta}(\bm{x})}\rhox/
	\intx e^{\hat{\eta}(\bm{x})}\rhox d\bm{x}$
	and $\hat{f}(\by|\bm{x})$ respectively. The KL
	divergence between two density functions $f_1$ and $f_2$ are defined as 
	$\text{KL}(f_1,f_2)=\text{E}_{f_1}[\log(f_1/f_2)]$. Then 
	the symmetrized KL divergence between the true joint density
	$f_0(\bz)=f_0(\bm{x})f_0(\by|\bm{x})$ and its estimate 
	$\hat{f}(\bz)=\hat{f}(\bm{x})\hat{f}(\by|\bm{x})$ can expressed as
	\begin{eqnarray}
	&&\text{SKL}\Big(f_0(\bz),\hat{f}(\bz)\Big) \nonumber\\
	&=& \text{KL}\Big(f_0(\bz),\hat{f}(\bz)\Big) + 
	\text{KL}\Big(\hat{f}(\bz),f_0(\bz)\Big) \nonumber \\
	&=& \left\{ \int_\mathcal{X} f_0(\bm{x}) \text{KL}\Big(f_0(\by|\bm{X}=\bx), \hat{f}(\by|\bm{X}=\bx)\Big)d\bx + \int_\mathcal{X} \hat{f}(\bm{x}) \text{KL}\Big( \hat{f}(\by|\bm{X}=\bx),f_0(\by|\bm{X}=\bx)\Big)d\bx \right\} \nonumber \\
	&+& \left\{ \text{KL}\Big(f_0(\bm{x}), \hat{f}(\bm{x})\Big) + 
	\text{KL}\Big( \hat{f}(\bm{x}),f_0(\bm{x})\Big) \right\} \nonumber \\
	&\triangleq& \text{SKL}\big(f_0(\by|\bx),\hat{f}(\by|\bx) \big) + 
	\text{SKL}\big(f_0(\bm{x}),\hat{f}(\bm{x})\big).
	\label{eq:5.2.1}
	\end{eqnarray}
	
	We will establish the asymptotic convergence 
	rate under the following combined measure of divergence
	\begin{equation}
	D\big(f_0(\bz),\hat{f}(\bz)\big) \triangleq   
	\text{SKL}\big(f_0(\by|\bx),\hat{f}(\by|\bx)\big) + (V+\lambda_1 J)(\eta_0-\etah).
	\label{eq:5.2.2}
	\end{equation}
	The difference between (\ref{eq:5.2.2}) and (\ref{eq:5.2.1}) lies in 
	the divergence measures for the estimation of $f(\bm{x})$. 
	Note that the rate in $V(\eta-\eta_0)$ implies rate in 
	$\tilde{V}(\eta-\eta_0)$, since $\tilde{V}(f) \le V(f)$ and $\tilde{V}(\eta_0-\hat{\eta})$ is a proxy of 
	$\text{SKL}(f_0(\bm{x}),\hat{f}(\bm{x}))$ \cite{gu2013smoothing}. 
	
	We first establish the rate for 
	$\text{SKL}(f_0(\by|\bx),\hat{f}(\by|\bx))$ which has the 
	explicit expression:
	\begin{eqnarray}
	&& \text{SKL} \big( f_0(\by|\bx),\hat{f}(\by|\bx) \big) \nonumber\\
	&=& \frac{1}{2}\bm{a}^T\hat{\Lambda}\bm{a}\int_\mathcal{X} \bm{x}^T\bm{x}f_0(\bx)d\bx +\nonumber \frac{1}{2}\bm{a}^T\Lambda_0\bm{a}\int_\mathcal{X} \bm{x}^T\bm{x}\hat{f}(\bx)d\bx 
	+ \frac{1}{2}\tr\big( \hat{\Lambda}^{-1}\Lambda_0 + 
	\Lambda_0^{-1}\hat{\Lambda} \big) - p,
	\label{eq:thm2.2}
	\end{eqnarray}
	where $\mathbf{a} = \hat{\Lambda}^{-1}\hat{\Theta}^T  - 
	\Lambda_0^{-1}\Theta_0^T$. We assume that the second moments of marginal 
	densities of $f_0$ and $\hat{f}$ exist. 
	
	\begin{theorem}
		Under the Assumption \ref{asmp3} and conditional on the event 
		$\vertiii{\hat{\Lambda}-\Lambda_0}_F \leq 0.5 C_L$, we have 
		\begin{equation}
		\text{SKL}\big(f_0(\by|\bx),\hat{f}(\by|\bx)\big) = \mathcal{O}\Big(n^{-5/2}p^{5/2}(\log pd)^{5/2} + n^{-1}p^2(\log pd)\Big).
		\label{eq:thm2.1}
		\end{equation}
		\label{thm:2}
	\end{theorem}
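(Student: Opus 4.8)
The plan is to argue directly from the closed form \eqref{eq:thm2.2}. Write it as $\text{SKL}\big(f_0(\by|\bx),\hat f(\by|\bx)\big)=T_1+T_2$, where $T_1=\tfrac12\tr\big(\hat\Lambda^{-1}\Lambda_0+\Lambda_0^{-1}\hat\Lambda\big)-p$ is the covariance part, $T_2=\tfrac12\tr\big(\bm{a}^{T}\hat\Lambda\bm{a}\,M_0\big)+\tfrac12\tr\big(\bm{a}^{T}\Lambda_0\bm{a}\,\widehat{M}\big)$ is the mean part, $\bm{a}=\hat\Lambda^{-1}\hat\Theta^{T}-\Lambda_0^{-1}\Theta_0^{T}$, $M_0=\intx\bm{x}\bm{x}^{T}f_0(\bx)\dx$, $\widehat{M}=\intx\bm{x}\bm{x}^{T}\hat f(\bx)\dx$, and I set $\Delta_\Lambda:=\hat\Lambda-\Lambda_0$, $\Delta_\Theta:=\hat\Theta-\Theta_0$. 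The first step is to record the consequences of the conditioning event. By Weyl's inequality together with Assumption \ref{asmp3}, $\vertiii{\Delta_\Lambda}_F\le\tfrac12 C_L$ forces $\lambda_{\min}(\hat\Lambda)\ge\tfrac12 C_L$ and $\lambda_{\max}(\hat\Lambda)\le C_U+\tfrac12 C_L$, hence $\vertiii{\Lambda_0^{-1}}_2\le C_L^{-1}$, $\vertiii{\hat\Lambda^{-1}}_2\le 2C_L^{-1}$, and $\vertiii{\Lambda_0}_2,\vertiii{\hat\Lambda}_2=\mathcal{O}(1)$, while the assumed finiteness of the second moments of $f_0$ and $\hat f$ (together with, e.g., compactness of $\mathcal{X}$) gives $\vertiii{M_0}_2,\vertiii{\widehat{M}}_2=\mathcal{O}(1)$. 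I would also import from Corollary \ref{col:1} (bounding $s_\Lambda,s_\Theta=\mathcal{O}(p^2)$) the Frobenius rates $\vertiii{\Delta_\Lambda}_F,\vertiii{\Delta_\Theta}_F=\mathcal{O}\!\big(p\sqrt{\log(pd)/n}\,\big)$, and from Theorem \ref{thm:1} the support-recovery statements $\mathrm{supp}(\hat\Lambda)\subseteq\mathrm{supp}(\Lambda_0)$, $\mathrm{supp}(\hat\Theta)\subseteq\mathrm{supp}(\Theta_0)$ together with the elementwise rate $\max\{\norm{\Delta_\Lambda}_\infty,\norm{\Delta_\Theta}_\infty\}=\mathcal{O}\!\big(\sqrt{\log(pd)/n}\,\big)$.

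For $T_1$, the identities $\tr(\hat\Lambda^{-1}\Lambda_0)=p-\tr(\hat\Lambda^{-1}\Delta_\Lambda)$, $\tr(\Lambda_0^{-1}\hat\Lambda)=p+\tr(\Lambda_0^{-1}\Delta_\Lambda)$ and $\Lambda_0^{-1}-\hat\Lambda^{-1}=\Lambda_0^{-1}\Delta_\Lambda\hat\Lambda^{-1}$ give $T_1=\tfrac12\tr\big(\Lambda_0^{-1}\Delta_\Lambda\hat\Lambda^{-1}\Delta_\Lambda\big)=\tfrac12\vertiii{\hat\Lambda^{-1/2}\Delta_\Lambda\Lambda_0^{-1/2}}_F^{2}\ge 0$, so $0\le T_1\le\tfrac12\vertiii{\Lambda_0^{-1}}_2\vertiii{\hat\Lambda^{-1}}_2\vertiii{\Delta_\Lambda}_F^{2}\le C_L^{-2}\vertiii{\Delta_\Lambda}_F^{2}=\mathcal{O}\!\big(n^{-1}p^{2}\log(pd)\big)$, which is already the second term of \eqref{eq:thm2.1}.

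For $T_2$, both quadratic forms are nonnegative, so $0\le T_2\le\tfrac12\big(\vertiii{\hat\Lambda}_2\vertiii{M_0}_2+\vertiii{\Lambda_0}_2\vertiii{\widehat{M}}_2\big)\vertiii{\bm{a}}_F^{2}=\mathcal{O}\!\big(\vertiii{\bm{a}}_F^{2}\big)$, and everything reduces to bounding $\vertiii{\bm{a}}_F$. I would use the exact perturbation identity $\bm{a}=(\hat\Lambda^{-1}-\Lambda_0^{-1})\hat\Theta^{T}+\Lambda_0^{-1}\Delta_\Theta^{T}=-\Lambda_0^{-1}\Delta_\Lambda\hat\Lambda^{-1}\hat\Theta^{T}+\Lambda_0^{-1}\Delta_\Theta^{T}$, so that $\vertiii{\bm{a}}_F\le\vertiii{\Lambda_0^{-1}}_2\big(\vertiii{\hat\Lambda^{-1}\hat\Theta^{T}}_2\,\vertiii{\Delta_\Lambda}_F+\vertiii{\Delta_\Theta}_F\big)$, and then bound $\vertiii{\hat\Lambda^{-1}\hat\Theta^{T}}_2\le\vertiii{\hat\Lambda^{-1}}_2\big(\vertiii{\Theta_0}_2+\vertiii{\Delta_\Theta}_2\big)$, where $\vertiii{\Theta_0}_2$ is controlled by the model constants ($C_\Theta$ and the maximum degree $\gamma$) and $\vertiii{\Delta_\Theta}_2$ is controlled by converting the elementwise Theorem \ref{thm:1} bound into an operator-norm bound through the sparsity of the recovered support $\mathrm{supp}(\hat\Theta)\subseteq\mathrm{supp}(\Theta_0)$. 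Substituting the Corollary \ref{col:1} Frobenius rates and the Theorem \ref{thm:1} entrywise rate and collecting orders, the leading contribution to $\vertiii{\bm{a}}_F^{2}$ is $\mathcal{O}(p^{2}\log(pd)/n)$ and the residual cross terms between the Frobenius-scale and entrywise-scale perturbations produce the $\mathcal{O}\!\big(n^{-5/2}p^{5/2}(\log pd)^{5/2}\big)$ contribution; adding $T_1$ and $T_2$ and retaining the dominant order yields \eqref{eq:thm2.1}.

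The step I expect to be the main obstacle is the norm bookkeeping for $\bm{a}$. It is a difference of matrix products carrying three different scales --- a Frobenius-scale perturbation from Corollary \ref{col:1}, an entrywise-scale perturbation from Theorem \ref{thm:1}, and a fixed scale from the true parameters --- and the product $\hat\Lambda^{-1}\Delta_\Lambda\hat\Theta^{T}$ in particular forces one to trade elementwise control for operator-norm control via the degree bound $\gamma$ on the recovered support. Pairing the factors in the right way, and checking that no residual cross term exceeds $n^{-5/2}p^{5/2}(\log pd)^{5/2}$, is precisely what pins down the exponents in \eqref{eq:thm2.1}. A secondary point requiring care is the uniform control of $\vertiii{\widehat{M}}_2$, the second moment of the plug-in density $\hat f$: this is where the assumed existence of the second moments of $\hat f$, combined with the boundedness of $\mathcal{X}$, enters.
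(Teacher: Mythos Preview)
Your decomposition into $T_1$ (covariance part) and $T_2$ (mean part), together with the use of Corollary~\ref{col:1} to control the Frobenius deviations, is exactly the paper's strategy. Your handling of $T_1$ is in fact cleaner than the paper's: you exploit the exact identity $T_1=\tfrac12\tr(\Lambda_0^{-1}\Delta_\Lambda\hat\Lambda^{-1}\Delta_\Lambda)$ to get $T_1\le C_L^{-2}\vertiii{\Delta_\Lambda}_F^2$ directly, whereas the paper bounds $I_3=\tfrac12\tr(\hat\Lambda^{-1}\Lambda_0)$ and $I_4=\tfrac12\tr(\Lambda_0^{-1}\hat\Lambda)$ separately via Cauchy--Schwarz and Lemma~\ref{lemma5} (their inequality $\vertiii{\hat\Lambda^{-1}-\Lambda_0^{-1}}_F\le(2\sqrt p/C_L^2)\vertiii{\Delta_\Lambda}_F$). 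Both routes land on the $\mathcal{O}(n^{-1}p^2\log(pd))$ term.

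For $T_2$ the two approaches diverge. The paper does \emph{not} use operator-norm control of $\bm{a}$; it again invokes Lemma~\ref{lemma5} inside the bound for $\norm{U\bx}$ (your $\bm{a}\,\bx$), picking up an explicit $\sqrt p$, then multiplies by $\vertiii{\hat\Lambda}_2\le\vertiii{\Delta_\Lambda}_F+C_U$ and retains the product $(4p)\,\vertiii{\Delta_\Lambda}_F^{3}\vertiii{\Delta_\Theta}_F^{2}$. Plugging in Corollary~\ref{col:1}, that fifth-order product with the prefactor $p$ from Lemma~\ref{lemma5} is declared to be $\mathcal{O}(n^{-5/2}p^{5/2}(\log pd)^{5/2})$, and this is the \emph{sole} source of the first term in~\eqref{eq:thm2.1} in their argument. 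Your sharper operator-norm identity $\bm{a}=-\Lambda_0^{-1}\Delta_\Lambda\hat\Lambda^{-1}\hat\Theta^{T}+\Lambda_0^{-1}\Delta_\Theta^{T}$ avoids the $\sqrt p$ loss entirely and, as you yourself compute, gives $\vertiii{\bm{a}}_F^2=\mathcal{O}(p^2\log(pd)/n)$.

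The one genuine gap is your final sentence on $T_2$. Having already obtained $T_2=\mathcal{O}(\vertiii{\bm{a}}_F^2)=\mathcal{O}(p^2\log(pd)/n)$, you then assert that ``residual cross terms between the Frobenius-scale and entrywise-scale perturbations produce the $\mathcal{O}(n^{-5/2}p^{5/2}(\log pd)^{5/2})$ contribution.'' There is no such mechanism in your decomposition: any lower-order cross term is \emph{absorbed} into the leading $\mathcal{O}(p^2\log(pd)/n)$, not added on top of it, and nothing in your bounds produces a fifth power of $\sqrt{\log(pd)/n}$. You never actually need Theorem~\ref{thm:1}'s entrywise bounds or its support-recovery statement here; Corollary~\ref{col:1} and the eigenvalue control from the conditioning event suffice. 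Cleanly executed, your argument in fact yields the stronger bound $\text{SKL}=\mathcal{O}(n^{-1}p^2\log(pd))$, from which~\eqref{eq:thm2.1} is an immediate consequence; the hand-wave about cross terms should simply be deleted rather than forced to manufacture the paper's first term.
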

	
	For the smoothing spline ANOVA estimate $\etah$ of $\eta_0$, 
	under the Assumptions \ref{asmp4} and \ref{asmp5}, \citeasnoun{gu2013smoothing} showed that 
	as $\lambda_1\rightarrow 0$ and $n\lambda_1^{2/s}\rightarrow \infty$,
	\begin{equation}
	(V + \lambda_1 J)(\etah-\eta_0) = \mathcal{O}(n^{-1}\lambda_1^{-1/s}+\lambda_1^q).
	\label{eq:thm3.2}
	\end{equation}
	
	Finally, we have the convergence rate for the joint density estimate.
	
	\begin{theorem}
		Suppose that the Assumptions \ref{asmp:1}-\ref{asmp5} hold, $\tau > 2$, 
		$\lambda_1\rightarrow 0$, $n\lambda_1^{2/s}\rightarrow \infty$, and 
		$n$ and $\lambda$ satisfy
		\begin{eqnarray}
		n &\geq& C_{3,1} C_{3,2}^2 C_{\sigma}^2 \max\{\gamma^4, p+s_\Lambda\} 
		(1+8\alpha^{-1})^4 [\tau\log(pd) + \log 4 ], \label{thm:3.1} \\
		\lambda &=& 8\alpha^{-1} C_{\sigma}C_X^\star\sqrt{3200} 
		\sqrt{\frac{\tau\log (pd) + \log 4}{n}},\notag
		\end{eqnarray}
		where $C_{3,1} = C_{2,1} = \max\{ 12800, 32C_X^2 \}$, and
		$C_{3,2} = \max\{C_{2,2}, \kappa_{H}\sqrt{1600}/C_L \}=\kappa_{H} \max\{ 3C_{\Sigma}/\gamma, 2/(C_{\Theta}\gamma), \\
		412C_{\Sigma}^4C_{\Theta}^2C_X^2, \sqrt{1600}/C_L \}$, then
		with probability greater than $1-\big(p^{-(\tau-2)}+ (pd)^{-(\tau-1)}\big)$ we have
		\begin{equation}
		D\big(f_0(\bz),\hat{f}(\bz)\big) =  \mathcal{O}\big( n^{-5/2}p^{5/2}(\log pd)^{5/2} + n^{-1}p^2(\log pd) + n^{-1}\lambda_1^{-1/s}+\lambda_1^q \big).
		\label{thm:3.2}
		\end{equation}
		\label{thm:3}
	\end{theorem}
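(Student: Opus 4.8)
The plan is to exploit the additive decomposition \eqref{eq:5.2.2}, $D(f_0(\bz),\hat f(\bz)) = \text{SKL}(f_0(\by|\bx),\hat f(\by|\bx)) + (V+\lambda_1 J)(\eta_0-\etah)$, bound the two summands separately by results already established, and add the orders. The second summand needs no new work: Assumptions \ref{asmp4}--\ref{asmp5} are in force and $\lambda_1\to 0$, $n\lambda_1^{2/s}\to\infty$ are assumed, so \eqref{eq:thm3.2} yields $(V+\lambda_1 J)(\etah-\eta_0)=\mathcal{O}(n^{-1}\lambda_1^{-1/s}+\lambda_1^q)$. The first summand is covered by Theorem \ref{thm:2}, which additionally invokes Assumption \ref{asmp3} and gives $\text{SKL}(f_0(\by|\bx),\hat f(\by|\bx))=\mathcal{O}(n^{-5/2}p^{5/2}(\log pd)^{5/2}+n^{-1}p^2(\log pd))$ --- but only on the event $\mathcal{E}:=\{\vertiii{\hat{\Lambda}-\Lambda_0}_F\le 0.5\,C_L\}$. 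Hence the one genuine step here is to show that $\mathcal{E}$ holds with the probability claimed in Theorem \ref{thm:3}.

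To establish $\mathcal{E}$, first observe that the sample-size requirement \eqref{thm:3.1} is stronger than \eqref{eq:thm1.1}: indeed $C_{3,1}=C_{2,1}$, $C_{3,2}\ge C_{2,2}$, and $\max\{\gamma^4,\,p+s_\Lambda\}\ge\gamma^4$, so \eqref{thm:3.1} implies \eqref{eq:thm1.1}. Therefore Corollary \ref{col:1} applies, and with probability at least $1-\big(p^{-(\tau-2)}+(pd)^{-(\tau-1)}\big)$ the Frobenius bound \eqref{eq:col1} holds. It then suffices to check that the right-hand side of \eqref{eq:col1} is at most $0.5\,C_L$. Substituting the lower bound on $n$ from \eqref{thm:3.1}, together with $C_{3,2}\ge\kappa_H\sqrt{1600}/C_L$, into $2\kappa_H(1+8\alpha^{-1})\sqrt{p+s_\Lambda}\,C_\sigma C_X^\star\sqrt{3200}\,\sqrt{(\tau\log(pd)+\log 4)/n}$ and simplifying shows this quantity is $\le 0.5\,C_L$; this is exactly why the factor $\sqrt{1600}/C_L$ is built into $C_{3,2}$ and the extra term $p+s_\Lambda$ into the sample-size condition relative to Theorem \ref{thm:1}. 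Thus $\mathcal{E}$ occurs on the same high-probability event as \eqref{eq:col1}.

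Finally, on $\mathcal{E}$ we apply Theorem \ref{thm:2} to bound the conditional-density divergence and add the marginal-density rate from \eqref{eq:thm3.2}; combining via \eqref{eq:5.2.2} gives the stated bound \eqref{thm:3.2}, with no cross term since the contributions are merely summed. For the probability statement, the marginal bound \eqref{eq:thm3.2} does not depend on $\mathcal{E}$ and its exceptional set is negligible of smaller order, so intersecting it with the event of \eqref{eq:col1} leaves the probability $1-\big(p^{-(\tau-2)}+(pd)^{-(\tau-1)}\big)$ intact up to lower-order terms.

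Since all the heavy machinery sits inside Theorems \ref{thm:1}--\ref{thm:2}, the only delicate points specific to Theorem \ref{thm:3} that I anticipate are (i) the constant bookkeeping that turns \eqref{eq:col1} into $\vertiii{\hat{\Lambda}-\Lambda_0}_F\le 0.5\,C_L$, i.e.\ fixing $C_{3,2}$ and the $\max\{\gamma^4,p+s_\Lambda\}$ factor so the inequality is airtight, and (ii) carrying over from Theorem \ref{thm:2} the standing requirement that the second moment $\int_{\mathcal{X}}\bm{x}^T\bm{x}\,\hat f(\bm{x})\,d\bm{x}$ appearing in the explicit SKL formula \eqref{eq:thm2.2} stays uniformly bounded, which follows from the assumed existence of the second moment of $\hat f$ together with the consistency of $\etah$ via \eqref{eq:thm3.2} and the uniform bounds in Assumption \ref{asmp5}(b).
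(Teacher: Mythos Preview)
Your proposal is correct and follows essentially the same route as the paper's proof: combine the additive decomposition \eqref{eq:5.2.2}, apply Theorem \ref{thm:2} and \eqref{eq:thm3.2} to the two pieces, and use Corollary \ref{col:1} together with the strengthened sample-size condition \eqref{thm:3.1} to guarantee the conditioning event $\vertiii{\hat{\Lambda}-\Lambda_0}_F\le 0.5\,C_L$ with the stated probability. The paper's own argument is just as brief; your constant bookkeeping in point (i) matches exactly what the paper does to obtain \eqref{thm:3.1} from \eqref{eq:thm1.1}, and regarding your point (ii), note that the paper simply \emph{assumes} the second moments of $f_0$ and $\hat f$ exist (stated just before Theorem \ref{thm:2}) rather than deriving boundedness from Assumption \ref{asmp5}(b).
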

	
	\noindent
	\textbf{Remark 3:} For low-dimensional $\bm{X}$ (usually $d\leq 3$), 
	the computation of multivariate integrals are feasible. We may use the 
	penalized likelihood instead of the pseudo likelihood to estimate the 
	density function $f(\bx)$. This leads to 
	$f_0(\bx) = e^{\eta_0}/\intx e^{\eta_0}$. 
	Under similar conditions, \citeasnoun{gu2013smoothing} has proved that 
	the symmetrized KL divergence $\text{SKL}(f_0(\bm{x}),\hat{f}(\bm{x}))$ 
	is also $\mathcal{O}(n^{-1}\lambda_1^{-1/s}+\lambda_1^q)$,
	where $\hat{f}(\bm{x})$ is the penalized likelihood estimate. 
	If we also use the penalized likelihood to estimate $\eta$ in our model, then $\text{SKL}\big(f_0(\bz),\hat{f}(\bz)\big)= \mathcal{O}(n^{-5/2}p^{5/2}(\log pd)^{5/2} + n^{-1}p^2(\log pd) + n^{-1}\lambda_1^{-1/s}+\lambda_1^q)$.

\section{Simulation Studies}
\label{sec:simulation}
We have conducted extensive simulation experiments to evaluate the performance of the cSScGG procedure, and compare it with some existing parametric and semiparametric/nonparametric methods. To save space, we present some simulation results and more comprehensive results can be found in \citeasnoun{Luothesis}.
We note that the cSScGG method can ourperform the maximum likelihood estimation (MLE) when $\bm{Z} = (\bm{X}^T, \bm{Y}^T)^T$ is multivariate Gaussian and the cGGM for $\bm{Y}$ is sparse. 
Results for density and graph estimations are presented in Sections \ref{sec:sim_1} and \ref{sec:sim_2} respectively.

For density estimation, we use both LOOKL and CV (5-fold) methods to choose $\lambda_2$ and $\lambda_3$. Tuning parameters involved in all other methods are chosen by 5-fold CV. For graph estimation, we select $\lambda_2$ and $\lambda_3$ in the cSScGG method as minimizers of the
following BIC score
\begin{equation}
\text{BIC}(\lambda_2,\lambda_3) =  \Big\{-n \log|\hat{\Lambda}| + n\tr(\sy\hat{\Lambda} + 2 S_{xy}^T\hat{\Theta} + \hat{\Lambda}\inv\hat{\Theta}^T\sx\hat{\Theta}) \Big \} + \log n \{ \xi(\hat{\Lambda}) / 2 + \xi(\hat{\bt}) \},
\end{equation}
where $\xi(\hat{\Lambda})$ and $\xi(\hat{\bt})$ are the number of non-zero off-diagonal elements in $\hat{\Lambda}$ and the number of non-zero elements in $\hat{\bt}$ respectively. 
The degree of freedom is defined in the same way as in \citeasnoun{yin2011sparse}. The BIC is also used to select tuning parameters in 
other methods for graph estimation.
More details regarding comparison of various tuning parameter selection methods are included in \citeasnoun{Luothesis}.

\subsection{Density Estimation}
\label{sec:sim_1}

We set $n=200$, $d=3$, and $p=25$. We generate $\bm{X}\sim \omega \mathcal{N}(\bm{\mu}_1,\sigma^2I) + (1-\omega) \mathcal{N}(\bm{\mu}_2,\sigma^2 I)$ with $\bm{\mu}_1 = (1,0,-1)^T$, and $\bm{\mu}_2 = (0,-1,1)^T$. We consider four combinations of $\sigma$ and $\omega$: $\sigma = 0.5, 0.1$ and $\omega = 0.9, 0.1$. All results are reported based on 100 replications under each setting. In each replication, we first generate $n$ iid samples $\bm{X}_1,\dots,\bm{X}_n$ from the multivariate Gaussian mixtures, then $\bm{Y}_i$'s are generated from a cGGM. Specifically, we randomly create a $(d+p)\times(d+p)$ precision matrix $\Omega$ using the R-package \texttt{huge} \cite{zhao2012huge}, in which the probability of the off-diagonal elements being nonzero equals $0.2$. The decomposition $\Omega=\begin{bmatrix}
\Omega_{xx} & \Omega_{xy} \\
\Omega_{yx} & \Omega_{yy}
\end{bmatrix}$ gives us $\Theta = \Omega_{xy}$ and $\Lambda = \Omega_{yy}$ \cite{yuan2014partial}, so that we can sample $\bm{Y}_i$ from $\mathcal{N}(-\Lambda\inv\Theta^T\bm{X}_i,\Lambda\inv)$ for $i=1,\dots,n$.

Since the division of non-Gaussian variables $\bm{X}$ and Gaussian variables
$\bm{Y}$ is typically unknown in practice, we consider two versions of the proposed method -- plain cSScGG and cSScGG with normality test (denoted as NT). In the plain version, we assume that the true non-Gaussian components are known and apply cSScGG directly. In the NT version, we select $d$ variables with smallest p-values based on the Shapiro-Wilk test to all $p+d$ marginal variables as
$\bm{X}$, and then apply the cSScGG method. 

In addition to the cSScGG method, we estimate density using the SKDE \cite{hoti2004semiparametric}, MLE, and QUIC \cite{hsieh2011sparse} methods. In the implementation of the SKDE method, we use the R-package \texttt{ks} \cite{duong2007ks} to calculate the kernel density estimate for $f(\bx)$ with the bandwidth selected by the smoothed cross-validation selector with diagonal bandwidth matrices (\texttt{Hscv.diag(x)}) which provides the best overall performance. To avoid selecting the two extra bandwidths involved in SKDE, as in \citeasnoun{hoti2004semiparametric}, we set $f(\by|\bx) = f(\by)$ and use MLE to estimate $f(\by)$. MLE and QUIC methods treat $\bm{Z}^T = (\bm{X}^T,\bm{Y}^T)$ as multivariate normal across all settings, and the estimates from these two methods are further broken down into $f(\bx)$ and $f(\by|\bx)$ for comparison. Specifically, QUIC method learns the precision matrix of $\bm{Z}$ by forming a quadratic approximation of the log-likelihood, and the estimates are computed using the R-package \texttt{QUIC} \cite{hsieh2014quic}. \\
To evaluate the performance of different methods, we consider the KL divergence between the estimated density and the true density
\begin{equation}
			\begin{aligned}
			\text{KL}\Big(f_0(\bz),\hat{f}(\bz)\Big) = \text{E}_{\bm{X}} \Big[\text{KL}\Big(f_0(\by|\bm{X}), \hat{f}(\by|\bm{X})\Big)\Big] + \text{KL}\Big(f_0(\bx), \hat{f}(\bx)\Big),
			\end{aligned}
			\notag
\end{equation}
where $f_0$ is the true density, and the aggregated KL $\text{E}_{\bm{X}} \Big[\text{KL}\Big(f_0(\by|\bm{X}), \hat{f}(\by|\bm{X})\Big)\Big]$ is approximated by the empirical aggregated KL divergence. Table \ref{tb:2.8} reports the overall KL divergence $\text{KL}\Big(f_0(\bz),\hat{f}(\bz)\Big)$, the empirical aggregated KL divergence $n^{-1}\sum_{i=1}^{n} \text{KL}\Big(f_0(\by|\bm{X}_i), \hat{f}(\by|\bm{X}_i)\Big)$, and $\text{KL}\Big(f_0(\bx), \hat{f}(\bx)\Big)$. They provide evaluations for the estimation of $f(\bz)$, $f(\by|\bx)$, and $f(\bx)$, respectively. \\

\begin{table}[H]
	\centering
	\scalebox{0.85}{
		\setlength{\tabcolsep}{0.5em} 
		{\renewcommand{\arraystretch}{1}
			\begin{tabular}{|l|l|llll|}
				\hline
		KL		& Method    & \multicolumn{1}{c}{\begin{tabular}[c]{@{}c@{}}$\sigma=0.5$\\ $\omega=0.9$ \end{tabular}} &  \multicolumn{1}{c}{\begin{tabular}[c]{@{}c@{}}$\sigma=0.5$\\ $\omega=0.5$ \end{tabular}} & \multicolumn{1}{c}{\begin{tabular}[c]{@{}c@{}}$\sigma=0.1$\\ $\omega=0.9$\end{tabular}} &  \multicolumn{1}{c|}{\begin{tabular}[c]{@{}c@{}}$\sigma=0.1$\\ $\omega=0.5$\end{tabular}} \\ \hline
				\multicolumn{1}{|l|}{\multirow{4}{*}{$f(\bx)$}} &cSScGG & 0.030 (0.023) & 0.043 (0.021) & 0.062 (0.073) & 0.046 (0.026) \\ 
				\multicolumn{1}{|l|}{}                        &SKDE & 0.208 (0.172) & 0.181 (0.190) & 0.503 (0.414) & 0.141 (0.091) \\ 
				\multicolumn{1}{|l|}{}                        &QUIC & 0.225 (0.023) & 0.243 (0.012) & 3.313 (0.051) & 3.528 (0.021) \\ 
				\multicolumn{1}{|l|}{}                        &MLE & 0.182 (0.021) & 0.225 (0.010) & 3.128 (0.047) & 3.487 (0.023) \\ 
				\hline
				\multicolumn{1}{|l|}{\multirow{4}{*}{$f(\by|\bx)$}} & cSScGG\_CV & 1.145 (0.202) & 1.118 (0.189) & 1.040 (0.138) & 1.078 (0.175) \\
				\multicolumn{1}{|l|}{}                        &cSScGG\_LOOKL & 1.143 (0.164) & 1.098 (0.167) & 1.14 (0.172) & 1.112 (0.161) \\
				\multicolumn{1}{|l|}{}                        &SKDE & 1.621 (0.184) & 1.632 (0.218) & 1.425 (0.173) & 1.474 (0.215) \\ 
				\multicolumn{1}{|l|}{}                        &QUIC & 1.196 (0.125) & 1.163 (0.147) & 1.179 (0.141) & 1.156 (0.139) \\ 
				\multicolumn{1}{|l|}{}                        &MLE & 1.827 (0.245) & 1.613 (0.236) & 2.235 (0.413) & 1.607 (0.235) \\ 
				\hline
				\multicolumn{1}{|l|}{\multirow{4}{*}{$f(\bz)$}}   &cSScGG\_CV & 1.175 (0.205) & 1.161 (0.189) & 1.102 (0.155) & 1.124 (0.178) \\ 
				\multicolumn{1}{|l|}{}   & cSScGG\_CV\_NT & 1.262 (0.208) & 1.268 (0.173) & 1.032 (0.125) & 1.051 (0.120) \\
				\multicolumn{1}{|l|}{}   & cSScGG\_LOOKL & 1.173 (0.167) & 1.141 (0.166) & 1.202 (0.186) & 1.158 (0.164) \\
				\multicolumn{1}{|l|}{}   & cSScGG\_LOOKL\_NT & 1.358 (0.241) & 1.325 (0.211) & 1.153 (0.184) & 1.122 (0.166) \\
				\multicolumn{1}{|l|}{}                        &SKDE & 1.829 (0.256) & 1.813 (0.331) & 1.928 (0.455) & 1.615 (0.234) \\ 
				\multicolumn{1}{|l|}{}                        &QUIC & 1.422 (0.132) & 1.405 (0.148) & 4.492 (0.15) & 4.684 (0.137) \\ 
				\multicolumn{1}{|l|}{}                        &MLE & 2.009 (0.245) & 1.838 (0.237) & 5.363 (0.404) & 5.094 (0.232) \\
				\hline
			\end{tabular}
		}}
		\caption{Averages and standard deviations (in parentheses) of the overall KL divergence $\text{KL}\Big(f_0(\bz),\hat{f}(\bz)\Big)$ (denoted by $f(\bz)$), the empirical aggregated KL $1/n\sum_{i=1}^{n} \text{KL}\Big(f_0(\by|\bm{X}_i), \hat{f}(\by|\bm{X}_i)\Big) $ (denoted by $f(\by|\bx)$), and $\text{KL}\Big(f_0(\bx), \hat{f}(\bx)\Big)$ (denoted by $f(\bx)$). cSScGG\_CV (cSScGG\_LOOKL) and cSScGG\_CV\_NT (cSScGG\_LOOKL\_NT) correspond to the cSScGG method without and with normality test repectively, and tuning parameters $\lambda_2$ and $\lambda_3$ are selected by the 5-fold CV (LOOKL).}
		\label{tb:2.8}
	\end{table}
Since cSScGG with normality test may identify different $\bm{X}$, we only include the overall KL divergence $\text{KL}\Big(f_0(\bz),\hat{f}(\bz)\Big)$ for comparison. Both versions of the cSScGG method enjoy superior performance relative to all other methods under all settings.
When comparing the plain cSScGG with other methods, the differences mainly come from the estimation of $f(\bx)$, in which parametric methods MLE and QUIC cannot fit the data properly. The cSScGG performs much better than SKDE in both the estimation of $f(\bx)$ and $f(\by|\bx)$. When $\sigma$ is fixed, the performance differences are larger under $\omega=0.5$ where the deviation from Gaussian is more severe. Furthermore, under a fixed $\omega$, the superiority of the cSScGG methods is greater when $\sigma=0.1$ where the deviation from Gaussian is more severe. Comparative results remain the same under other simulation settings \cite{Luothesis}.

\subsection{Edge Detection}
\label{sec:sim_2}
We do not consider the SKDE and MLE methods here because we they do not perform edge selection. In addition to QUIC which is parametric, we also include the nonparanormal (NPN) method \cite{liu2009nonparanormal}. The NPN method is implemented with the R-package \texttt{huge}. When fitting the model, we use shrunken ECDF to transform the data first, then apply Glasso to the transformed data. The final NPN model is selected by the extended BIC score \cite{foygel2010extended}. Given a fixed dimension $p$, the model chosen by the EBIC method agrees with the model chosen by the BIC method. As the cSScGG method is formulated quite differently from the NPN, our main focus is to investigate the improvements that cSScGG can bring over the QUIC method which assumes normality for all variables including $\bm{X}$.\\
The performance is measured in three categories: among $\bm{X}$, among $\bm{Y}$, and between $\bm{X}$ and $\bm{Y}$. Recall that for the cSScGG procedure, edges in the above categories are decided by $\Pi$, $\Lambda$ and $\Theta$, respectively (see Figure \ref{fig:4.1}). We also report the overall performance based on the whole graph. All simulation results are based on $100$ replications. \\
We fix $p=25$, $d=3$, and consider two sample sizes $n=200$ and $n=300$.
We first generate both $\bm{X}$ and $\bm{Y}$ from multivariate normals.  Specifically, we first generate a $(d+p)\times(d+p)$ sparse precision matrix $\Omega$, in which the probability of the off-diagonal elements being nonzero equals $0.2$. Then $n$ i.i.d. samples $\bm{Z}_1, \dots, \bm{Z}_n$ are generated from $\mathcal{N}(\bm{0},\Omega\inv)$. The decomposition $\bm{Z}_i^T = (\bm{X}_i^T, \bm{Y}_i^T)$ leads to i.i.d. samples of $\bm{X}$ and $\bm{Y}$, and the decomposition $\Omega=\begin{bmatrix}
\Omega_{xx} & \Omega_{xy} \\
\Omega_{yx} & \Omega_{yy}
\end{bmatrix}$ leads to $\Theta = \Omega_{xy}$ and $\Lambda = \Omega_{yy}$. The results are presented in Table \ref{tb:6.3.1}.\\
Overall, the cSScGG and QUIC methods perform better than the NPN. This is expected as the true distribution is Gaussian and the ECDF transformation leads to efficiency loss. 
Surprisingly, the cSScGG outperforms the QUIC in detecting edges within $\bm{X}$ variables even when the normality assumption holds for the QUIC method. It suggests that the proposed projection ratio method learns the conditional independence within  $\bm{X}$ better than the parametric QUIC method with BIC. Furthermore, the cSScGG outperforms the QUIC in identifying edges among $\bm{Y}$ as well as edges between $\bm{X}$ and $\bm{Y}$,  due to the fact that there are two penalty parameters in cSScGG, as opposed to one in QUIC. To conclude, the cSScGG method is more efficient even when the joint normality assumption holds.

\begin{table}[H]
	\centering
	\scalebox{.9}{
		\setlength{\tabcolsep}{0.1em} 
		{\renewcommand{\arraystretch}{0.8}
			\begin{tabular}{|c|ccc|ccc|ccc|} \hline &  \multicolumn{3}{c|}{cSScGG} & \multicolumn{3}{c|}{QUIC} & \multicolumn{3}{c|}{NPN} \\ 
				& SPE & SEN & F$_1$ & SPE & SEN & F$_1$ & SPE & SEN & F$_1$ \\ \hline 
				\multicolumn{10}{|l|}{\ Among $\bm{X}$} \\ \hline 
				n=200 & 
				\begin{tabular}[c]{@{}c@{}}0.881  \\ (0.221) \end{tabular} & \begin{tabular}[c]{@{}c@{}}0.931  \\ (0.24) \end{tabular} & \begin{tabular}[c]{@{}c@{}}0.732  \\ (0.429) \end{tabular} & \begin{tabular}[c]{@{}c@{}}0.775  \\ (0.245) \end{tabular} & \begin{tabular}[c]{@{}c@{}}0.97  \\ (0.171) \end{tabular} & \begin{tabular}[c]{@{}c@{}}0.55 \\ (0.471) \end{tabular} & 
				\begin{tabular}[c]{@{}c@{}}0.839  \\ (0.259) \end{tabular} & \begin{tabular}[c]{@{}c@{}}0.914  \\ (0.27) \end{tabular} & \begin{tabular}[c]{@{}c@{}}0.699  \\ (0.412) \end{tabular} \\ 
				n=300 & 
				\begin{tabular}[c]{@{}c@{}}0.932  \\ (0.203) \end{tabular} & \begin{tabular}[c]{@{}c@{}}0.895  \\ (0.278) \end{tabular} & \begin{tabular}[c]{@{}c@{}}0.827  \\ (0.352) \end{tabular} & \begin{tabular}[c]{@{}c@{}}0.812  \\ (0.293) \end{tabular} & \begin{tabular}[c]{@{}c@{}}0.989  \\ (0.102) \end{tabular} & \begin{tabular}[c]{@{}c@{}}0.713  \\ (0.428) \end{tabular} &
				\begin{tabular}[c]{@{}c@{}}0.803  \\ (0.304) \end{tabular} & \begin{tabular}[c]{@{}c@{}}0.968  \\ (0.17) \end{tabular} & \begin{tabular}[c]{@{}c@{}}0.765  \\ (0.372) \end{tabular} \\  \hline
				\multicolumn{10}{|l|}{\ Among $\bm{Y}$} \\ \hline  
				n=200 & 
				\begin{tabular}[c]{@{}c@{}}0.821 \\ (0.029)\end{tabular} & \begin{tabular}[c]{@{}c@{}}0.939 \\ (0.039)\end{tabular} & \begin{tabular}[c]{@{}c@{}}0.707 \\ (0.035)\end{tabular}  & \begin{tabular}[c]{@{}c@{}}0.794 \\ (0.03)\end{tabular} & \begin{tabular}[c]{@{}c@{}}0.946 \\ (0.037)\end{tabular} & \begin{tabular}[c]{@{}c@{}}0.682 \\ (0.034)\end{tabular} &
				\begin{tabular}[c]{@{}c@{}}0.819 \\ (0.095)\end{tabular} & \begin{tabular}[c]{@{}c@{}}0.774 \\ (0.333)\end{tabular} & \begin{tabular}[c]{@{}c@{}}0.564 \\ (0.197)\end{tabular} \\ 
				n=300 & 
				\begin{tabular}[c]{@{}c@{}}0.858 \\ (0.028)\end{tabular} & \begin{tabular}[c]{@{}c@{}}0.96 \\ (0.028)\end{tabular} & \begin{tabular}[c]{@{}c@{}}0.761  \\ (0.028) \end{tabular} & \begin{tabular}[c]{@{}c@{}}0.829  \\ (0.027) \end{tabular} & \begin{tabular}[c]{@{}c@{}}0.963  \\ (0.027) \end{tabular} & \begin{tabular}[c]{@{}c@{}}0.728  \\ (0.031) \end{tabular} &
				\begin{tabular}[c]{@{}c@{}}0.79  \\ (0.029)\end{tabular} & \begin{tabular}[c]{@{}c@{}}0.965 \\ (0.024)\end{tabular} & \begin{tabular}[c]{@{}c@{}}0.689 \\ (0.033) \end{tabular}  \\   \hline
				\multicolumn{10}{|l|}{\ Between $\bm{X}$ and $\bm{Y}$} \\ \hline  
				n=200 & 
				\begin{tabular}[c]{@{}c@{}}0.828  \\ (0.117) \end{tabular} & \begin{tabular}[c]{@{}c@{}}0.865  \\ (0.163) \end{tabular} & \begin{tabular}[c]{@{}c@{}}0.687  \\ (0.096) \end{tabular} & \begin{tabular}[c]{@{}c@{}}0.776  \\ (0.06) \end{tabular} & \begin{tabular}[c]{@{}c@{}}0.942  \\ (0.071) \end{tabular} & \begin{tabular}[c]{@{}c@{}}0.656  \\ (0.077) \end{tabular} &
				\begin{tabular}[c]{@{}c@{}}0.821  \\ (0.107) \end{tabular} & \begin{tabular}[c]{@{}c@{}}0.78   \\ (0.337) \end{tabular} & \begin{tabular}[c]{@{}c@{}}0.574  \\ (0.221) \end{tabular} \\ 
				n=300 & 
				\begin{tabular}[c]{@{}c@{}}0.799  \\ (0.125) \end{tabular} & \begin{tabular}[c]{@{}c@{}}0.966  \\ (0.063) \end{tabular} & \begin{tabular}[c]{@{}c@{}}0.707  \\ (0.084) \end{tabular} & \begin{tabular}[c]{@{}c@{}}0.836  \\ (0.051) \end{tabular} & \begin{tabular}[c]{@{}c@{}}0.969  \\ (0.045) \end{tabular} & \begin{tabular}[c]{@{}c@{}}0.728  \\ (0.062) \end{tabular} &
				\begin{tabular}[c]{@{}c@{}}0.786  \\ (0.056) \end{tabular} & \begin{tabular}[c]{@{}c@{}}0.955  \\ (0.059) \end{tabular} & \begin{tabular}[c]{@{}c@{}}0.678  \\ (0.072) \end{tabular} \\   \hline
				\multicolumn{10}{|l|}{\ Overall} \\ \hline  
				n=200 & 
				\begin{tabular}[c]{@{}c@{}}0.823  \\ (0.029) \end{tabular} & \begin{tabular}[c]{@{}c@{}}0.926  \\ (0.044) \end{tabular} & \begin{tabular}[c]{@{}c@{}}0.702  \\ (0.033) \end{tabular} & \begin{tabular}[c]{@{}c@{}}0.79  \\ (0.026) \end{tabular} & \begin{tabular}[c]{@{}c@{}}0.946  \\ (0.036) \end{tabular} & \begin{tabular}[c]{@{}c@{}}0.677  \\ (0.03) \end{tabular} &
				\begin{tabular}[c]{@{}c@{}}0.82   \\ (0.094) \end{tabular} & \begin{tabular}[c]{@{}c@{}}0.776  \\ (0.331) \end{tabular} & \begin{tabular}[c]{@{}c@{}}0.568  \\ (0.194) \end{tabular} \\ 
				n=300 & 
				\begin{tabular}[c]{@{}c@{}}0.848  \\ (0.026) \end{tabular} & \begin{tabular}[c]{@{}c@{}}0.96  \\ (0.028) \end{tabular} & \begin{tabular}[c]{@{}c@{}}0.746  \\ (0.029) \end{tabular} & \begin{tabular}[c]{@{}c@{}}0.831  \\ (0.023) \end{tabular} & \begin{tabular}[c]{@{}c@{}}0.964  \\ (0.026) \end{tabular} & \begin{tabular}[c]{@{}c@{}}0.729  \\ (0.025) \end{tabular} &
				\begin{tabular}[c]{@{}c@{}}0.79  \\ (0.025) \end{tabular} & \begin{tabular}[c]{@{}c@{}}0.963  \\ (0.025) \end{tabular} & \begin{tabular}[c]{@{}c@{}}0.689  \\ (0.027) \end{tabular} \\  \hline
			\end{tabular}
	}}
	\caption{Averages and standard deviations (in parentheses) of specificity(SPE), sensitivity(SEN),  and F$_1$ score when $p = 25$ and $d=3$. $\bm{X}$ follows the multivariate Normal distribution. }
	\label{tb:6.3.1}
\end{table}


\section{Applications}
\label{sec:application}
\subsection{Isoprenoid Gene Network in Arabidopsis Thaliana}
\label{sec:7.1}
We consider the gene expression data for \textit{Arabidopsis thaliana} introduced by \citeasnoun{wille2004sparse}. 
\textit{Arabidopsis thaliana} is the first plant to have its genome sequenced, and is a popular model in the study of molecular biology and genetics. The dataset contains $n=118$ observations of Affymetrix GeneChip microarrays, in which the expression levels of $795$ genes are recorded. All values are preprocessed by log-transformation and standardization. This data has been analyzed by \citeasnoun{lafferty2012sparse} to explore the structure using the nonparanormal model.  As in \citeasnoun{lafferty2012sparse}, we consider a subset of genes from the isoprenoid pathway \footnote{The dataset was downloaded from \url{https://www.ncbi.nlm.nih.gov/pmc/articles/PMC545783/}. We note that while there were $40$ genes in \citeasnoun{wille2004sparse} and \citeasnoun{lafferty2012sparse}, this dataset contains $39$ only.}.\\
Our goal is to construct a graph using the proposed cSScGG procedure and compare its structure with those from Glasso \cite{friedman2008sparse} and nonparanormal (NPN). Let $\bm{Z}$ be the expression levels of $39$ genes. To apply the cSScGG procedure we first need to identify variables $\bm{X}$ of which the density function may be non-Gaussian. A simple approach is to select elements in $\bm{Z}$ whose marginal distributions are non-Gaussian. We looked at histograms of all $39$ gene expression levels and found $3$ genes (\texttt{MCT}, \texttt{GGPPS6} and \texttt{GGPPS1mt}) with marginal distribution far from Gaussian, as shown in Figure \ref{fig:7.1.1}. Therefore, we set $\bm{X}$ as gene expression levels of \texttt{MCT}, \texttt{GGPPS6}, and \texttt{GGPPS1mt}. We note that marginal distributions of these three genes have bi-/multiple modes, and monotone transformations cannot transfer them into Gaussian random variables. Therefore, the GGM and nonparanormal model may be inappropriate for this data.     \\
\begin{figure}[h!]
	\centering
	\includegraphics[width=0.8\textwidth]{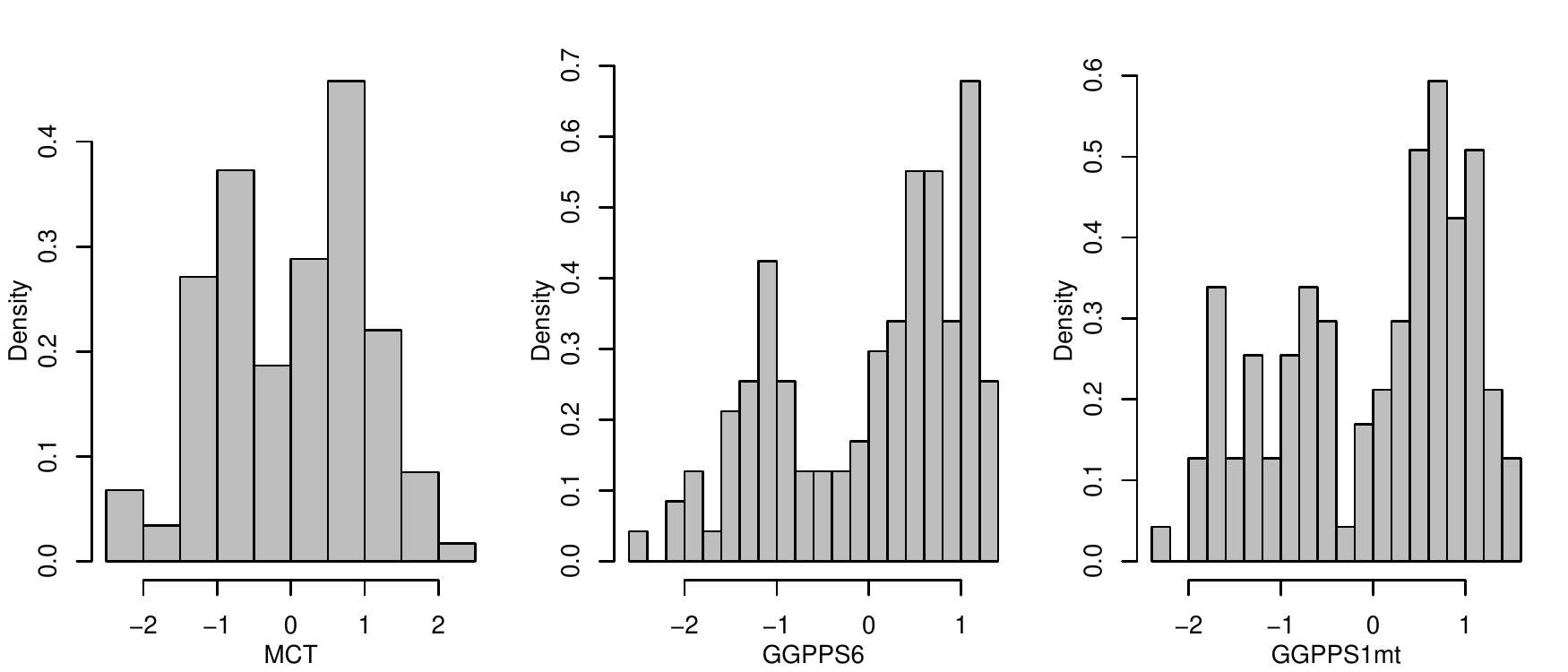}	
	\caption{Histogram of three genes in the gene expression data.}\label{fig:7.1.1}
\end{figure}
As indicated by \citeasnoun{wille2004sparse}, the GGM chosen by the BIC generally leads to a graph that is too dense for biologically relevant researches. Therefore in this study, we construct the graph by limiting the number of edges. Particularly, we tune the regularization parameters in the cSScGG method to fix $|E|=18$.
Results with $|E|=25$ can be found in \citeasnoun{Luothesis}.
Once the cSScGG fit is obtained, we scan the full regularization path of the Glasso estimates, compare the symmetric difference with the cSScGG estimate, and select the graph with smallest symmetric difference value as the Glasso graph. Specifically, the symmetric difference between two graphs is the set of edges which are in either of the graphs but not in their intersection. The same procedure is done for the NPN estimates. We implemented Glasso and NPN with R-packages \texttt{glasso} \cite{friedman2014glasso} and \texttt{huge} \cite{zhao2012huge} respectively.\\
\begin{figure}[h!]
	\centering
	\includegraphics[width=\textwidth]{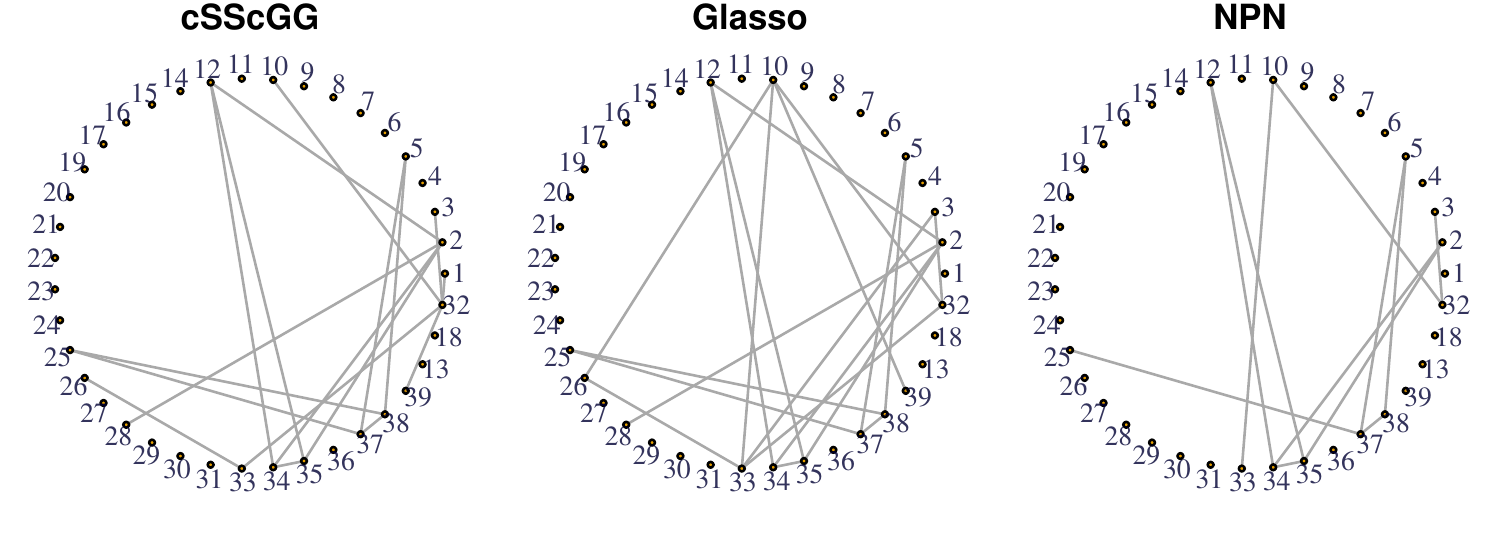}
	\includegraphics[width=0.8\textwidth]{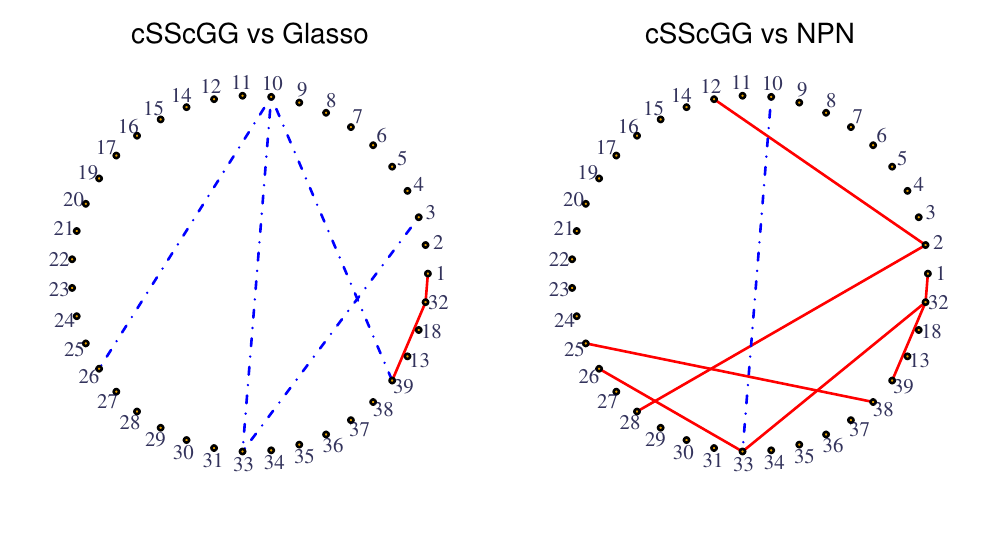}
	\caption{The estimated graph with $18$ edges from the cSScGG (\textit{top left}), the closest Glasso (\textit{top middle}), the closest NPN (\textit{top right}), the symmetric difference between cSScGG and Glasso (\textit{bottom left}), and the symmetric difference between cSScGG and NPN (\textit{bottom right}). Red edges \protect\redline \ in the bottom represent those selected by the cSScGG but not by the Glasso/NPN, blue edges \protect\blueline represent those selected by the Glasso/NPN but not by the cSScGG. Genes \texttt{GGPPS1mt}, \texttt{GGPPS6}, \texttt{MCT} correspond to nodes with numbers $13$, $18$, $32$, respectively.}
	\label{fig:7.1.3}
\end{figure}
Figures \ref{fig:7.1.3} presents graph topologies achieved from each method, along with the corresponding symmetric difference. We refer the symmetric difference between cSScGG and Glasso to as \textit{cSScGG vs Glasso}, and the symmetric difference between cSScGG and NPN to as \textit{cSScGG vs NPN}. Nodes with numbers $13$, $18$, and $32$ correspond to the $3$ non-Gaussian genes \texttt{GGPPS1mt}, \texttt{GGPPS6}, and \texttt{MCT}, respectively.
Although the overall structures of different methods look similar, there are some interesting differences.\\ 
We focus on the
two symmetric difference plots in Figure \ref{fig:7.1.3}. 
Note that red edges are selected by the cSScGG only. Most of these edges are associated with the non-Gaussian nodes, for example, edges \texttt{32-1} and \texttt{32-39}. This indicates that the cSScGG procedure is able to discover new interactions for the non-Gaussian variables. We further look at the red lines that appear only in one of the two symmetric difference plots. It is interesting to see that they all come from the \textit{cSScGG vs NPN} plot, indicating that cSScGG is able to detect some edges selected by Glasso which are not selected by NPN. This is not surprising since the cSScGG method assumes a conditional Gaussian distribution for the parametric component. Finally, we note that as a trade-off for the newly identified interactions, there exists edges that are selected by both Glasso and NPN, but not by cSScGG. For instance, edge \texttt{10-33} with blue dashed line in Figure \ref{fig:7.1.3}.\\
To summarize, in terms of the overall graph structure, the cSScGG procedure is capable of capturing a majority of edges that are detected by the Glasso method. By modeling the distributions of some genes that clearly violate the Gaussian assumption, the proposed method is capable of detecting interactions that are not selected by other methods. These interactions may provide potential research areas for biological study.

\subsection{Conditional Relationship Between Clinical, Laboratory and Dialysis Variables from Hemodialysis Patients}
\label{sec:7.2}
We apply the cSScGG procedure to study the conditional relationships between some clinical, laboratory and dialysis variables collected from hemodialysis patients. All patients who underwent dialysis treatments at the Fresenius Medical Care - North America during 2010-2014 are considered. We include patients who stayed at the same facility throughout the treatments. To avoid large fluctuation in the first year on dialysis, we use the average measurements in the second year on dialysis from patients who survived longer than two years. For homogeneity, we include white, non-diabetic and non-Hispanic patients. After removing missing values, we have $n=2959$ observations (patients) on the following $27$ variables in $3$ categories: \\
\textbf{Clinical variables:} \texttt{age} (years), \texttt{height} (cm), \texttt{weight} (kg), \texttt{bmi} (body mass index, kg/m$^2$), \texttt{sbp} (systolic blood pressure, mmHg), \texttt{dbp} (diastolic blood pressure, mmHg), \texttt{temp} (temperature, Celsius);\\
\textbf{Laboratory variables:} \texttt{albumin} (g/dL), \texttt{ferritin} (ng/mL), \texttt{hgb} (hemoglobin, g/dL), \texttt{lymphocytes} ($\%$), \texttt{neutrophils} ($\%$), \texttt{nlr} (neutrophils to lymphocytes ratio, unitless), \texttt{sna} (serum sodium concentration, mEq/L or mmol/L), \texttt{wbc} (white blood cell, 1000/mc);\\
\textbf{Dialysis variables:} \texttt{qb} (blood flow, mL/min), \texttt{qd} (dialysis flow, mL/min), \texttt{saline} (mL), \texttt{txttime} (treatment time, min), \texttt{olc} (on-line clearance, unitless), \texttt{idwg} (interdialytic weight gain, kg), \texttt{ufv} (ultrafiltration volume, L), \texttt{ufr} (ultrafiltration rate, mL/hr/kg), \texttt{epodose} (erythropoietin dose, unit), \texttt{volume} (L), \texttt{enpcr} (equilibrated normalized protein catabolic rate, g/kg/day), \texttt{ektv} (equilibrated Kt/V, unitless).\\
Note that \texttt{nlr} and \texttt{epodose} have been transformed to make them close to Gaussian. In particular, \texttt{nlr} equals the logarithm of the neutrophils to lymphocytes ratio, and \texttt{epodose} represents the 1/4 power transformation of the actual erythropoietin dose.\\
\begin{figure}[h!]
	\centering
	\includegraphics[width=\textwidth]{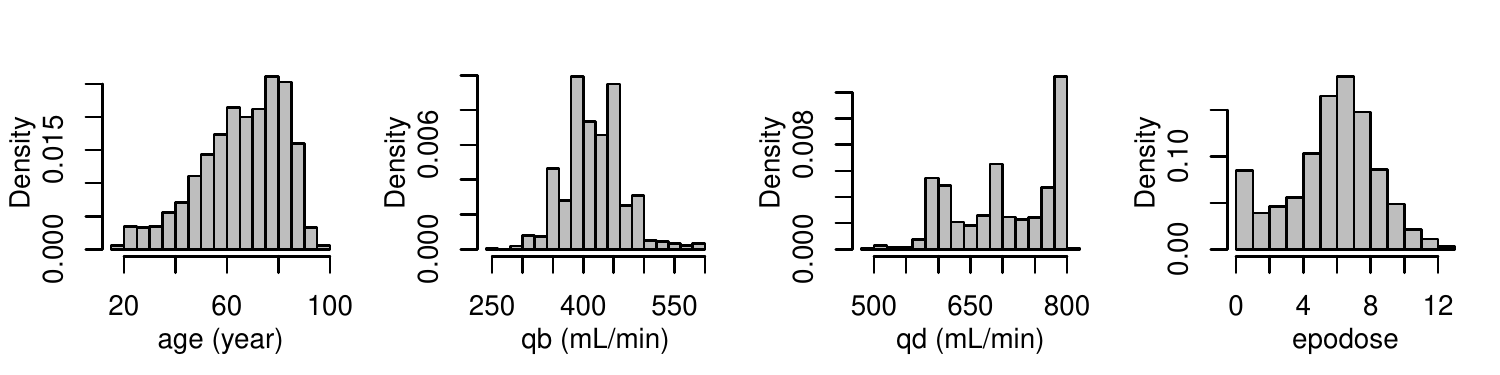}
	\caption{Histograms of \texttt{age}, \texttt{qb}, \texttt{qd} and \texttt{epodose}.}
	\label{fig:7.1}
\end{figure}
The primary objective of this study is to discover the interactions between all these measurements. We first check the marginal distributions of all $27$ variables to investigate possible violation of the Gaussian assumption. We identify $4$ variables, \texttt{age}, \texttt{qb}, \texttt{qd} and \texttt{epodose} as non-Gaussian with very small p-values (less than $2\times 10^{-16}$). Histograms in Figure \ref{fig:7.1} indicate that the distribution of \texttt{age} is skewed, and the distributions of \texttt{qb}, \texttt{qd} and \texttt{epodose} have multiple peaks. Note that despite the $1/4$ power transformation, the distribution of \texttt{epodose} is still far from normal due to the point mass at zero. Consequently, we specify these $4$ variables as $\bm{X}$ to be estimated nonparametrically in the proposed cSScGG procedure.\\
We compare the cSScGG procedure with Glasso and NPN. For the NPN method, we use the shrunken ECDF to transform the data first, then apply Glasso to the transformed data. For each method, we tune the regularization parameters by BIC. The estimated graph structures are shown in Figure \ref{fig:7.2}.\\
From the visual inspection, there is a large set of edges shared by cSScGG and Glasso, which is due to the fact that cSScGG assumes majority of the variables are conditionally normal. However, the graph of Glasso is much denser. 
To see how cSScGG differs from other two methods, Figure \ref{fig:7.3} shows edges detected by the cSScGG procedure only. It shows that the \texttt{bmi} is a hub node whose connections with other variables such as \texttt{age}, \texttt{dbp}, and \texttt{wbc} are not selected by other methods. Meanwhile, \texttt{qb} has multiple connections with nodes from the other two categories (Clinical and Laboratory).
The value of these extra edges remains to be further explored from a clinical standpoint. 
We do not intend to claim that the graph obtained by the cSScGG procedure is the best as the underlying truth is unknown. Instead, with different model assumptions, the cSScGG procedure can identify potential links for further study.  

\begin{figure}[h!]
	\centering
	\includegraphics[width=1.05\textwidth]{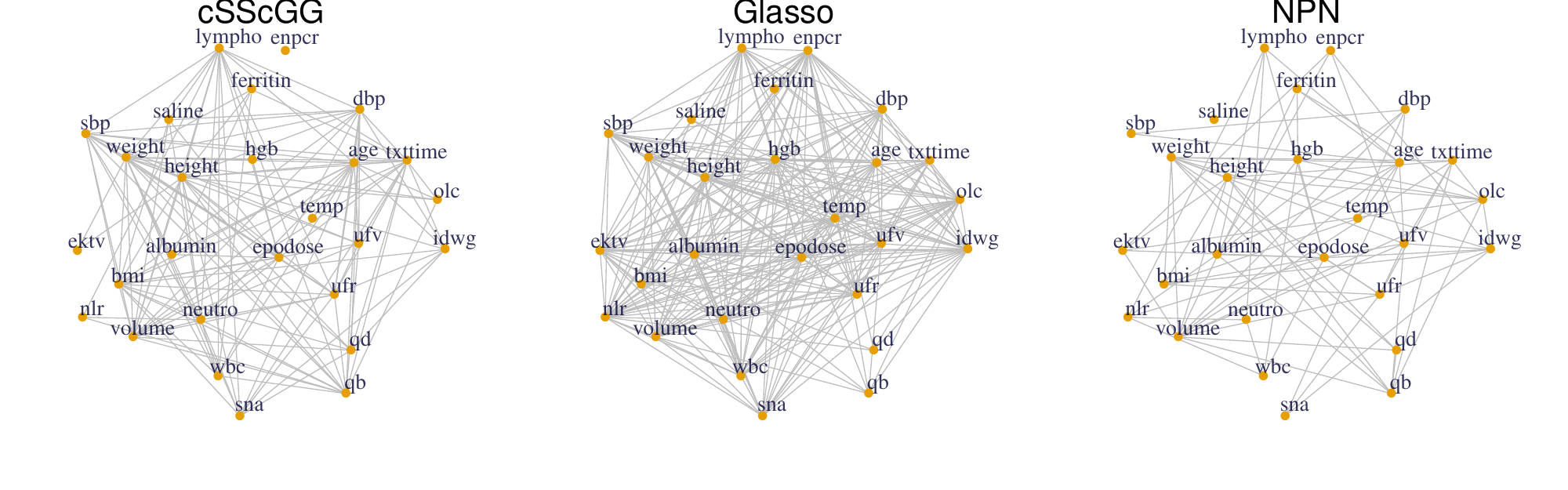}
	\caption{The estimated graphs using cSScGG (\textit{left}), Glasso (\textit{middle}), and NPN (\textit{right}). Tuning parameters are selected by the BIC method. Layout of nodes are fixed across four topologies.}
	\label{fig:7.2}
\end{figure}

\begin{figure}[t]
	\centering
	\includegraphics[width=0.4\textwidth]{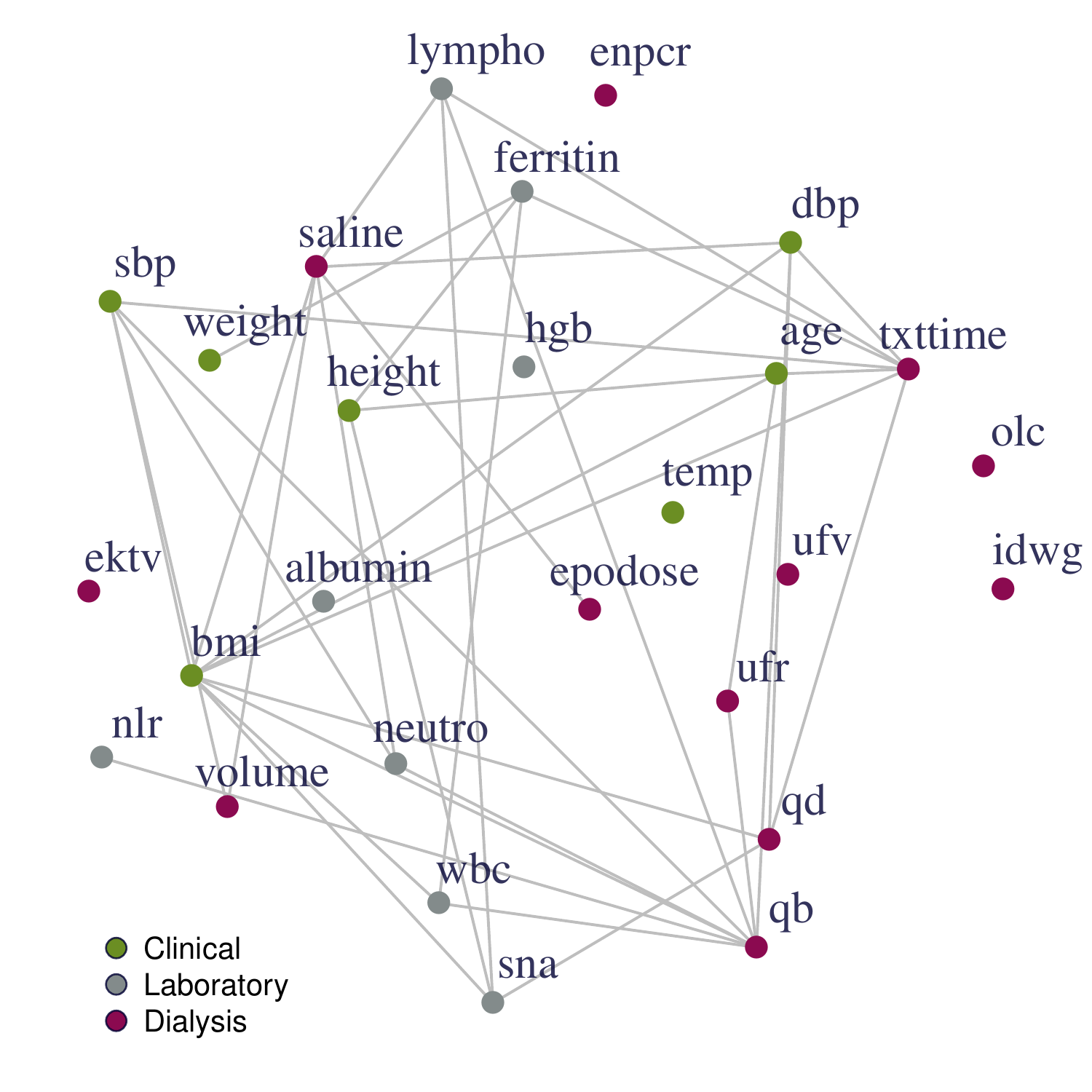}
	\caption{Edges exclusively detected by cSScGG. The measurements are grouped into three categories as described at beginning of Section \ref{sec:7.2}.}
	\label{fig:7.3}
\end{figure}


\newpage
\bibliographystyle{dcu}
\citationstyle{dcu}
\bibliography{reference.bib} 

\newcommand{\Appendix}{\appendix\def\thesection{Appendix~\Alph{section}}
\def\thesubsection{\Alph{section}.\arabic{subsection}}
}

\begin{appendix}
\Appendix    

\renewcommand{\theequation}{A.\arabic{equation}}
\renewcommand{\thesubsection}{A.\arabic{subsection}}
\setcounter{equation}{0}

\section{Derivation of the LOOKL}  \label{appendix:LOOKL}

Our derivation is similar to that in \citeasnoun{lian2011shrinkage} and 
\citeasnoun{vujavcic2015computationally} with adjustments to deal with 
complications brought by the conditional mean $-\Lambda\inv\Theta^T\bx$ 
and two tuning parameters. Recall that a cGGM assumes that 
$\bm{Y}|\bm{X}=\bx \sim \mathcal{N}(-\Lambda\inv\Theta^T\bx,\Lambda\inv)$. 
The log-likelihood based on the $k$-th observation $\bm{X}_k$ and 
$\bm{Y}_k$ is (ignoring constant terms)
\begin{equation}
\tilde{l}_k(\Lambda, \Theta) = \frac{1}{2} \left\{ \log|\Lambda| - 
\tr(\syk\bl + 2 S^T_{xy,k}\bt + \bl\inv\bt^T S^T_{xx,k}\bt) \right\},
\label{eq:3.2.0}
\end{equation} 
where $\syk = \bm{Y}_k^T \bm{Y}_k$, $\sxyk = \bm{X}_k^T \bm{Y}_k$, and 
$\sxk = \bm{X}_k^T \bm{X}_k$ are the empirical variance/covariance matrices. 
Note that $\sy = n^{-1}\sum_{k=1}^{n}\syk$, $\sx = n^{-1}\sum_{k=1}^{n}\sxk$, 
and $\sxy = n^{-1}\sum_{k=1}^{n}\sxyk$.

Let $\hlk$ and $\htk$ be the estimates of $\Lambda$ and $\Theta$ based 
on the data excluding the $k$-th observation.
Directly calculating leave-one-out estimate of the KL distance
is computationally costly. We now derive a score based on the fact that 
cross-validating the log-likelihood provides an estimate of the KL 
distance \cite{yanagihara2006bias}. 

Consider the following function of five variables
$f(\sx,\sy,\sxy,\Lambda,\Theta) = 
\log|\Lambda| - \tr(\sy\Lambda + 2 S^T_{xy}\Theta + \Lambda\inv\Theta^T S^T_{xx}\Theta)$.
We have the identity
$\sum_{k=1}^{n} f(S_{xx,k},S_{yy,k},S_{xy,k},\Lambda,\Theta) = n f(\sx,\sy,\sxy,\Lambda,\Theta)$.
Letting $\bm{S} = (\sx, \sy, \sxy)$ and $\bm{S}_k = (S_{xx,k},S_{yy,k},S_{xy,k})$, 
we denote $f(\sx,\sy,\sxy,\Lambda,\Theta)$ 
and \\
$f(\sxk,\syk,\sxyk,\Lambda,\Theta)$ as $f(\bm{S},\Lambda,\Theta)$ and 
$f(\bm{S}_k,\Lambda,\Theta)$ in the rest of the derivation. The leave-one-out
cross validation score \cite{yanagihara2006bias} 
\begin{align}
&~\text{LOOCV} = -\frac{1}{n}\sum_{k=1}^{n} \tilde{l}_k(\hlk, \htk) 
= -\frac{1}{2n}\sum_{k=1}^{n}f(\bm{S}_k,\hlk,\htk) \notag \\
=&  -\frac{1}{2} f(\bm{S},\hll,\htt)  -\frac{1}{2n}\sum_{k=1}^{n} \{ f(\bm{S}_k,\hlk,\htk) - f(\bm{S}_k,\hll,\htt) \}
\notag\\
\approx& -\frac{1}{n} l_2(\hll, \htt) -
\frac{1}{2n}\sum_{k=1}^{n}\Big\{ \big(\frac{\partial f(\bm{S}_k,\hll,\htt) }
{\partial \Lambda}\big)^T \vecc({\hlk - \hll}) 
+ \big(\frac{\partial f(\bm{S}_k,\hll,\htt) }
{\partial \Theta}\big)^T \vecc({\htk - \htt} )\Big\},
\label{eq:3.2.2}
\end{align}
where $\partial f(\bm{S}_{k},\hat{\Lambda},\hat{\Theta}) /\partial \Lambda = \partial f(\bm{S}_{k},\hat{\Lambda},\hat{\Theta}) /\partial \vecc(\Lambda)$ 
and 
$\partial f(\bm{S}_{k},\hat{\Lambda},\hat{\Theta}) /\partial \Theta = \partial f(\bm{S}_{k},\hat{\Lambda},\hat{\Theta}) / \partial \vecc(\Theta)$ 
are $p^2$ and $pd$ dimensional column vectors of partial derivatives
given by 
\begin{eqnarray}
\bu_k &\triangleq& \frac{\partial f(\bm{S}_k,\hll,\htt) }{\partial \Lambda} 
= \vecc(\Lambda\inv - \syk + \Lambda\inv\hat{\Theta}^T\sxk\htt\Lambda\inv),
\label{eq:3.2.3} \\
\bw_k &\triangleq& \frac{\partial f(\bm{S}_k,\hll,\htt) }{\partial \Theta} 
= \vecc(-2\sxyk - 2\sxk\Theta\hll\inv).
\label{eq:3.2.4}
\end{eqnarray}
Denoting $\bm{S}^{(-k)}$ as the version of $\bm{S}$ without the $k$-th observation, 
the Taylor expansions of the functions 
$\partial f(\bm{S}^{(-k)},  \hat{\Lambda}^{(-k)}, \hat{\Theta}^{(-k)}) / \partial \Lambda$ 
and 
$\partial f(\bm{S}^{(-k)},  \hat{\Lambda}^{(-k)}, \hat{\Theta}^{(-k)}) /\partial \Theta$ 
at the point $(\bm{S},\hll,\htt)$ are 
\begin{equation}
\begin{aligned}
\bm{0}_{p^2} &=  \frac{\partial f(\bm{S}^{(-k)},\hlk,\htk) }{\partial \Lambda}\\
&\approx \frac{\partial f(\bm{S},\hll,\htt) }{\partial \Lambda} +  
\frac{\partial^2 f(\bm{S},\hll,\htt) }{\partial \Lambda^2} \vecc({\hlk - \hll}) +  
\frac{\partial^2 f(\bm{S},\hll,\htt) }{\partial \Lambda\partial\Theta} \vecc({\htk - \htt})\\
&+ \frac{\partial^2 f(\bm{S},\hll,\htt) }{\partial \Lambda\partial\sx} \vecc({\sxkk - \sx}) + \frac{\partial^2 f(\bm{S},\hll,\htt) }{\partial \Lambda\partial\sy} \vecc({\sykk - \sy})\\
& + \frac{\partial^2 f(\bm{S},\hll,\htt) }{\partial \Lambda\partial\sxy} \vecc({\sxykk - \sxy}),
\end{aligned}
\label{eq:3.2.5}
\end{equation}
and
\begin{equation}
\begin{aligned}
\bm{0}_{pd} &=  \frac{\partial f(\bm{S}^{(-k)},\hlk,\htk) }{\partial \Theta} \\
&\approx \frac{\partial f(\bm{S},\hll,\htt) }{\partial \Theta} +  
\frac{\partial^2 f(\bm{S},\hll,\htt) }{\partial \Theta^2} \vecc({\htk - \htt}) +  
\frac{\partial^2 f(\bm{S},\hll,\htt) }{\partial \Theta\partial\Lambda} \vecc({\hlk - \hll})\\
&+ \frac{\partial^2 f(\bm{S},\hll,\htt) }{\partial \Theta\partial\sx} \vecc({\sxkk - \sx}) + 
\frac{\partial^2 f(\bm{S},\hll,\htt) }{\partial \Theta\partial\sxy} \vecc({\sxykk - \sxy})\\
&+ \frac{\partial^2 f(\bm{S},\hll,\htt) }{\partial \Theta\partial\sy} \vecc({\sykk - \sy}),
\label{eq:3.2.6}
\end{aligned}
\end{equation}
where $\partial^2 f(\bm{S},\Lambda,\Theta)/\partial \Lambda^2 = 
(\partial f(\bm{S},\Lambda,\Theta)/\partial \vecc(\Lambda))/\partial\vecc(\Lambda) $ 
is the $p^2\times p^2$ Hessian matrix, 
$\partial f(\bm{S},\hll,\htt) /\partial \Lambda$ and 
$\partial f(\bm{S},\hll,\htt) /\partial \Theta$ denote partial 
derivative evaluated at $\hat{\Lambda}$ and $\hat{\Theta}$, 
and other second order derivatives are defined similarly. 
Note that $\partial f(\bm{S},\hll,\htt) /\partial \Lambda = \bm{0}$ and 
$\partial f(\bm{S},\hll,\htt) /\partial \Theta = \bm{0}$ because 
$\hat{\Lambda}$ and $\hat{\Theta}$ are the maximum likelihood estimators, 
$\partial^2 f(\bm{S},\hll,\htt)/\partial \Lambda\partial\sxy=\bm{0}$ 
because (\ref{eq:3.2.3}) is free of $\sxy$, and 
$\partial^2 f(\bm{S},\hll,\htt) /\partial \Theta\partial\sy=\bm{0}$ because 
(\ref{eq:3.2.4}) is free of $\sy$.
Let
$A = \partial^2 f(\bm{S},\hll,\htt)/\partial \Theta^2 = -2 \hll\inv\otimes\sx$, 
$B = \partial^2 f(\bm{S},\hll,\htt)/\partial\Theta\partial\Lambda 
= 2\hll\inv\otimes\sx\htt\hll\inv$, 
$C = \partial^2 f(\bm{S},\hll,\htt)/\partial \Lambda^2 
= -\hll\inv\otimes(\hll\inv + 2\hll\inv\hat{\Theta}^T\sx\htt\hll\inv)$, 
$D = \partial^2 f(\bm{S},\hll,\htt) /\partial\Theta\partial\sx = 
-2 \hll\inv\hat{\Theta}^T\otimes I_{d\times d}$, and
$E = \partial^2 f(\bm{S},\hll,\htt)/\partial\Lambda\partial\sx = 
\hll\inv\hat{\Theta}^T\otimes\hll\inv\hat{\Theta}^T$.
Solving (\ref{eq:3.2.5}) and (\ref{eq:3.2.6}) and plugging solutions
into (\ref{eq:3.2.2}), we have 
\begin{eqnarray}
&&\text{LOOKL}(\lambda_2, \lambda_3) \nonumber \\
&=& -\frac{1}{n}l_2(\hll, \htt) + \frac{1}{2n}
\sum_{k=1}^{n}\Big\{ \bu_k^T (-C + B^T A\inv B)\inv 
\big [ (-E + B^T A\inv D)  \bv_{xx,k}
+ 2B^T A\inv \bv_{xy,k} - \bv_{yy,k} \big ] \nonumber \\ 
&& + \ \bw_k^T (-A+BC\inv B^T)\inv 
\big [ (D-BC\inv E) \bv_{xx,k} + BC\inv \bv_{yy,k}
-2 \bv_{xy,k} \big ] \Big\}.
\label{eq:3.2.10}
\end{eqnarray}

For the Gaussian graphical model with 
$\bm{Y}\sim\mathcal{N}(\bm{0},\Lambda^{-1})$, (\ref{eq:3.2.10}) reduces to
\begin{equation*}
-\frac{1}{n} l_2(\hll) + \frac{1}{2n}\sum_{k=1}^{n}\vecc(\Lambda\inv - \syk)^T(\hll\otimes\hll)\vecc(\sykk - \sy),
\end{equation*}
which is the same as the GACV in \citeasnoun{lian2011shrinkage} and KLCV in
\citeasnoun{vujavcic2015computationally}.

\section{Calculation of the Projection Ratio} \label{appendix:projection}

Letting $\hat{\zeta}(\bx) = \hat{\Delta}(\bx) + \hat{\eta}(\bx)$, 
we construct the ratio 
$\tilde{V}(\hat{\zeta} - \tilde{\zeta})/\tilde{V}(\hat{\zeta} - \eta_u)$
where $\tilde{\zeta}$ denotes the squared error projection of 
$\hat{\zeta}$ in $\mathcal{S}^0$. A small ratio indicates that 
$\mathcal{S}^1$ may be removed. By definition, 
\begin{align}
\tilde{V}(\hat{\zeta} - \eta_u)
&=   \intx (\etah + \hat{\Delta} - \etau)^2\rhox\dx - 
\{ \intx(\etah + \hat{\Delta} - \etau)\rhox\dx \}^2 \notag\\
& \triangleq \tilde{V}(\etah - \etau) + \tilde{V}(\hat{\Delta},\hat{\Delta}) + 
2 \tilde{V} (\hat{\eta}-\eta_u,\hat{\Delta}).
\label{eq:4.1.7}
\end{align}

To obtain $\tilde{V}(\hat{\zeta} - \tilde{\zeta})$, one needs to find
\begin{equation}
\tilde{\zeta} =  \argminA_{\zeta =\eta+\hat{\Delta}, \eta \in \mathcal{S}^0} \Big\{ \intx(\hat{\eta} + \hat{\Delta} - \eta)^2(\bx)\rhox\dx - \{\intx (\hat{\eta} + \hat{\Delta} - \eta)(\bx)\rhox\dx\}^2 \Big\}.
\label{eq:4.1.8}
\end{equation}
Let $\mathcal{S}^0=\mathcal{H}^0\oplus\mathcal{H}^1$, where $\mathcal{H}_0$ is a space spanned by known functions
$\{\varphi_1(\bx), \cdots, \varphi_m(\bx)\}$ and $\mathcal{H}_1$ is the orthogonal reproducing kernel Hilbert space with the reproducing kernel
function $R(\cdot,\cdot)$. Let $\bphi=\big(\varphi_i({\bf{X}}_j)\big)_{i=1,\cdots, m}^{j=1,\cdots,n}$ and 
$\bxi=\big(R({\bf{X}}_i,{\bf{X}}_j)\big)_{i=1,\cdots, n}^{j=1,\cdots,n}$.
Let $\tilde{\zeta} = \bphi\trans\tilde{\bdd} + \bxi\trans \tilde{\bcc}$, 
take derivatives with respect to $\tilde{\bdd}$ and $\tilde{\bcc}$, 
and set them to zero. After rearrangements, we obtain the equation
\begin{equation}
\begin{aligned}
&\begin{bmatrix}
\tilde{V}(\bphi,\bphi) &  \tilde{V}(\bphi,\bxi)\\
\tilde{V}(\bxi,\bphi)  &  \tilde{V}(\bxi,\bxi)
\end{bmatrix}
\begin{bmatrix}
\tilde{\bdd} \\ \tilde{\bcc}
\end{bmatrix}
= 
&\begin{bmatrix}
\tilde{V}(\etah+\hat{\Delta},\bphi)\\
\tilde{V}(\etah+\hat{\Delta},\bxi)
\end{bmatrix},
\end{aligned}
\label{eq:4.1.9}
\end{equation}
where $\tilde{V}(\bm{a},\bm{b})={\{ \tilde{V}(a_i,b_j) \}_{i=1}^I}_{j=1}^J$
for any vectors of functions $\bm{a}=(a_1,\ldots,a_I)^T$ and $\bm{b}=(b_1,\ldots,b_J)^T$. 

The right hand side of (\ref{eq:4.1.9}) contains some extra components 
involving $\hat{\Delta}$. We compute solutions to (\ref{eq:4.1.9}) using the Cholesky 
decomposition implemented in the $\texttt{project()}$ function in the 
R package \texttt{gss} \cite{gu2014smoothing}.
Once $\tilde{\zeta}$ is computed, we have 
\begin{equation}
\tilde{V}(\hat{\zeta} - \tilde{\zeta}) = 
\intx(\hat{\zeta} - \tilde{\zeta})^2(\bx)\rhox\dx - 
\big\{\intx (\hat{\zeta} - \tilde{\zeta})(\bx)\rhox\dx\big \}^2 
= \tilde{V}(\hat{\zeta},\hat{\zeta}) + \tilde{V}(\tilde{\zeta},\tilde{\zeta}) 
- 2\tilde{V}(\tilde{\zeta},\hat{\zeta}).
\end{equation}

\section{Proofs of Theoretical Results}  \label{appendix:proofs}

To prove Theorem \ref{thm:1}, 
we first introduce a sequence of lemmas as in \citeasnoun{wytock2013sparse}.
Note that, different from \citeasnoun{wytock2013sparse}, we allow 
different penalties for $\Lambda$ and $\Theta$. 
Lemma \ref{lemma1} below studies the decay rate of the gradients 
$\nabla_\Theta l_2(\Lambda_0,\Theta_0)$ and 
$\nabla_\Lambda l_2(\Lambda_0,\Theta_0)$ in element-wise infinity operator 
norm as sample size increases.
\begin{lemma}
\label{lemma1}
Suppose that the Assumption 1 holds. Then 
\begin{eqnarray}
\mathbb{P}(\norm{\nabla_\Theta l_2(\Lambda_0,\Theta_0)}_\infty > \vartheta) 
&\leq& 2dp\exp\Big\{ -\frac{n\vartheta^2}{8\ccs\cx^2} \Big\}, 
\label{eq:lemma1.1} \\
\mathbb{P}(\norm{\nabla_\Lambda l_2(\Lambda_0,\Theta_0)}_\infty > \vartheta) 
&\leq& 4p^2\exp\Big\{ -\frac{n\vartheta^2}{3200\ccs} \Big\},
\label{eq:lemma1.2}
\end{eqnarray}
for any $\vartheta\in (0,40\cc)$. 
\end{lemma}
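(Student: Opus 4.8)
The plan is to reduce both inequalities to tail bounds for simple functionals of the Gaussian noise in $\bm{Y}\mid\bm{X}$. Throughout, $X$ is held fixed and every probability is taken over the conditional law of $\bm{Y}\mid\bm{X}$, so the realized quantities $\sx$ and $\cx$ are treated as constants. Under Assumption~\ref{asmp:0} the residuals $\bm{W}_i := \bm{Y}_i + \Sigma_0\Theta_0^T\bm{X}_i$ are i.i.d.\ $\mathcal{N}(\bm{0},\Sigma_0)$ and independent of $X$. Writing $\bm{m}_i := -\Sigma_0\Theta_0^T\bm{X}_i$ and substituting $\bm{Y}_i=\bm{m}_i+\bm{W}_i$ into $\sxy$ and $\sy$, the gradient formulas $\nabla_\Theta l_2 = 2\sxy+2\sx\Theta\Lambda^{-1}$ and $\nabla_\Lambda l_2 = -\Lambda^{-1}+\sy-\Lambda^{-1}\Theta^T\sx\Theta\Lambda^{-1}$, evaluated at $(\Lambda_0,\Theta_0)$, collapse to
\begin{equation*}
\nabla_\Theta l_2(\Lambda_0,\Theta_0) = \frac{2}{n}\sum_{i=1}^n\bm{X}_i\bm{W}_i^T, \qquad \nabla_\Lambda l_2(\Lambda_0,\Theta_0) = \Big(\frac1n\sum_{i=1}^n\bm{W}_i\bm{W}_i^T-\Sigma_0\Big) + \frac1n\sum_{i=1}^n\big(\bm{m}_i\bm{W}_i^T+\bm{W}_i\bm{m}_i^T\big),
\end{equation*}
and both have conditional mean zero, as they must since $(\Lambda_0,\Theta_0)$ minimizes the expected log-likelihood.

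For \eqref{eq:lemma1.1}: the $(a,b)$ entry of $\nabla_\Theta l_2(\Lambda_0,\Theta_0)$ is $\frac2n\sum_i X_{ia}W_{ib}$, which conditionally on $X$ is a centered Gaussian with variance $\frac{4}{n}(\sx)_{aa}\Sigma_{0,bb}\le\frac{4}{n}\ccs\cx^2$, using $(\sx)_{aa}=\norm{\bm{X}^a}_2^2/n\le\cx^2$, $\Sigma_{0,bb}\le\cc$, and (without loss of generality) $\cc\ge1$. The elementary Gaussian tail bound together with a union bound over the $dp$ entries yields \eqref{eq:lemma1.1}.

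For \eqref{eq:lemma1.2}: split the $(a,b)$ entry of $\nabla_\Lambda l_2(\Lambda_0,\Theta_0)$ as $T^{(1)}_{ab}+T^{(2)}_{ab}$ with $T^{(1)}_{ab}=\frac1n\sum_i(W_{ia}W_{ib}-\Sigma_{0,ab})$ and $T^{(2)}_{ab}=\frac1n\sum_i(m_{ia}W_{ib}+m_{ib}W_{ia})$. The term $T^{(1)}_{ab}$ is an average of centered products of jointly Gaussian variables, hence sub-exponential; the standard concentration bound for such quantities (the same device underpinning the Gaussian-graphical-model analysis in \citeasnoun{wytock2013sparse}) gives $\mathbb{P}(|T^{(1)}_{ab}|>\vartheta)\le 4\exp\{-n\vartheta^2/(3200\,\ccs)\}$ for every $\vartheta\in(0,40\cc)$ --- the restriction on $\vartheta$ being exactly what keeps the sub-exponential tail in its sub-Gaussian regime --- and a union bound over the $p^2$ entries produces the right-hand side of \eqref{eq:lemma1.2}. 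For the cross term, the exact identity
\begin{equation*}
\frac1n\sum_{i=1}^n\big(\bm{m}_i\bm{W}_i^T+\bm{W}_i\bm{m}_i^T\big) = -\tfrac12\Sigma_0\Theta_0^T\,\nabla_\Theta l_2(\Lambda_0,\Theta_0) - \tfrac12\big(\nabla_\Theta l_2(\Lambda_0,\Theta_0)\big)^T\Theta_0\Sigma_0
\end{equation*}
shows that $T^{(2)}$ is a fixed linear image of $\nabla_\Theta l_2(\Lambda_0,\Theta_0)$, so it is controlled entry-wise by \eqref{eq:lemma1.1} together with crude bounds on $\vertiii{\Sigma_0\Theta_0^T}_\infty$; this is the only place the conditional mean intervenes, and it can be carried alongside \eqref{eq:lemma1.2} without affecting the conclusion of Theorem~\ref{thm:1}.

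The main obstacle is the cross term $T^{(2)}$, which has no counterpart in the ordinary Gaussian graphical model and is precisely the complication created by the nontrivial conditional mean $-\Sigma_0\Theta_0^T\bm{x}$. One must check that, conditionally on $X$, it is genuinely a linear (hence Gaussian) functional of the noise --- so that its fluctuations do not become entangled with the quadratic, Wishart-type fluctuations of $T^{(1)}$ --- and then verify that its contribution to $\norm{\nabla_\Lambda l_2(\Lambda_0,\Theta_0)}_\infty$ only rescales universal constants, rather than forcing extra dependence on $d$, $C_\Sigma$ or $C_\Theta$ into the rate and the validity radius $40\cc$. A secondary but pervasive point is to carry out every step conditionally on $X$, so that the random column norms entering $\cx$ are treated as constants and need no separate concentration argument.
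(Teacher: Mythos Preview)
Your treatment of \eqref{eq:lemma1.1} is correct and coincides with the paper's: the paper's proof of that inequality is the single sentence ``\citeasnoun{wytock2013sparse} proved \eqref{eq:lemma1.1} using the Chernoff bound for the Gaussian tail probability,'' which is exactly your Gaussian-tail-plus-union-bound argument. (The step ``without loss of generality $\cc\ge1$'' is not really a WLOG --- it is what is needed to pass from the actual variance bound $(4/n)\cx^2\cc$ to the stated denominator $8\ccs\cx^2$ --- but this is cosmetic.)

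For \eqref{eq:lemma1.2} the paper's entire proof is the sentence ``\citeasnoun{ravikumar2011high} proved \eqref{eq:lemma1.2} in their Lemma~1.'' That lemma is the sample-covariance concentration for i.i.d.\ centered Gaussians, i.e., precisely your $T^{(1)}$ term, and it says nothing about a cross term. So you have actually gone further than the paper: your decomposition $\nabla_\Lambda l_2(\Lambda_0,\Theta_0)=T^{(1)}+T^{(2)}$ is correct, your identity expressing $T^{(2)}$ as $-\tfrac12\Sigma_0\Theta_0^T\nabla_\Theta l_2(\Lambda_0,\Theta_0)$ plus its transpose is correct, and your diagnosis that $T^{(2)}$ is the genuinely new obstacle created by the nonzero conditional mean is on point. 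The paper's citation simply does not address it.

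That said, your handling of $T^{(2)}$ does not establish \eqref{eq:lemma1.2} as literally stated. Bounding $\norm{T^{(2)}}_\infty$ through $\vertiii{\Sigma_0\Theta_0^T}_\infty\,\norm{\nabla_\Theta l_2(\Lambda_0,\Theta_0)}_\infty$ injects dependence on $C_\Sigma$, $C_\Theta$, $\gamma$ and $\cx$ into the tail, so the clean constant $3200\,\ccs$ does not survive. Your fallback --- that $T^{(2)}$ can be ``carried alongside'' into Theorem~\ref{thm:1} without changing the rate --- is correct, since $\norm{\nabla_\Theta l_2}_\infty$ already enters that proof through $\bar\delta_{f_\Gamma}$ and the extra factor only inflates the constants $C_{2,1},C_{2,2}$. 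But that is a repair to the downstream argument, not a proof of the lemma as written; it is a looseness you share with the paper's own one-line proof rather than a flaw peculiar to your approach.
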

\begin{proof}
\citeasnoun{wytock2013sparse} proved (\ref{eq:lemma1.1}) using the 
Chernoff bound for the Gaussian tail probability. 
\citeasnoun{ravikumar2011high} proved (\ref{eq:lemma1.2}) in their Lemma 1.
\end{proof}

The next lemma extends the primal-dual witness approach proposed in 
\citeasnoun{wainwright2009sharp} to our multi-penalties setting.
Let $\Gamma  = (\Lambda^{T}, \Theta^{T})^T$. With a bit abuse of notation, let $l_2(\Gamma)=l_2(\Lambda,\Theta)$.

\begin{lemma}
	Suppose that the true parameter $\Gamma_0$ has support $S$. We consider two optimization problems:
	\begin{equation}
	\hat{\Gamma} = \argminA_{\Gamma} \Big\{l_2(\Gamma) + \lambda(\norm{\Lambda}_1 + r\norm{  \Theta}_1 )\Big\},
	\label{eq:l1}
	\end{equation}
	\begin{equation}
	\tilde{\Gamma} = \argminA_{\Gamma,\Gamma_{\bar{S}=0}} \Big\{ l_2(\Gamma) + \lambda(\norm{\Lambda}_1 + r\norm{  \Theta}_1 )\Big\}.
	\label{eq:l5}
	\end{equation}
	Let $\Delta = \tilde{\Gamma}-\Gamma_0$ and $R(\Delta) = \nabla_\Gamma^2 l_2(\Gamma_0)\Delta + \nabla_\Gamma l_2(\Gamma_0) - \nabla_\Gamma l_2(\tilde{\Gamma})$. If the following conditions hold,
	\begin{enumerate}
		\item the solution $\tilde{\Gamma}$ is unique;
		\item $\vertiii{\big(\nabla^2_\Gamma l_2(\Gamma_0)\big)_{\bar{S}S} \big(\nabla^2_\Gamma l_2(\Gamma_0)\big)_{SS}^{-1}}_{\infty} < 1-\alpha$ for $0<\alpha<1$;
		\item $\max\{ \norm{\nabla_\Gamma l_2(\Gamma_0)}_\infty, \norm{R(\Delta)}_\infty \} \leq \frac{\alpha  \lambda}{8}$;
	\end{enumerate}
	then the two $\ell_1$-regularized solutions are identical, $\tilde{\Gamma}=\hat{\Gamma}$.
	\label{lemma2}
\end{lemma}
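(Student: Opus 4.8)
The plan is to carry out the primal--dual witness (PDW) construction of \citeasnoun{wainwright2009sharp}, adapted to the weighted regularizer $\lambda(\norm{\Lambda}_1+r\norm{\Theta}_1)$ with $r\le1$. Take $\tilde\Gamma$ to be the support-restricted minimizer defined in \eqref{eq:l5}, which is unique by condition~(1). Since $l_2$ contains $-\log|\Lambda|$, its effective domain is the open set $\{\Lambda\succ0\}$, so the positive-definiteness constraint is inactive at $\tilde\Gamma$ and the optimality conditions for \eqref{eq:l5} are the unconstrained ones: there is a subgradient $\tilde Z$ of $\norm{\Lambda}_1+r\norm{\Theta}_1$ --- with $\Lambda$-block entries in $[-1,1]$ (equal to $\mathrm{sign}$ on the support) and $\Theta$-block entries in $[-r,r]$ --- such that $\bigl(\nabla_\Gamma l_2(\tilde\Gamma)\bigr)_S+\lambda\tilde Z_S=\bm{0}$. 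I then \emph{define} the off-support dual variable by $\tilde Z_{\bar S}:=-\lambda^{-1}\bigl(\nabla_\Gamma l_2(\tilde\Gamma)\bigr)_{\bar S}$, so that $(\tilde\Gamma,\tilde Z)$ solves the full stationarity system of the unrestricted problem \eqref{eq:l1}. Everything then reduces to showing that $\tilde Z_{\bar S}$ is a \emph{strictly} feasible subgradient, i.e.\ its $\Lambda$-block entries have magnitude $<1$ and its $\Theta$-block entries have magnitude $<r$.

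To bound $\tilde Z_{\bar S}$ I linearize around $\Gamma_0$. With $H=\nabla_\Gamma^2 l_2(\Gamma_0)$, $W=\nabla_\Gamma l_2(\Gamma_0)$ and $\Delta=\tilde\Gamma-\Gamma_0$, the definition of $R(\Delta)$ gives $\nabla_\Gamma l_2(\tilde\Gamma)=W+H\Delta-R(\Delta)$. Because $\Delta_{\bar S}=\bm{0}$, the $S$-block of the restricted optimality condition reads $H_{SS}\Delta_S=-\lambda\tilde Z_S-W_S+R(\Delta)_S$, where $H_{SS}$ is invertible: $H\succeq0$, its only null directions have zero $\Lambda$-component, and the residual $\Theta$-directions on the support are ruled out by the restricted-convexity Assumption~\ref{asmp:1}, so $H_{SS}\succ0$. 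Solving for $\Delta_S$ and substituting into $\lambda\tilde Z_{\bar S}=-W_{\bar S}-H_{\bar{S}S}\Delta_S+R(\Delta)_{\bar S}$ yields
\[
\lambda\tilde Z_{\bar S}=-W_{\bar S}+R(\Delta)_{\bar S}+H_{\bar{S}S}H_{SS}^{-1}\bigl(\lambda\tilde Z_S+W_S-R(\Delta)_S\bigr).
\]
Taking element-wise $\ell_\infty$/operator norms and using $\vertiii{H_{\bar{S}S}H_{SS}^{-1}}_\infty<1-\alpha$ (condition~(2)), $\norm{\tilde Z_S}_\infty\le1$, and $\max\{\norm{W}_\infty,\norm{R(\Delta)}_\infty\}\le\alpha\lambda/8$ (condition~(3)) gives $\norm{\tilde Z_{\bar S}}_\infty<1-\alpha/2$, which in particular establishes strict feasibility on the $\Lambda$-block.

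For the conclusion, let $\hat\Gamma$ be any optimizer of \eqref{eq:l1} and set $G=l_2+\lambda(\norm{\Lambda}_1+r\norm{\Theta}_1)$. Adding the convexity inequality for $l_2$ and the subgradient inequality for the penalty along $\hat\Gamma-\tilde\Gamma$, and using $\nabla_\Gamma l_2(\tilde\Gamma)+\lambda\tilde Z=\bm{0}$ together with $G(\hat\Gamma)=G(\tilde\Gamma)$, forces equality in the subgradient inequality, i.e.\ $\lambda\tilde Z$ lies in the penalty subdifferential at $\hat\Gamma$. Since every off-support entry of $\tilde Z$ is \emph{strictly} inside its subdifferential interval, this forces $\hat\Gamma_{\bar S}=\bm{0}$; hence $\hat\Gamma$ is feasible and optimal for \eqref{eq:l5}, and uniqueness (condition~(1)) gives $\hat\Gamma=\tilde\Gamma$. \textbf{The main obstacle} is the bookkeeping of the weight $r$ --- the point at which this departs from \citeasnoun{wytock2013sparse}: on $S$ the $\Theta$-block of $\tilde Z_S$ is bounded by $r$ rather than $1$, and strict feasibility on the $\Theta$-block of $\bar S$ must be checked against the threshold $r$, not $1$, so the $\Theta$-coordinates of $-W_{\bar S}+R(\Delta)_{\bar S}+H_{\bar{S}S}H_{SS}^{-1}(\cdots)$ have to be controlled at the $\lambda r$ scale. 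Propagating this asymmetric weighting cleanly through the incoherence step and the remainder term --- rather than relying on the uniform bound $1-\alpha/2$ that already suffices for the $\Lambda$-block --- is where the argument needs the most care; the remainder follows the classical PDW template.
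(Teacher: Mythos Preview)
Your approach is essentially the same primal--dual witness construction the paper uses: write the stationarity condition $H\Delta + G - R(\Delta) + \lambda Z = 0$, solve the $S$-block for $\Delta_S$ via $H_{SS}^{-1}$, substitute into the $\bar S$-block to get an explicit expression for $Z_{\bar S}$, and bound $\norm{Z_{\bar S}}_\infty$ using conditions~(2) and~(3); the paper obtains $\norm{Z_{\bar S}}_\infty \le \frac{2-\alpha}{\lambda}(\norm{G}_\infty+\norm{R(\Delta)}_\infty)+(1-\alpha)<1$ and then cites Lemma~3 of \citeasnoun{ravikumar2011high} for the conclusion, where you supply the direct convexity/subgradient argument yourself.

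On the point you flag as the main obstacle --- that strict dual feasibility on the $\Theta$-block of $\bar S$ must be checked against the threshold $r$ rather than $1$ --- you are in fact being more careful than the paper. The paper's proof does \emph{not} handle the weight $r$ separately: it simply verifies $\norm{Z_{\bar S}}_\infty<1$ and declares strict dual feasibility, without ever controlling the $\Theta$-coordinates at the $\lambda r$ scale. So the concern you isolate is real and is not resolved in the paper's argument either; it is glossed over. For the downstream use in Theorem~\ref{thm:1} this is harmless only if $r$ is treated as a fixed positive constant (which is how the paper implicitly uses it), but as a self-contained lemma the $r$-dependence you identify is a genuine loose end in both your sketch and the paper's proof.
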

\begin{proof}
Define $\Delta_\Lambda = \tls-\Lambda_0$, $\Delta_\Theta=\tts-\Theta_0$ 
and $\Delta=(\Delta_\Lambda^T,\Delta_\Theta^T)^T$. Let 
$R(\Delta)=(R_\Lambda^T (\Delta_\Lambda,\Delta_\Theta),	R_\Theta^T (\Delta_\Lambda,\Delta_\Theta))^T$ 
be the residual of second order Taylor expansion of the log-likelihood
where 
\begin{eqnarray*}
R_\Lambda (\Delta_\Lambda,\Delta_\Theta) &=& 
\nabla_\Lambda^2 l_2(\Lambda_0,\Theta_0)\Delta_\Lambda +  
\nabla_\Theta\nabla_\Lambda l_2(\Lambda_0,\Theta_0)\Delta_\Theta
+ \nabla_\Lambda l_2(\Lambda_0,\Theta_0)
- \nabla_\Lambda l_2(\Lambda_0+\Delta_\Lambda, \Theta_0+\Delta_\Theta),\\
R_\Theta (\Delta_\Lambda,\Delta_\Theta) &=& 
\nabla_\Theta^2 l_2(\Lambda_0,\Theta_0)\Delta_\Theta +  
\nabla_\Lambda\nabla_\Theta l_2(\Lambda_0,\Theta_0)\Delta_\Lambda + 
\nabla_\Theta l_2(\Lambda_0,\Theta_0)
 - \nabla_\Theta l_2(\Lambda_0+\Delta_\Lambda, \Theta_0+\Delta_\Theta).\notag
\end{eqnarray*}
Following the same arguments as in Lemma 3 in \citeasnoun{ravikumar2011high}, 
the $\ell_1$ optimization problem (\ref{eq:l1}) satisfies
	\begin{equation}
	\nabla^2_\Gamma l_2(\Gamma_0)\Delta + \nabla_\Gamma l_2(\Gamma_0) - R(\Delta) + \lambda Z = 0,
	\label{eq:l2}
	\end{equation}
	where $Z=(Z_\Lambda^T,Z_\Theta^T)^T$ is the sub-differential of the penalty term evaluated at $\Lambda$ and $\Theta$, and
	\begin{equation}
	Z_{\Lambda,ij} = 
	\begin{cases}
	0 & \text{if}\ i=j\\
	\text{sign}(\Lambda_{ij}) & \text{if}\ i\neq j\ \text{and}\ \Lambda_{ij}\neq 0\\
	\in [-1,1] & \text{if}\ i\neq j \ \text{and}\ \Lambda_{ij} = 0,
	\end{cases}\notag
	\end{equation}
	\begin{equation}
	Z_{\Theta,ij} = 
	\begin{cases}
	r\times\text{sign}(\Theta_{ij}) & \text{if}\  \Theta_{ij}\neq 0\\
	\in [-r,r] & \text{if}\  \Theta_{ij} = 0.
	\end{cases}\notag
	\end{equation}
If we can verify the strict dual feasibility $\norm{Z_{\bar{S}}}_\infty \leq 1$, 
then by Lemma 3 in \citeasnoun{ravikumar2011high}, the restricted solution 
$\tilde{\Gamma}$ is an optimal solution to the original $\ell_1$ problem, 
i.e., $\tilde{\Gamma} = \hat{\Gamma}$.

Denoting $H=\nabla^2_\Gamma l_2(\Gamma_0)$ and $G=\nabla_\Gamma l_2(\Gamma_0)$ 
for simplicity, the optimality condition of (\ref{eq:l2}) in terms of $S$ 
and $\bar{S}$ can be rewritten as 
	\begin{equation}
	\begin{bmatrix}
	H_{SS} & H_{S\bar{S}} \\
	H_{\bar{S}S} & H_{\bar{S}\bar{S}}
	\end{bmatrix}
	\begin{bmatrix}
	\Delta_S \\ 0
	\end{bmatrix}
	+
	\begin{bmatrix}
	G_S \\ G_{\bar{S}}
	\end{bmatrix}
	-
	\begin{bmatrix}
	R(\Delta)_S \\ R(\Delta)_{\bar{S}}
	\end{bmatrix}
	+ \lambda
	\begin{bmatrix}
	Z_S \\ Z_{\bar{S}}
	\end{bmatrix}
	= 0.
	\label{eq:l3}
	\end{equation}
	Since $H_{SS}$ is invertible, we have
	\begin{equation}
	\Delta_S = H_{SS}^{-1} (R(\Delta)_S - G_S - \lambda Z_S).
	\label{eq:l4}
	\end{equation}
	Plugging (\ref{eq:l4}) back into the second equation in (\ref{eq:l3}),  we obtain
	\begin{equation}
	\begin{aligned}
	Z_{\bar{S}} =& -\frac{1}{\lambda} H_{\bar{S}S} \Delta_S + \frac{1}{\lambda }(R(\Delta)_{\bar{S}}-G_{\bar{S}}) \\
	=& -\frac{1}{\lambda}H_{\bar{S}S}H_{SS}^{-1}(R(\Delta)_S - G_S) + H_{\bar{S}S}H_{SS}^{-1} Z_S + \frac{1}{\lambda}(R(\Delta)_{\bar{S}}-G_{\bar{S}}).\notag
	\end{aligned}
	\end{equation}
	Taking the $\ell_\infty$ norm of both sides gives
	\begin{equation}
	\norm{Z_{\bar{S}}}_\infty \leq \frac{2-\alpha}{\lambda} (\norm{G}_\infty + \norm{R(\Delta)}_\infty) + (1-\alpha) \leq \frac{2-\alpha}{\lambda} \frac{\alpha \lambda}{4} + (1-\alpha) < 1.\notag
	\end{equation}
\end{proof}
Based on Lemma \ref{lemma2}, the solution $\tilde{\Gamma}$ is constructed as a witness to the original unrestricted solution $\hat{\Gamma}$. Then $\tilde{\Gamma}$ inherits many optimality properties from $\hat{\Gamma}$, in terms of the discrepancy to the true $\Gamma_0$ and the recovery of the signed sparsity pattern.
Our next step is to bound the residual term $\norm{R(\Delta)}_\infty$ in terms of $\norm{\Delta}_\infty$.
\begin{lemma}[Control of remainder] \label{lemma3}
Suppose that 
$\norm{\Delta}_\infty \leq \gamma^{-1} \emph{min} \{ 1/(3C_{\Sigma}), 
C_{\Theta}/2 \}$,
then 
\begin{equation}
\norm{R(\Delta)}_\infty \leq 206C_{\Sigma}^4C_{\Theta}^2C_X^2
\gamma^2\norm{\Delta}_\infty^2 .\notag
\end{equation}
\end{lemma}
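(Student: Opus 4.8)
\textbf{Proof proposal for Lemma \ref{lemma3}.}

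The plan is to write $R(\Delta)$ out explicitly, exhibit each surviving piece as a product that is at least quadratic in $\Delta$, and then bound every such product in the elementwise $\ell_\infty$ norm by exploiting the column sparsity of $\Gamma_0$. First I would record the closed forms of the derivatives of $l_2(\Lambda,\Theta)=-\log|\Lambda|+\tr(\sy\Lambda+2S_{xy}^T\Theta+\Lambda^{-1}\Theta^T\sx\Theta)$, namely $\nabla_\Lambda l_2=\sy-\Lambda^{-1}-\Lambda^{-1}\Theta^T\sx\Theta\Lambda^{-1}$ and $\nabla_\Theta l_2=2S_{xy}+2\sx\Theta\Lambda^{-1}$, together with the three Hessian blocks obtained by one more differentiation. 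Substituting $\Lambda=\Lambda_0+\Delta_\Lambda$, $\Theta=\Theta_0+\Delta_\Theta$ into $R_\Lambda$ and $R_\Theta$ and cancelling the zeroth- and first-order terms against the Hessian terms, the contribution of $-\log|\Lambda|$ reduces to the exact Taylor remainder $(\Lambda_0+\Delta_\Lambda)^{-1}-\Sigma_0+\Sigma_0\Delta_\Lambda\Sigma_0$, which I would rewrite via the resolvent identity as $\Sigma_0\Delta_\Lambda(\Lambda_0+\Delta_\Lambda)^{-1}\Delta_\Lambda\Sigma_0$, manifestly quadratic in $\Delta_\Lambda$. For the harder term $\Lambda^{-1}\Theta^T\sx\Theta\Lambda^{-1}$ (and for $\sx\Theta\Lambda^{-1}$ in $R_\Theta$) I would plug in the same expansion $(\Lambda_0+\Delta_\Lambda)^{-1}=\Sigma_0-\Sigma_0\Delta_\Lambda\Sigma_0+\Sigma_0\Delta_\Lambda(\Lambda_0+\Delta_\Lambda)^{-1}\Delta_\Lambda\Sigma_0$ together with $\Theta=\Theta_0+\Delta_\Theta$, multiply out, and cancel the zeroth- and first-order parts against the corresponding Hessian contributions. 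The outcome is that $R(\Delta)$ is a finite sum of products, each containing at least two $\Delta$-factors (or one $\Delta$-factor times the quadratic resolvent remainder), sandwiched among a bounded number of copies of $\Sigma_0$, two copies of $\Theta_0$, and a single copy of $\sx$.

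Next I would bound each such product in $\norm{\cdot}_\infty$. The available tools are: (i) the elementwise bounds $\norm{\sx}_\infty\le C_X^2$ (since $|(\sx)_{ij}|=n^{-1}|\langle\bm X^i,\bm X^j\rangle|\le C_X^2$), $\norm{\Theta_0}_\infty\le C_\Theta$, $\norm{\Sigma_0}_\infty\le C_\Sigma$; (ii) the sparsity fact that $\Gamma_0$, hence the restricted iterate $\tilde\Gamma$ which is supported on $S$, hence $\Delta_\Lambda$, $\Delta_\Theta$ and $\Theta_0$, all have at most $\gamma$ nonzeros per column, so that $\vertiii{\Delta_\Lambda}_\infty\le\gamma\norm{\Delta}_\infty$, $\vertiii{\Delta_\Theta^T}_\infty\le\gamma\norm{\Delta}_\infty$, and any contraction against $\Theta_0$ or a $\Delta$-factor costs only a factor $\gamma$ times a max-norm; and (iii) the submultiplicative inequalities $\norm{AB}_\infty\le\vertiii{A}_\infty\norm{B}_\infty$ and $\norm{AB}_\infty\le\norm{A}_\infty\vertiii{B^T}_\infty$, which let me peel off each sparse factor. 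Wherever an inverse $(\Lambda_0+\Delta_\Lambda)^{-1}$ appears I would use the first half of the hypothesis, $\gamma\norm{\Delta}_\infty\le 1/(3C_\Sigma)$, to force the relevant perturbation to have operator norm below $1/3$, so that the Neumann series converges and $\vertiii{(\Lambda_0+\Delta_\Lambda)^{-1}}_\infty\le\tfrac32\vertiii{\Sigma_0}_\infty$; the second half, $\gamma\norm{\Delta}_\infty\le C_\Theta/2$, lets me replace any factor $\Theta_0+\Delta_\Theta$ by $\tfrac32\Theta_0$ at the cost of a constant. Summing the contributions of the individual terms and collecting numerical constants gives a bound of the form $(\text{const})\cdot C_\Sigma^4C_\Theta^2C_X^2\gamma^2\norm{\Delta}_\infty^2$, and verifying that the constant does not exceed $206$ finishes the proof.

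The main obstacle is the matrix-norm bookkeeping in the bounding step: each surviving product must be parenthesized and estimated in exactly the right way so that every $\Delta_\Lambda$, $\Delta_\Theta$ and $\Theta_0$ factor is absorbed as $\gamma$ times an elementwise bound and no factor of $p$ or $d$ leaks in through an uncontrolled $\ell_\infty$ operator norm --- in particular, one must arrange that the (dense) factor $\Sigma_0$ is always contracted against a $\gamma$-sparse neighbor so that it enters only through $C_\Sigma$, and that the resolvent $(\Lambda_0+\Delta_\Lambda)^{-1}$ is the only non-sparse, non-$\Sigma_0$ factor left, handled by the Neumann bound. Once the worst term is identified, pinning down the exponents ($\gamma^2$, $C_\Sigma^4$, $C_\Theta^2$, $C_X^2$) and the numerical constant $206$ is a routine tally; the preceding algebraic cancellations and the norm estimates are where all the care is needed.
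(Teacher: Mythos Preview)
Your proposal is correct and follows the same overall arc as the paper: isolate $R(\Delta)$ as a sum of products that are quadratic in $\Delta$, control the perturbed inverse $(\Lambda_0+\Delta_\Lambda)^{-1}$ via the Neumann series using $\gamma\norm{\Delta}_\infty\le 1/(3C_\Sigma)$, control $\Theta_0+\Delta_\Theta$ using $\gamma\norm{\Delta}_\infty\le C_\Theta/2$, pass from elementwise to operator norms through the column sparsity $\gamma$, and tally the constants.

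The one tactical difference worth noting is how the quadratic structure is extracted. You propose to expand $\nabla l_2(\Lambda_0+\Delta_\Lambda,\Theta_0+\Delta_\Theta)$ exactly at $(\Lambda_0,\Theta_0)$ using the resolvent identity, cancel the zeroth- and first-order pieces against the Hessian terms by hand, and then bound the many surviving cross-terms individually. The paper instead invokes the mean value theorem to write the remainder as the second differential of $\nabla l_2$ evaluated at an intermediate point $(\Lambda_0+t\Delta_\Lambda,\Theta_0+t\Delta_\Theta)$, so that every term is automatically a product of two $\Delta$-factors together with at most four copies of $(\Lambda_0+t\Delta_\Lambda)^{-1}$, at most two copies of $(\Theta_0+t\Delta_\Theta)$, and one $S_{xx}$. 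This sidesteps the explicit cancellation you describe and keeps the term count small; a single generic bound of the form $\norm{ABC}_\infty\le \vertiii{C}_1\vertiii{A}_\infty\norm{B}_\infty$ with $\vertiii{(\Lambda_0+t\Delta_\Lambda)^{-1}}_\infty\le \tfrac32 C_\Sigma$, $\vertiii{\Theta_0+t\Delta_\Theta}_1\le \tfrac32 C_\Theta$, and $\vertiii{\Delta}_1\le \gamma\norm{\Delta}_\infty$ then handles every term uniformly. Your route reaches the same destination but with more bookkeeping; the paper's MVT shortcut is what makes the constant $206$ easy to read off.
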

\begin{proof}
We describe the proof briefly since it follows the same steps as in
\citeasnoun{wytock2013sparse}. Denote second order Taylor expansion 
of a function in terms of its differentials
	\begin{equation}
	\begin{aligned}
	f(X+\Delta) &\approx  f(X) + \vect(\nabla_X f(X))^T \vect(\Delta) + \frac{1}{2}\vect(\Delta)^T (\nabla_X^2f(X))\vect(\Delta)\\
	&\triangleq f(X) + df(X;\Delta) + \frac{1}{2}d^2f(X;\Delta). \notag
	\end{aligned}
	\end{equation}

By the definition of $R(\Delta)$ and the mean value theorem, there exists 
$t\in (0,1)$ such that
$R_\Lambda(\Delta_\Lambda, \Delta_\Theta) = 
d(\nabla_\Lambda l_2(\Lambda_0 + t\Delta_\Lambda, \Theta_0 + t \Delta_\Theta); \Delta_\Lambda, \Delta_\Theta)$
and similarly for $R_\Theta(\Delta_\Lambda, \Delta_\Theta)$.
As expressions of the above second differentials are tedious, we do not 
include them here. However, we note that each term in 
$R_\Lambda(\Delta_\Lambda, \Delta_\Theta)$ and 
$R_\Theta(\Delta_\Lambda, \Delta_\Theta)$ has a quadratic expression in 
$\Delta_\Lambda$ and $\Delta_\Theta$, with at most four 
$(\Lambda_0 + t\Delta_\Lambda)^{-1}$ terms, two 
$(\Theta_0 + t\Delta_\Theta)$ terms and one $\sx$ term. Using the fact that
	\begin{equation}
	\norm{ABC}_\infty \leq \vertiii{(C^T \otimes A)\vect(B)}_\infty \leq \vertiii{C}_1\vertiii{A}_\infty\norm{B}_\infty\notag
	\end{equation}
	for any matrices $A, B, C$ and $\norm{\sx}_\infty \leq C_X^2$, each term in the second differentials is bounded by 
	\begin{equation}
	\cx \vertiii{(\Lambda_0 + t\Delta_\Lambda)^{-1}}^4_{\infty} \vertiii{\Theta_0 + t \Delta_\Theta}^2_1 \vertiii{\Delta}_1^2.
	\label{eq:5.2.4}
	\end{equation}
	For an invertible $\Lambda_0$, since $0<t<1$, it is easy to verify that
\begin{equation}
(\Lambda_0 + t\Delta_\Lambda)^{-1} = 
(I + t\Delta_\Lambda\Lambda_0^{-1})^{-1} \Lambda_0^{-1} 
=\sum_{i=0}^{\infty} (-1)^i (t\Lambda_0^{ -1}\Delta_\Lambda)^i\Lambda_0^{-1}. \notag
\end{equation}
	Then 
	\begin{equation}
	\vertiii{(\Lambda_0 + t\Delta_\Lambda)^{-1}}_\infty \leq \vertiii{\Lambda_0^{-1}}_\infty \sum_{i=1}^{\infty} \vertiii{\Lambda_0^{-1}}^i_\infty \vertiii{\Delta_\Lambda}^i_\infty \leq \frac{C_{\Sigma}}{1-\gamma C_{\Sigma}\norm{\Delta}_\infty} \leq \frac{3C_{\Sigma}}{2}.\notag
	\end{equation}
	Similarly, since $\vertiii{\Delta_\Theta}\leq \gamma \norm{\Delta}_\infty$, we have
	\begin{equation}
	\vertiii{\Theta_0 + t\Delta_\Theta}_1 \leq \vertiii{\Theta_0}_1 + \vertiii{\Delta_\Theta}_1 \leq C_{\Theta} + \gamma \norm{\Delta}_\infty \leq \frac{3C_{\Theta}}{2}.\notag
	\end{equation}
Combining with (\ref{eq:5.2.4}), we obtain 
	\begin{equation}
	\norm{R(\Delta)}_\infty \leq 206C_{\Sigma_0}^4C_{\Theta}^2C_X^2\gamma^2\norm{\Delta}_\infty^2.\notag
	\end{equation}
\end{proof}

\begin{lemma}[Control of $\Delta$]
Suppose that
$u \triangleq 2\kappa_{H} (\max\{ \norm{\nabla_\Theta 
l_2(\Lambda_0,\Theta_0))}_\infty, \norm{\nabla_\Lambda l_2(\Lambda_0,\Theta_0))}_\infty  \} 
+ \lambda)  
\leq \min \{ 1/(3C_{\Sigma}\gamma), C_{\Theta}/(2\gamma),
1/(412\kappa_{H}C_{\Sigma}^4C_{\Theta}^2C_X^2\gamma^2) \}$. Then 
\begin{equation}
\norm{\Delta}_\infty = \norm{\tilde{\Gamma} - \Gamma_0}_\infty \leq u.
\label{eq:l40}
\end{equation}
\label{lemma4}
\end{lemma}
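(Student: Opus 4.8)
This is the standard $\ell_\infty$ control of the support-restricted ("oracle") estimator, and the plan is to establish it by a Brouwer fixed-point argument, mirroring Lemma 6 of \citeasnoun{ravikumar2011high} and the corresponding step in \citeasnoun{wytock2013sparse}, but carried through with the two-penalty structure. Write $\Delta = \tilde{\Gamma} - \Gamma_0$, where $\Delta_{\bar{S}} = 0$ is forced by the constraint in \eqref{eq:l5}. Using $H = \nabla^2_\Gamma l_2(\Gamma_0)$, $G = \nabla_\Gamma l_2(\Gamma_0)$ and the second-order Taylor expansion with remainder $R(\Delta)$, the $S$-block of the subgradient optimality condition for \eqref{eq:l5} reads $H_{SS}\Delta_S + G_S - R(\Delta)_S + \lambda Z_S = 0$, i.e. $\Delta_S = H_{SS}^{-1}\big(R(\Delta)_S - G_S - \lambda Z_S\big)$, where $\|Z_S\|_\infty \le 1$ since the $\Theta$-entries of $Z$ lie in $[-r,r]$ with $r\le 1$ and the $\Lambda$-entries in $[-1,1]$. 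This suggests the map $F(\Delta_S) = H_{SS}^{-1}\big(R((\Delta_S,0))_S - G_S - \lambda Z_S\big)$ on $\mathbb{R}^{|S|}$, whose fixed points are exactly the stationary points of the restricted problem.

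The second step is to show $F$ maps the closed $\ell_\infty$-ball $\mathbb{B}(u) = \{\Delta_S : \|\Delta_S\|_\infty \le u\}$ into itself. For the gradient-plus-penalty part I would use that $\kappa_H$ controls the $\ell_\infty$ operator norm of $H_{SS}^{-1}$ and that $\|G\|_\infty = \max\{\|\nabla_\Lambda l_2(\Lambda_0,\Theta_0)\|_\infty, \|\nabla_\Theta l_2(\Lambda_0,\Theta_0)\|_\infty\}$, giving $\|H_{SS}^{-1}(G_S + \lambda Z_S)\|_\infty \le \kappa_H(\|G\|_\infty + \lambda) = u/2$ by the definition of $u$. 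For the remainder part, any $\Delta_S \in \mathbb{B}(u)$ has $\|\Delta\|_\infty = \|\Delta_S\|_\infty \le u$, and the hypothesis $u \le \min\{1/(3C_\Sigma\gamma),\, C_\Theta/(2\gamma)\}$ puts $\Delta$ in the regime of Lemma \ref{lemma3}, so $\|R(\Delta)\|_\infty \le 206\,C_\Sigma^4 C_\Theta^2 C_X^2 \gamma^2 u^2$; then $\|H_{SS}^{-1}R(\Delta)_S\|_\infty \le \kappa_H\cdot 206\,C_\Sigma^4 C_\Theta^2 C_X^2 \gamma^2 u^2 \le u/2$, the last inequality using the remaining hypothesis $u \le 1/(412\,\kappa_H C_\Sigma^4 C_\Theta^2 C_X^2 \gamma^2)$ together with $412 = 2\cdot 206$. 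Adding the two halves yields $\|F(\Delta_S)\|_\infty \le u$, so $F$ is a continuous self-map of $\mathbb{B}(u)$; Brouwer gives a fixed point $\Delta_S^\star \in \mathbb{B}(u)$. By Assumption \ref{asmp:1} the restricted objective is strictly convex on the support, so \eqref{eq:l5} has a unique minimizer and hence $\Delta_S^\star = \tilde{\Gamma}_S - \Gamma_{0,S}$; together with $\Delta_{\bar{S}}=0$ this gives $\|\Delta\|_\infty = \|\tilde{\Gamma}-\Gamma_0\|_\infty \le u$, which is \eqref{eq:l40}.

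I expect the main obstacle to be the bookkeeping that makes $F$ a genuine continuous self-map: $Z_S$ is set-valued at coordinates where $\tilde{\Gamma}$ vanishes, so one must either fix a suitable selection or, following \citeasnoun{ravikumar2011high}, use only the uniform bound $\|Z_S\|_\infty \le 1$; and one must be careful that $\kappa_H$ is precisely the quantity bounding $\|H_{SS}^{-1}\|_\infty$ appearing in both terms. The other delicate point is purely arithmetic — the constants must be arranged so that the gradient/penalty contribution and the Taylor-remainder contribution are each at most $u/2$, which is exactly why the hypothesis carries the constant $412$ (twice the $206$ of Lemma \ref{lemma3}) and the factor $2\kappa_H$ in the definition of $u$. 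The rest is a direct transcription of the single-penalty argument, with the only genuinely new ingredient being that $r \le 1$ keeps $\|Z_S\|_\infty \le 1$, so the two-penalty structure does not alter the bound.
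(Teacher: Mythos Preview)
Your proposal is correct and follows essentially the same route as the paper: both define the map $F(\Delta_S)=H_{SS}^{-1}\big(R(\Delta)_S - G_S - \lambda Z_S\big)$ from the $S$-block stationarity condition, split $\|F(\Delta_S)\|_\infty$ into a gradient/penalty term bounded by $\kappa_H(\|G\|_\infty+\lambda)=u/2$ and a remainder term bounded via Lemma~\ref{lemma3} by $\kappa_H\cdot 206\,C_\Sigma^4 C_\Theta^2 C_X^2\gamma^2 u^2\le u/2$, and then invoke Brouwer together with uniqueness of the restricted minimizer. Your explicit remarks on why $\|Z_S\|_\infty\le 1$ under the two-penalty structure (via $r\le 1$) and on the role of Assumption~\ref{asmp:1} for uniqueness are exactly the points the paper uses implicitly.
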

\begin{proof}
Recall that $\Delta = \tilde{\Gamma}-\Gamma_0$, $\tilde{\Gamma}_{\bar{S}} 
= \Gamma_{0,\bar{S}} = 0$, therefore 
$\norm{\Delta}_\infty = \norm{\Delta_S}_\infty$. 
Our goal is to bound the deviation $\Delta$. By (\ref{eq:l4}), we have 
$\Delta_S = H^{-1}_{SS} (R(\Delta)_S - G_S - \lambda Z_S)$.
In the following, we use Brouwer's fixed point theorem on a compact 
set to construct a ball $\mathbb{B}(u)$ that contains $\Delta$. 
Define the $\ell_\infty$-ball
$\mathbb{B}(u) = \{ \Delta | \norm{\Delta_S}_\infty < u \}$
and a continuous map $\mathcal{F}: \Delta_S \rightarrow F(\Delta_S)$ such that
\begin{equation}
F(\Delta_S) = H^{-1}_{SS} (R(\Delta_S) - G_S - \lambda Z_S).
\label{eq:l41}
\end{equation}
Now it suffices to show $F\big(\mathbb{B}(u)\big)\in \mathbb{B}(u)$, 
as this implies there is a solution to the above equation. 
By uniqueness of the optimal solution, we can thus conclude that 
$\Delta$ belongs in this ball.

Taking infinity norm to (\ref{eq:l41}), we have
\begin{equation}
\norm{F(\Delta_S)}_\infty \leq \norm{H^{-1}_{SS} }_\infty \norm{R(\Delta)}_\infty + \norm{H^{-1}_{SS} }_\infty \norm{G + \lambda Z}_\infty.
\label{eq:l42}
\end{equation}
For any $\Pi \in \mathbb{B}(u)$, by Lemma \ref{lemma3}, the 
first term in (\ref{eq:l42}) is bounded by
\begin{equation*}
\norm{H^{-1}_{SS}}_\infty \norm{R(\Pi)}_\infty \leq \kappa_{H} 206C_{\Sigma}^4C_{\Theta}^2C_X^2\gamma^2\norm{\Pi}_\infty^2 \leq \frac{u}{2}.
\end{equation*}
By the definition of radius $u$, the second term in 
(\ref{eq:l42}) is bounded by
\begin{equation*}
\norm{H^{-1}_{SS} }_\infty \norm{G + \lambda Z}_\infty \leq \kappa_{H} (\norm{G}_\infty + \lambda) \leq \frac{u}{2}.
\end{equation*}
Therefore, we have $\norm{F(\Pi)}_\infty \leq u$.
\end{proof}

\noindent
{\bf Proof of Theorem \ref{thm:1}}  
We first show that $\tilde{\Gamma}$ equals the solution to 
original objective function (\ref{eq:obj}) $\hat{\Gamma}$ with 
high probability. Then we proceed with the proof conditioning on this event.

By Lemma 1, we have the element-wise tail conditions for $\Lambda$:
$\mathbb{P}(\max_{i,j}|\nabla_{\Lambda,ij} l_2(\Lambda_0,\Theta_0)| > \delta) 
\leq 1/f_\Lambda(n,\delta)$, 
where $f_\Lambda(n,\delta) = (1/4)  \exp{\big(n\delta^2/(3200\ccs)\big)}$, and 
$\nabla_{\Lambda,ij} l_2(\Lambda_0,\Theta_0)$ denotes the $(i,j)$-th element in 
$\nabla_{\Lambda} l_2(\Lambda_0,\Theta_0)$. For a fixed $n$, denote
\begin{equation}
\bar{\delta}_{f_\Lambda}(n;\omega) = \argmaxA_{\delta}\{ f_\Lambda(n,\delta) < \omega \}.
\label{eq:thm1}
\end{equation}
Similarly, for each fixed $\delta > 0$, denote
\begin{equation}
\bar{n}_{f_\Lambda}(\delta;\omega)= \argmaxA_{n}\{ f_\Lambda(n,\delta) < \omega\}.
\label{eq:thm2}
\end{equation}
By the monotonicity of the function $f_\Lambda(\delta;n)$, it is easy to see that 
\begin{equation}
n > \bar{n}_{f_\Lambda}(\delta;\omega) ~ \text{for some}~ \delta > 0 \ \ \ \Longrightarrow \ \ \ \bar{\delta}_{f_\Lambda}(n;\omega)\leq \delta.
\label{eq:thm6}
\end{equation}
Appling Corollary 1 and Lemma 8 in \citeasnoun{ravikumar2011high}, 
for any $\tau > 2$, we have the control of sampling noise for $\hat{\Lambda}$
\begin{equation}
\mathbb{P}\Big(\norm{\nabla_\Lambda l_2(\Lambda_0,\Theta_0)}_\infty > \bar{\delta}_{f_\Lambda}(n;p^\tau)\Big) \leq \frac{1}{p^{\tau - 2}} \rightarrow 0
\label{eq:thm3}
\end{equation}
where $\bar{n}_{f_\Lambda}(\delta;p^\tau) = 3200\ccs(\tau\log p + \log 4)/\delta^2$
and
$\bar{\delta}_{f_\Lambda}(n;p^\tau) = \sqrt{3200\ccs}\sqrt{(\tau\log p + \log 4)/n}$.
Now we develop the control of sampling noise for $\Theta$. Again, by Lemma 1 we have the element-wise tail probability for $\hat{\Theta}$:
\begin{equation}
\mathbb{P}\big(\max_{i,j}|\nabla_{\Theta,ij} l_2(\Lambda_0,\Theta_0)| > \delta\big) \leq \frac{1}{f_\Theta(n,\delta)}\notag
\end{equation}
where $f_\Theta(n,\delta) = (1/2) \exp \big(n\delta^2/(8\ccs C_X^2)\big)$.\\
Define $\bar{\delta}_{f_\Theta}(n;\omega)$ and $\bar{n}_{f_\Theta}(\delta;\omega)$ similarly to (\ref{eq:thm1}) and (\ref{eq:thm2}). Applying the union bound over all $pd$ entries of the gradient matrix, we obtain that 
\begin{equation}
\mathbb{P}(\max_{i,j}|\big(\nabla_{\Theta,ij} l_2(\Lambda_0,\Theta_0)| > \delta\big) \leq \frac{pd}{f_\Theta(n,\delta)}.\notag
\end{equation}
Let $\delta = \bar{\delta}_{f_\Lambda}\big(n;(pd)^\tau\big)$, then for any $\tau > 1$,
\begin{equation}
\mathbb{P}\Big(\norm{\nabla_{\Theta} l_2(\Lambda_0,\Theta_0)}_\infty > \bar{\delta}_{f_\Theta}(n;(pd)^\tau)\Big) \leq \frac{pd}{f_\Theta\Big(n;\bar{\delta}_{f_\Theta}\big(n;(pd)^\tau\big)\Big)} = \frac{1}{(pd)^{\tau -1}} \rightarrow 0.
\label{eq:thm4}
\end{equation}
The last equality follows the fact that $f_\Theta\Big(n,\bar{\delta}_{f_\Theta}\big(n;(pd)^\tau\big)\Big) = (pd)^\tau$, based on the definition of $\bar{\delta}_{f_\Theta}$.\\

Straightforward calculation shows that $\bar{n}_{f_\Theta}\big(\delta;(pd)^\tau\big)$ and $\bar{\delta}_{f_\Theta}\big(n;(pd)^\tau\big)$ take the forms
\begin{equation}
\bar{n}_{f_\Theta}\big(\delta;(pd)^\tau\big) = 8\ccs C_X^2 \big(\frac{\tau\log(pd) + \log 2}{\delta^2}\big)\notag
\end{equation}
and
\begin{equation}
\bar{\delta}_{f_\Theta}\big(n;(pd)^\tau\big) = \sqrt{8\ccs C_X^2}\sqrt{\frac{\tau\log (pd) + \log 2}{n}}.
\notag
\end{equation}
Denote $\bar{n}_{f_\Gamma} = \max\{\bar{n}_{f_\Lambda},\bar{n}_{f_\Theta}\}$, $\bar{\delta}_{f_\Gamma} = \max\{\bar{\delta}_{f_\Lambda},\bar{\delta}_{f_\Theta}\}$, by (\ref{eq:thm3}) and (\ref{eq:thm4}) we have
\begin{equation}
\mathbb{P}\big(\max\{\norm{\nabla_{\Lambda} l_2(\Lambda_0,\Theta_0)}_\infty, \norm{\nabla_{\Theta} l_2(\Lambda_0,\Theta_0)}_\infty\} < \bar{\delta}_{f_\Gamma}\big) \geq 1 - \big(p^{-(\tau-2)}+ (pd)^{-(\tau-1)}\big).
\label{eq:thm5}
\end{equation}
Specifically, 
\begin{equation}
\begin{aligned}
\bar{\delta}_{f_\Gamma} &= \max\Big\{ \sqrt{3200\ccs}\sqrt{\frac{\tau\log p + \log 4}{n}},  \sqrt{8\ccs\cx^2}\sqrt{\frac{\tau\log (pd) + \log 2}{n}} \Big\} \\
&\leq C_{\sigma}C_X^\star\sqrt{3200} \sqrt{\frac{\tau\log (pd) + \log 4}{n}}
\end{aligned}
\label{eq:thm7}
\end{equation}
where $C_X^\star = \max\{\cx^2,1\}$.\\
Let $\mathcal{A}$ denote the event that $\max\{\norm{\nabla_{\Lambda} l_2(\Lambda_0,\Theta_0)}_\infty, \norm{\nabla_{\Theta} l_2(\Lambda_0,\Theta_0)}_\infty\} < \bar{\delta}_{f_\Gamma}$, (\ref{eq:thm5}) implies that
$\mathbb{P}(\mathcal{A}) \geq 1 - \big(p^{-(\tau-2)}+ (pd)^{-(\tau-1)}\big)$.
Accordingly, we condition on the event $\mathcal{A}$ in the following analysis.

Next, we verify that the third assumption in Lemma 2 holds. Choose the 
(larger) regularization penalty $\lambda = (8/\alpha)\bar{\delta}_{f_\Gamma}$, 
then the first half 
$\norm{\nabla_\Gamma l_2(\Gamma_0)}_\infty \leq \alpha  \lambda/8$ 
is satisfied. It remains to establish the bound 
$\norm{R(\Delta)}_\infty \leq \alpha  \lambda/8$. 
We do so by using Lemmas \ref{lemma4} and \ref{lemma3} consecutively.
Choose
\begin{equation}
\delta = \frac{1}{2\kappa_{H}} \big(1+\frac{8}{\alpha}\big)^{-2} \min\Big\{ \frac{1}{3C_{\Sigma}\gamma}, \frac{C_{\Theta}}{2\gamma}, \frac{1}{412\kappa_{H}C_{\Sigma}^4C_{\Theta}^2C_X^2\gamma^2} \Big \},
\notag
\end{equation}
by our choice of $\lambda$, the minimum bound on $n$ and the monotonicity property (\ref{eq:thm6}) , we have
\begin{equation}
2\kappa_{H} \big(1 + \frac{8}{\alpha}\big)^2 \bar{\delta}_{f_\Gamma}  \leq \min\Big\{ \frac{1}{3C_{\Sigma}\gamma}, \frac{C_{\Theta}}{2\gamma}, \frac{1}{412\kappa_{H}C_{\Sigma}^4C_{\Theta}^2C_X^2\gamma^2}  \Big\}.
\notag
\end{equation}
Applying Lemma \ref{lemma4}, we conclude that 
\begin{equation}
\norm{\Delta}_\infty \leq 2\kappa_{H} \big(1 + \frac{8}{\alpha}\big) \bar{\delta}_{f_\Gamma} \leq 2\kappa_{H} \big(1 + \frac{8}{\alpha}\big)^2\bar{\delta}_{f_\Gamma} \leq \frac{1}{\gamma}\min\Big\{ \frac{1}{3C_{\Sigma}}, \frac{C_{\Theta}}{2} \Big \}.
\label{eq:thm8}
\end{equation}
Then Lemma \ref{lemma3} gives
\begin{equation}
\begin{aligned}
\norm{R(\Delta)}_\infty 
&\leq 206C_{\Sigma}^4C_{\Theta}^2C_X^2\gamma^2\norm{\Delta}_\infty^2
\leq 824 C_{\Sigma}^4C_{\Theta}^2C_X^2\gamma^2\kappa_{H}^2 (1 + \frac{8}{\alpha})^2 \bar{\delta}_{f_\Gamma}^2\\
& = \Big(824 C_{\Sigma}^4C_{\Theta}^2C_X^2\gamma^2\kappa_{H}^2 (1 + \frac{8}{\alpha})^2 \bar{\delta}_{f_\Gamma}\Big)\frac{\alpha\lambda}{8}
 \leq \frac{\alpha\lambda}{8}
\end{aligned}
\notag
\end{equation}
where the final inequality follows from 
the lower bound on sample size $n$, and the monotonicity property (\ref{eq:thm6}).

To summarize, we have shown that condition 3 in Lemma \ref{lemma2} holds. 
Furthermore, a finite $C_X$ implies condition 1, and condition 2 is assumed 
by the Assumption \ref{asmp:2}. These allow us to conclude that 
$\tilde{\Gamma}=\hat{\Gamma}$. By (\ref{eq:thm7}) and (\ref{eq:thm8}), 
the estimator $\hat{\Gamma}$ satisfies the $\ell_\infty$ bound claimed in 
Theorem \ref{thm:1}(a). Moreover, by the bound (\ref{eq:l40}) and the definition 
of $u$ in Lemma \ref{lemma4}, the estimate $\tilde{\Gamma}_{ij}$ cannot differ 
enough from $\Gamma_{0,ij}$ to change sign when condition (\ref{eq:5.2.3}) 
is satisfied. This proves Theorem \ref{thm:1}(b).

\noindent
{\bf Proof of Corollary \ref{col:1}} 
Let $\psi = 2\kappa_{H} (1 + 8/\alpha) C_{\sigma}C_X^\star\sqrt{3200} 
\sqrt{\big(\tau\log (pd) + \log 4\big)/n}$. 
From Theorem \ref{thm:1}, we have  
$\max\Big\{ \norm{\hat{\Lambda}-\Lambda_0}_\infty, \norm{\hat{\Theta}-\Theta_0}_\infty \Big\} \le \psi$
with probability at least $1-\big(p^{-(\tau-2)}+ (pd)^{-(\tau-1)}\big)$. 
Since $\Lambda_0$ has at most $p+s_\Lambda$ non-zeros including diagonal elements and $\Theta_0$ has at most $s_\Theta$ non-zeros elements, we have
\begin{eqnarray*}
\vertiii{\hat{\Lambda}-\Lambda_0}_F &=& 
\Big( \sum_{i=1}^{p} (\hat{\Lambda}_{ii} - \Lambda_{0,ii})^2 + 
\sum_{(i,j)\in E}(\hat{\Lambda}_{ij} - \Lambda_{0,ij})^2 \Big)^{1/2} 
\leq \psi\sqrt{p+s_\Lambda}, \label{eq:c12} \\
\vertiii{\hat{\Theta}-\Theta_0}_F &=&  
\Big(\sum_{(i,j)\in E}(\hat{\Theta}_{ij} - 
\Theta_{0,ij})^{2}\Big)^{1/2} \leq \psi\sqrt{s_\Theta}. \label{eq:c13}
\end{eqnarray*}
Combining above two inequalities leads to the bound in (\ref{eq:col1}).

\begin{lemma}
\label{lemma5}
Suppose that the Assumption \ref{asmp3} holds, then for positive definite 
matrices $\hat{\Lambda}$ and $\Lambda_0$,
\begin{eqnarray*}
P\left(\lambda_\mathrm{min}(\hat{\Lambda}) \geq 0.5 C_L\right) 
&\geq& P\left(\vertiii{\hat{\Lambda}-\Lambda_0}_F \leq 0.5 C_L \right),\\
P\left(\vertiii{\hat{\Lambda}^{-1}-\Lambda_0^{-1}}_F \leq 
\frac{2\sqrt{p}}{C_L^2} \vertiii{\hat{\Lambda}-\Lambda_0}_F \right) 
&\geq& P\left(\vertiii{\hat{\Lambda}-\Lambda_0}_F \leq 0.5 C_L \right).
\end{eqnarray*}
\label{lemma:Lambdainv}
\end{lemma}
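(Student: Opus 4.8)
The plan is to observe that both probability inequalities follow from a single deterministic set inclusion, so no probabilistic argument is needed beyond monotonicity of $P(\cdot)$. Let $\mathcal{E} = \{\vertiii{\hat{\Lambda}-\Lambda_0}_F \leq 0.5 C_L\}$ be the conditioning event. I would show that $\mathcal{E}$ is contained in the event $\{\lambda_\mathrm{min}(\hat{\Lambda}) \geq 0.5 C_L\}$ and also in the event $\{\vertiii{\hat{\Lambda}^{-1}-\Lambda_0^{-1}}_F \leq (2\sqrt{p}/C_L^2)\vertiii{\hat{\Lambda}-\Lambda_0}_F\}$; both bounds then read off immediately.

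For the eigenvalue inclusion, I would use $\vertiii{\cdot}_2 \leq \vertiii{\cdot}_F$, so that on $\mathcal{E}$ we have $\vertiii{\hat{\Lambda}-\Lambda_0}_2 \leq 0.5 C_L$, and then apply Weyl's eigenvalue perturbation inequality together with Assumption \ref{asmp3}:
\begin{equation}
\lambda_\mathrm{min}(\hat{\Lambda}) \geq \lambda_\mathrm{min}(\Lambda_0) - \vertiii{\hat{\Lambda}-\Lambda_0}_2 \geq C_L - 0.5 C_L = 0.5 C_L. \notag
\end{equation}
This also shows $\hat{\Lambda}\succ 0$ on $\mathcal{E}$ with $\vertiii{\hat{\Lambda}^{-1}}_2 \leq 2/C_L$, while Assumption \ref{asmp3} gives $\vertiii{\Lambda_0^{-1}}_2 = 1/\lambda_\mathrm{min}(\Lambda_0) \leq 1/C_L$. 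For the inverse, I would start from the resolvent identity $\hat{\Lambda}^{-1}-\Lambda_0^{-1} = -\hat{\Lambda}^{-1}(\hat{\Lambda}-\Lambda_0)\Lambda_0^{-1}$, pass from the Frobenius to the spectral norm via $\vertiii{A}_F \leq \sqrt{p}\,\vertiii{A}_2$ for $p\times p$ matrices, apply submultiplicativity of $\vertiii{\cdot}_2$, and use $\vertiii{\hat{\Lambda}-\Lambda_0}_2 \leq \vertiii{\hat{\Lambda}-\Lambda_0}_F$ to obtain, on $\mathcal{E}$,
\begin{equation}
\vertiii{\hat{\Lambda}^{-1}-\Lambda_0^{-1}}_F \leq \sqrt{p}\,\vertiii{\hat{\Lambda}^{-1}}_2\,\vertiii{\hat{\Lambda}-\Lambda_0}_2\,\vertiii{\Lambda_0^{-1}}_2 \leq \frac{2\sqrt{p}}{C_L^2}\vertiii{\hat{\Lambda}-\Lambda_0}_F. \notag
\end{equation}

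I do not expect any real obstacle: the statement is deterministic once one notes that the Frobenius bound controls the spectral norm and hence the spectrum of $\hat{\Lambda}$. The only point requiring care is selecting the norm inequalities so as to reproduce exactly the constant $2\sqrt{p}/C_L^2$ in the statement — it is the step $\vertiii{A}_F \leq \sqrt{p}\,\vertiii{A}_2$ applied to the triple product that introduces the $\sqrt{p}$ factor, rather than keeping a sharper mixed operator/Frobenius bound, which is not what the lemma asserts.
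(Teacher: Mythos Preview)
Your proposal is correct and follows essentially the same approach as the paper's proof: both argue the deterministic set inclusions on the event $\mathcal{E}$, use $\vertiii{\cdot}_2\le\vertiii{\cdot}_F$ together with the eigenvalue perturbation bound (the paper writes it via the quadratic form $\mathbf{v}^T\hat{\Lambda}\mathbf{v}$, which is the same as your Weyl step), and then use the resolvent identity with $\vertiii{A}_F\le\sqrt{p}\,\vertiii{A}_2$ and submultiplicativity to get the $2\sqrt{p}/C_L^2$ constant.
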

\noindent
{\bf Proof of Lemma \ref{lemma5}}
This proof is similar to that for Lemma A.1 in \citeasnoun{fan2011high}. Under the event $\vertiii{\hat{\Lambda}-\Lambda_0}_F \leq 0.5 C_L$, for any vector $\mathbf{v}\in \mathbb{R}^{p}$ with Euclidean norm $\norm{\mathbf{v}}=1$, we have
\begin{equation}
\mathbf{v}^T \hat{\Lambda}\mathbf{v} = \mathbf{v}^T \Lambda_0\mathbf{v} - \mathbf{v}^T (\Lambda_0 - \hat{\Lambda})\mathbf{v} \geq \lambda_\mathrm{min}(\Lambda_0) - \vertiii{\hat{\Lambda}-\Lambda_0}_F \geq 0.5 C_L.\notag
\end{equation}
The inequality holds by the fact that $\vertiii{A}_2 \leq \vertiii{A}_F$ for any $A$. Therefore, $\lambda_\mathrm{min}(\hat{\Lambda}) \geq 0.5 C_L$.\\
Meanwhile, 
\begin{equation}
\begin{aligned}
\vertiii{\hat{\Lambda}^{-1}-\Lambda_0^{-1}}_F &\leq \sqrt{p}\vertiii{\hat{\Lambda}^{-1}(\Lambda_0-\hat{\Lambda} )\Lambda_0^{-1}}_2 \\
&\leq \sqrt{p}\lambda^{-1}_\mathrm{min}(\hat{\Lambda}) \vertiii{\hat{\Lambda}-\Lambda_0}_2  \lambda^{-1}_\mathrm{min}(\Lambda_0) \\
&\leq  \frac{2\sqrt{p}}{C_L^2} \vertiii{\hat{\Lambda}-\Lambda_0}_F.
\end{aligned}
\notag
\end{equation}
The first inequality holds because of submultiplicativity of the $\ell_2$ norm, and $\vertiii{A}_F \leq \sqrt{p}\vertiii{A}_2$ for any matrix $A$.\\

\noindent
{\bf Proof of Theorem \ref{thm:2}}  
Recall that $\text{SKL} \big( f_0(\by|\bx),\hat{f}(\by|\bx) \big)$ 
has an explicit form:
\begin{eqnarray}
&&\text{SKL} \big( f_0(\by|\bx),\hat{f}(\by|\bx) \big) \notag \\
&=& \frac{1}{2}\int_\mathcal{X} \bm{x}^TU^T\hat{\Lambda}U\bm{x}f_0(\bx)d\bx 
+ \frac{1}{2}\int_\mathcal{X} \bm{x}^TU^T\Lambda_0U\bm{x}\hat{f}(\bx)d\bx 
+ \frac{1}{2}\tr\big( \hat{\Lambda}^{-1}\Lambda_0\big) + \frac{1}{2}\tr\big(
\Lambda_0^{-1}\hat{\Lambda} \big) - p\notag \\
&\triangleq& I_1 + I_2 + I_3 + I_4 - p,
\label{eq:thm2.1.1}
\end{eqnarray}
where $U = \hat{\Lambda}^{-1}\hat{\Theta}^T  - 
\Lambda_0^{-1}\Theta_0^T$.
We now derive the upper bound for each of the four terms in 
(\ref{eq:thm2.1.1}) conditioning on the event 
$\vertiii{\hat{\Lambda}-\Lambda_0}_F \leq 0.5 C_L$ and $\vertiii{\hat{\Lambda}^{-1}-\Lambda_0^{-1}}_F \leq 
(2\sqrt{p}/C_L^2) \vertiii{\hat{\Lambda}-\Lambda_0}_F$.

We first derive an bound for $I_1$ using the fact that 
$I_1 \leq 2^{-1} \int_\mathcal{X} \vertiii{\hat{\Lambda}}_2\norm{U\bm{x}}^2f_0(\bx)d\bx$.
Note that 
\begin{eqnarray}
\vertiii{\hat{\Lambda}}_2 = \vertiii{\hat{\Lambda}-\Lambda_0 + \Lambda_0}_2 
\leq \vertiii{\hat{\Lambda}-\Lambda_0}_2 + \vertiii{\Lambda_0}_2 
\leq \vertiii{\hat{\Lambda}-\Lambda_0}_F + C_U.
\label{eq:thm2.1.2}
\end{eqnarray}
Furthermore, since the Frobenius norm for a vector equals its Euclidean norm, we have
\begin{equation}
\begin{aligned}
\norm{\mathbf{U}\bm{x}} &= \norm{(\hat{\Lambda}^{-1}\hat{\Theta}\trans - 
\Lambda_0^{-1}\Theta_0\trans)\bx}_2 \\
&\leq \vertiii{\hat{\Lambda}^{-1}\hat{\Theta}\trans - 
\Lambda_0^{-1}\Theta_0\trans}_2 \norm{\mathbf{x}}_2 \\
&= \vertiii{(\hat{\Lambda}^{-1}-\Lambda_0^{-1})(\hat{\Theta}\trans - \Theta_0\trans + \Theta_0\trans) + \Lambda_0^{-1}(\hat{\Theta}\trans-\Theta_0\trans)}_2\norm{\mathbf{x}}_2\\
&\leq \Big\{\vertiii{\hat{\Lambda}^{-1}-\Lambda_0^{-1}}_F (\vertiii{\hat{\Theta} - \Theta_0}_F + \vertiii{\Theta_0}_F) + \vertiii{\Lambda_0^{-1}}_2\vertiii{\hat{\Theta} - \Theta_0}_F \Big\} \norm{\mathbf{x}}_2 \\
&\leq \Big\{\vertiii{\hat{\Lambda}^{-1}-\Lambda_0^{-1}}_F (\vertiii{\hat{\Theta} - \Theta_0}_F + \vertiii{\Theta_0}_F) + \frac{1}{C_L}\vertiii{\hat{\Theta} - \Theta_0}_F\Big\}\norm{\mathbf{x}}_2 \\
&\leq \Big\{\frac{2\sqrt{p}}{C_L^2} \vertiii{\hat{\Lambda}-\Lambda_0}_F  (\vertiii{\hat{\Theta} - \Theta_0}_F + \vertiii{\Theta_0}_F) + \frac{1}{C_L}\vertiii{\hat{\Theta} - \Theta_0}_F\Big\}\norm{\mathbf{x}}_2
\end{aligned}
\label{eq:thm2.1.3}
\end{equation}
The last inequality holds from Lemma \ref{lemma:Lambdainv}. Combined, we have the upper bound for $I_1$
\begin{eqnarray*}
\begin{array}{l}
 \frac{1}{2}\int_{\mathcal{X}} \Big\{ \vertiii{\hat{\Lambda}-\Lambda_0}_F + C_U\Big\}\Big\{ \frac{2\sqrt{p}}{C_L^2} \vertiii{\hat{\Lambda}-\Lambda_0}_F  (\vertiii{\hat{\Theta} - \Theta_0}_F + \vertiii{\Theta_0}_F) + \frac{1}{C_L}\vertiii{\hat{\Theta} - \Theta_0}_F \Big \}^2 \norm{\mathbf{x}}_2^2 f_0(\bx)d\bx \\
\leq (4p) G C_m \vertiii{\hat{\Lambda}-\Lambda_0}_F^3 \vertiii{\hat{\Theta} - \Theta_0}_F^2,
\end{array}
\end{eqnarray*}
where $G = \max \{\int_\mathcal{X} \norm{\mathbf{x}}_2^2 f_0(\bx)d\bx, 
\int_\mathcal{X} \norm{\mathbf{x}}_2^2 \hat{f}(\bx)d\bx \}
\max\{C_U,1\}\max\{D_{T}^2,1\}/\min\{C_L^4,1\}$ and $C_m = \max\{ \int_\mathcal{X} \bm{x}^T\bm{x}f_0(\bx)d\bx,  \int_\mathcal{X} \bm{x}^T\bm{x}\hat{f}(\bx)d\bx \}$.
As the only difference between $I_1$ and $I_2$ lies in whether the 
expectation is calculated with respect to the true or estimated density, 
this bound also applies to $I_2$.

For $I_3$, note that 
\begin{equation*}
\tr(\hat{\Lambda}^{-1}\Lambda_0) = \tr\Big((\hat{\Lambda}^{-1} - \Lambda_0^{-1} + \Lambda_0^{-1})\Lambda_0\Big)
\leq \vertiii{\hat{\Lambda}^{-1}-\Lambda_0^{-1}}_F^2\vertiii{\Lambda_0}_F^2 + p 
\leq \frac{4pD_L^2}{C_L^4} \vertiii{\hat{\Lambda}-\Lambda_0}_F^2 + p,
\end{equation*}
where the first inequality uses the fact that $\tr(A^T  B)$ is an 
appropriate inner product for symmetric matrices $A$ and $B$, and by 
the Cauchy-Schwarz inequality, 
$\tr(A^T  B) \leq \tr(A^T  A)\tr(B^T  B) = \vertiii{A}_F^2\vertiii{B}_F^2$;
and the second inequality holds by Lemma \ref{lemma:Lambdainv} with probability 1. 
Then
\begin{equation}
I_3 \leq \frac{2pD_L^2}{C_L^4} \vertiii{\hat{\Lambda}-\Lambda_0}_F^2. \notag
\end{equation}
For $I_4$, following similar arguments as above, 
\begin{equation*}
\tr(\Lambda_0^{-1}\hat{\Lambda}) = 
\tr\Big((\hat{\Lambda}-\Lambda_0+\Lambda_0)\Lambda_0^{-1}\Big)
\leq \vertiii{\hat{\Lambda}-\Lambda_0}_F^2\vertiii{\Lambda_0^{-1}}_F^2 + p
\leq \frac{p}{C_L^2}\vertiii{\hat{\Lambda}-\Lambda_0}_F^2 + p.
\end{equation*}
Then by Corollary \ref{col:1} and Lemma \ref{lemma:Lambdainv}, we have $I_1$ and $I_2$
on the order of
$\mathcal{O}\Big(n^{-5/2}p^{5/2}(\log pd)^{5/2}\Big)$, and $I_3$ and $I_4$ on the order of 
$\mathcal{O}\Big(n^{-1}p^2(\log pd)\Big)$.
This proves the claim.

\noindent
{\bf Proof of Theorem \ref{thm:3}} 
The bound of $D\big(f_0(\bz),\hat{f}(\bz)\big)$ in (\ref{thm:3.2}) comes straightforwardly by combing (\ref{eq:thm2.1}) and (\ref{eq:thm3.2}).
However, as the parametric part (\ref{eq:thm2.1}) is conditioning on the event $\vertiii{\hat{\Lambda}-\Lambda_0}_F \leq 0.5 C_L$, a new lower bound for the sample size $n$ needs to be derived such that this condition is always satisfied.\\
By the RHS of upper bound (\ref{eq:col1}) in Corollary \ref{col:1}, we have 
\begin{equation}
n \geq \frac{1600}{C_L^2} \kappa_{H}^2 C_{\sigma}^2 32C_{X}^2 (p+s_\Lambda) \big(1+\frac{8}{\alpha}\big)^4 \big(\tau\log(pd) + \log 4\big).
\label{thm:3.3}
\end{equation}
Combining (\ref{thm:3.3}) with (\ref{eq:thm1.1}) yields (\ref{thm:3.1}) after some simple algebra.
\end{appendix}

\end{document}